\documentclass[12pt]{article}
\usepackage[margin=1in]{geometry}
\usepackage{setspace}
\usepackage{amsmath, amssymb, amsthm}
\usepackage{bm}
\usepackage{mathtools}
\usepackage{enumitem}
\usepackage{booktabs}
\usepackage{graphicx}
\usepackage[round]{natbib}
\usepackage[hidelinks]{hyperref}
\usepackage{cleveref}

\usepackage{xcolor}

\theoremstyle{plain}
\newtheorem{theorem}{Theorem}
\newtheorem{proposition}{Proposition}
\newtheorem{corollary}{Corollary}
\newtheorem{lemma}{Lemma}
\theoremstyle{definition}
\newtheorem{assumption}{Assumption}

\theoremstyle{remark}
\newtheorem{remark}{Remark}

\newcommand{\E}{\mathbb{E}}
\newcommand{\Cov}{\operatorname{Cov}}
\newcommand{\Var}{\operatorname{Var}}
\newcommand{\Prob}{\mathbb{P}}
\newcommand{\R}{\mathbb{R}}

\newcommand{\Ptilt}[1]{\mathbb{E}^{#1}}
\newcommand{\Hz}{\mathcal{H}_Z}
\newcommand{\Pn}{\mathbb{P}_n}
\newcommand{\corgauss}{Corollary~\ref{cor:gaussian}}
\newcommand{\thmpoint}{Theorem~\ref{thm:point}}
\newcommand{\thmset}{Theorem~\ref{thm:set}}
\newcommand{\thmeif}{Theorem~\ref{thm:eif}}
\newcommand{\lemdcdr}{Lemma~\ref{lem:dcdr}}
\newcommand{\lemgamma}{Lemma~\ref{lem:gammarobust}}

\newcommand{\T}{\mathcal{T}}
\crefname{assumption}{Assumption}{Assumptions}
\Crefname{assumption}{Assumption}{Assumptions}
\providecommand{\backmatter}{}

\begin{document}

\begin{center}
{\Large\bfseries Bridge-Anchored Partial Identification of a Target Mean
under Outcome-Model Drift}\\[1.2em]
{\large Danhyang Lee, Shinyoung Jeon, and Shu Yang}\\[0.6em]
{\normalsize Correspondence: \texttt{danhyang@smu.edu}}
\end{center}

\medskip
\begin{center}\bfseries Summary\end{center}
\begin{quote}
Transporting an outcome relationship from a source population to a target
population where the outcome is unobserved requires the conditional outcome law to
be stable across populations. When it is not, \emph{outcome-model drift}, the
target mean is not point-identified, and existing transportability estimators are
biased, with no account of what remains unknown. We study settings, common
in cross-cohort educational and biomedical data integration, in which a set of
\emph{bridge outcomes} is recorded in both populations and can be expected to
drift alongside the target outcome. Our central object is an \emph{identified set}.
The bridge splits the drift into two parts: a component the bridge can detect,
fixed by the data through a bridge-matching equation, and a residual component, drift in the outcome orthogonal to the bridge, which no
observed quantity restricts.
We isolate the residual as a single scalar sensitivity parameter $\kappa$, which
suffices because the target is a mean, so that functional residual drift moves it
only through a one-dimensional projection. We prove that the identified set has a
closed-form \emph{irreducible core}
whose width is governed by the residual bound and by how much of the outcome the
bridge leaves unexplained, and which no sample size narrows,
and show that the center of the set is first-order invariant to the
residual by construction. The identification, efficiency, and core formulas hold
for any working exponential family, with Gaussian and binary outcomes as
instances. Point identification arises only as the $\kappa=0$ benchmark. We develop
a debiased, drift-augmented estimator, establish its semiparametric efficiency at
anchored sensitivity through an efficient influence function carrying a
bridge-moment correction, characterize its double robustness conditional on the
bridge-identified drift, and give
rate-robust inference for the set via an Imbens--Manski interval.
Against a bridge-blind sensitivity analysis the gain is one of \emph{elicitation}
rather than of arithmetic: expressed on the outcome scale, the bridge-blind
analyst must bound the entire drift channel, while we bound only its
bridge-orthogonal part, a strictly smaller quantity.
Simulations confirm that the
drift estimator is unbiased under co-drift and the benchmark's bias is small
relative to the identification width, that the core width follows its closed form
and falls as the bridge explains more of the outcome, and that the set fails
visibly once the residual bound is exceeded; a separate exact
binary-outcome study measures the quantities the Gaussian working model leaves
untestable. The motivating setting is the integration of two survey cohorts
whose kindergarten mathematics outcome is available in only one.
\end{quote}

\medskip
\noindent{\itshape\bfseries Key words:} Data fusion; Doubly robust estimation;
Exponential tilting; Fractional imputation; Partial identification; Proximal
inference; Semiparametric efficiency; Sensitivity analysis; Transportability.

\bigskip\hrule\bigskip

\section{Introduction}
\label{sec:intro}

A recurring task in the integration of large-scale surveys is to characterize an
outcome in a population where that outcome was never measured, by borrowing its
relationship to covariates from an earlier population where it was. In our
motivating application, two nationally representative cohorts of a federal early
childhood program are observed a few years apart; the earlier cohort is followed
into kindergarten and its mathematics achievement recorded, while the later
cohort, the one whose kindergarten outcomes policymakers actually care about, was
by design not followed past preschool. The standard approach is to fit
the preschool-to-kindergarten relationship in the earlier (source) cohort and
apply it to the later (target) cohort, which is valid only if that relationship
is stable across the two cohorts. Transportability and data-fusion
methods \citep{dahabreh2019generalizing,pearl2014external,bareinboim2016fusion}
formalize exactly this requirement: they permit the covariate distribution to
differ across populations (\emph{covariate shift}) but require the conditional
outcome law $P(Y\mid X)$ to transport unchanged. In cross-cohort integration the
second requirement is the one likely to fail.
Programs are reformed, populations change, and instruction shifts between data
collections, and the two cohorts that motivate this paper differ on child,
family and classroom characteristics.
When $P(Y\mid X)$ itself changes, \emph{outcome-model drift}, and the target
outcome is unobserved, the target mean is \emph{not point-identified}. Applying a
transportability estimator anyway returns a single number whose bias is neither
bounded nor even signed by the observed data.

The feature we exploit is that integration problems of this kind almost always
carry \emph{bridge outcomes}: intermediate measurements, in our case
preschool-exit mathematics, early literacy and executive-function scores, recorded
in \emph{both} cohorts. Because a bridge is observed in the target, any drift in the
bridge relationship is directly visible. One approach would treat the bridge as a
surrogate and read the outcome drift off the bridge drift. Bridge-internal agreement cannot certify that the outcome
drifts as the bridge does, because the outcome is unobserved in the target. We do
not claim otherwise; the extrapolation from bridge drift to outcome drift is, and
remains, untestable. We therefore construct the framework around that untestability. We link source and target by a
scalar-indexed exponential tilt and decompose the tilt into two orthogonal
channels. The first is the drift the bridge can see; because the bridge is
observed in both cohorts, this channel is fixed by matching the target bridge mean
to a tilted source bridge mean, an equation involving only observed quantities.
The second is the drift in the outcome that is \emph{orthogonal to the bridge}: %
defined, not assumed, as the projection residual of the outcome-drift direction
onto functions of the bridge. This residual is the untestable component, and we do not identify it; we isolate it as a single
scalar sensitivity parameter and report an identified set as the parameter ranges
over a bound.

This design yields five contributions. First, our estimand is the identified set
for the target mean $\mu_1=\E[Y\mid S=1]$ under a bounded residual drift
$|\kappa|\le\bar\kappa$, and point identification is not our goal but appears only
as the $\kappa=0$ benchmark (\Cref{thm:point}), corresponding to the
assumption, which we never assert as testable, that the outcome drifts entirely
within the subspace the bridge can detect. Second, we prove that the
bridge-detectable drift is fixed by a bridge-matching equation in observed
quantities (\Cref{thm:point}) and that, because the residual is defined by
projection, the center of the identified set is
\emph{ stationary at the benchmark}
(\Cref{thm:set}): the sensitivity parameter opens the
set around a centre that does not move at first order, a property of the
projection rather than of the data (\Cref{rem:whyproject}), with a
closed-form \emph{irreducible core}
whose half-width depends on the residual bound and on how much of the outcome the
bridge and the covariates leave unexplained; a richer bridge sharpens the center
and, through the residual variance, the core as well; what neither a richer bridge
nor a larger sample can do is remove the core.
That a single scalar suffices is
not an assumption but a
consequence of the estimand: because the target is a mean, an
infinite-dimensional residual drift moves it only through a one-dimensional
projection, so the scalar $\kappa$ captures the entire first-order sensitivity of
$\mu_1$ (\Cref{prop:scalar}). Third, we give a debiased, drift-augmented estimator
(\Cref{sec:est}) and derive its efficient influence function at anchored
sensitivity (\Cref{thm:eif}), which carries, beyond the usual imputation and
propensity corrections, a \emph{bridge-moment correction} accounting for the
estimation of the detectable drift, and we establish Neyman-orthogonal,
rate-robust asymptotics (\Cref{thm:orth}) that tolerate flexible nuisance
estimation; inference for the set is by an Imbens--Manski interval
\citep{imbens2004confidence}, and because the sensitivity parameter is anchored
rather than estimated, the interval carries no sensitivity-parameter variance
term. Fourth, we do not claim triple robustness but show instead
(\Cref{lem:dcdr,lem:gammarobust}) that the drift is identified from the bridge
margin alone, and is therefore insulated from misspecification of the outcome
model, and that, given the bridge-identified drift, the estimator is doubly
robust in the cohort-propensity and source-outcome models, so the three nuisances
are decoupled, not entangled. Fifth, because a practitioner could ignore the
bridge and sweep the entire outcome drift as sensitivity, discarding the
information that the bridge is observed, we compare the two analyses.
The comparison must be made on the outcome scale, since the two sensitivity
parameters index displacements along different directions and are not
interchangeable as numbers (\Cref{sec:blind}); made that way, the point is that
the bridge-blind analyst must bound the entire drift channel while we bound only
its bridge-orthogonal component. The bridge therefore reduces the size of the
judgement required, and the narrower set follows from that reduction.

These contributions position the method against four existing frameworks.
Standard transportability
\citep{dahabreh2019generalizing,dahabreh2023meta,westreich2017transportability}
assumes the residual away and returns a point; we retain it as the object of a
sensitivity analysis. Closest to us in spirit is the exponential-tilt sensitivity
analysis of \citet{steingrimsson2024sensitivity}, who study exactly our data
configuration, source covariates and outcome, target covariates only, and, like
us, index violations of conditional transportability by an exponential tilt. The
two analyses differ in the following respect. Their analysis is
\emph{global}, sweeping the entire tilt as an unidentified sensitivity parameter
because no target-side quantity constrains it; ours is \emph{anchored}, because the
bridge is observed in both populations and therefore \emph{identifies} the
detectable part of the tilt through a moment equation, leaving only the
bridge-orthogonal residual as sensitivity. Where they must report how large a
violation would overturn a conclusion, we report a set whose center is fixed by the
data and whose width is the closed-form irreducible core; and where their tilt is
untestable in full, ours is partially refutable through bridge-internal
over-identification, subject to a qualification we make explicit: the check has power only
against disagreement \emph{among} bridges, none against drift common to all of
them, and \Cref{sec:sim-bridge} reports how much of the former it detects in the
designs studied there. When no bridge is available our framework reduces to a global
tilt sensitivity analysis of their type, which we recover as the bridge-blind
comparator of \Cref{sec:blind}. Surrogate-index and surrogate-efficiency
methods \citep{athey2019surrogate,prentice1989surrogate}
exploit outcomes observed in both samples for efficiency, but under a surrogacy or
mediation condition, namely that the auxiliary outcome fully carries the treatment
or population signal, which is strictly stronger, and equally untestable, than
$\kappa=0$; we require neither mediation nor completeness of the bridge for the
outcome, only that the bridge carry the detectable channel, and we quantify what
remains when it does not.
A separate line dispenses with the surrogacy condition altogether.
\citet{kallus2025surrogates} derive the efficiency gain available from abundant
surrogate observations under no assumption beyond unconfoundedness and overlap,
and develop estimators that realize it. The distinction from our setting is the
availability of the outcome rather than the strength of the assumption: there the
target outcome is observed, if only on few units, so the surrogates improve the
precision of a quantity that is already identified. Here the outcome is never
observed in the target population, so no amount of bridge information restores
identification and the question is what remains unknown rather than how precisely
the known can be estimated. Shadow-variable and proximal missing-data methods
\citep{miao2018identifying,tchetgen2024proximal,shao2016semiparametric,uehara2023semiparametric}
are our closest kin in machinery: our bridge condition is structurally proximal,
with the cohort indicator acting as the response indicator and the bridge as a
proxy for a latent drift, but we differ in three ways that the setting forces and
rewards. The bridge is observed in both populations, so the detectable drift is
fixed by a constructive moment equation rather than the solution of an ill-posed
integral equation; the residual is a single interpretable scalar with a closed-form
core rather than a completeness-driven point; and the drift is estimated, and
tested for internal consistency, from a quantity the target actually reports.
Throughout, we state the boundary of testability explicitly. When several bridges
are available, their mutual agreement on the detectable drift is a genuine,
data-driven check, but it tests consistency \emph{among bridges}, not the
extrapolation \emph{from bridges to the outcome}, which is carried entirely by the
sensitivity parameter. A reader who distrusts the extrapolation is asked instead to choose a bound on how far the outcome may drift beyond
what the bridges reveal, and to read the resulting set.

\section{Setup and Identification}
\label{sec:ident}

The pooled population is a mixture of two cohorts indexed by $S\in\{0,1\}$, with
$\pi_s=\Prob(S=s)$; $S=0$ is the \emph{source} and $S=1$ the \emph{target}. For
each unit we always observe $(S,X,Z)$, where $X\in\R^p$ are shared predictors and
$Z\in\R^q$ are \emph{bridge outcomes} present in both cohorts. The target outcome
$Y\in\R$ is observed if and only if $S=0$. Let $P_s(\cdot)=P(\cdot\mid S=s)$ and
write conditional densities $p_s(\cdot\mid x)$. Design weights are suppressed
until \Cref{sec:est}.

The estimand is the target-population mean
\begin{equation}
  \mu_1 \;=\; \E[\,Y\mid S=1\,].
\end{equation}
The observed law identifies exactly three objects: the full source law
$p_0(y,z\mid x)$; the target bridge law $p_1(z\mid x)$; and the covariate laws
$P_0(X),P_1(X)$. It does \emph{not} identify $p_1(y\mid x)$ or $p_1(y,z\mid x)$,
since $Y$ is missing whenever $S=1$. All identifying leverage must therefore link
$p_1(y\mid x)$ to the observed objects.

\subsection{The co-drift tilt}
\label{sec:tilt}

We link source and target through a scalar-indexed exponential tilt. Fix known,
standardized loadings $t(\cdot)$ on $Y$ and $b(\cdot)$ on $Z$; the default is
$t(y)=(y-m^0_Y)/\sigma^0_Y$ and $b_k(z)=(z_k-m^0_{Z,k})/\sigma^0_{Z,k}$, with
$b(z)=\sum_{k}b_k(z_k)$ inside the tilt exponent and $b=(b_1,\dots,b_q)$ when the
bridge coordinates are used as separate moments.

The constants $(m^0_Y,\sigma^0_Y)$ and $(m^0_{Z,k},\sigma^0_{Z,k})$ are
functionals of the \emph{source} law only; we take them to be the source
conditional means and standard deviations at the primary specification. They are
fixed once, before the sensitivity analysis is run, and held at those values
across every $\kappa$, every bootstrap replicate, every imputed data set and
every sensitivity block an analysis reports. Two consequences follow. First, the loadings fix the units in which
$\gamma$ and $\kappa$ are read, so a reported $\bar\kappa$ is meaningless without
them, and a mapping from a substantive judgement to a bound has to be stated
alongside any reported value. Second, because they never involve the target sample, the
choice affects the \emph{scale} of the sensitivity parameter but not what is
identified.

The standardization is not cosmetic. A single $\gamma$ multiplying $t(y)+b(z)$
is meaningful only if the two terms are commensurate, and the outcome and the
bridges are in general different instruments on different scales; in our
application they are $W$ scores from three separate assessments. The co-drift
assumption is therefore expressed on the standardized scale, while the answer is
reported on the outcome scale, and $\sigma^0_Y$ is the conversion between them.
It reappears, necessarily, in the half-width of \Cref{prop:general}(ii) and in
the efficiency coefficient of \Cref{thm:eif}; a reader tracking dimensions
through the paper should expect exactly one factor of $\sigma^0_Y$ wherever a
statement crosses from the tilt scale to the outcome scale.

For any weight $\omega(y,z)$, write the tilted conditional expectation
\begin{equation}
  \Ptilt{\omega}[h\mid x]
  \;=\;
  \frac{E_{P_0}\!\big[h(Y,Z)\,e^{\omega(Y,Z)}\mid x\big]}
       {E_{P_0}\!\big[e^{\omega(Y,Z)}\mid x\big]}.
\end{equation}

\paragraph{Structural model (co-drift with residual).}
\begin{equation}
  \frac{dP_1(y,z\mid x)}{dP_0(y,z\mid x)}
  \;=\;
  \frac{\exp\!\big\{\,
        \gamma\,[\,t(y)+b(z)\,]
        \;+\;
        \kappa\, s(y,z,x)\,\big\}}
       {C(\gamma,\kappa;x)},
  \qquad
  C(\gamma,\kappa;x)=E_{P_0}\!\big[e^{\gamma[t(Y)+b(Z)]+\kappa s}\mid x\big].
  \label{eq:tilt}
\end{equation}
Here $\gamma\in\R$ is the \emph{shared drift} carried jointly by $Y$ and $Z$;
$\kappa\in\R$ is the \emph{residual drift} in $Y$ not reflected in the bridge.
Model~\eqref{eq:tilt} states that $E_{P_1}[h\mid x]=\Ptilt{\omega}[h\mid x]$ with
$\omega=\gamma[t+b]+\kappa s$. Writing the drift as a selection on the pooled
sample, \eqref{eq:tilt} is equivalent to
$\operatorname{logit}P(S=0\mid X,Y,Z)=h(X)-\gamma[t(Y)+b(Z)]-\kappa s(Y,Z,X)$:
the cohort indicator is a response indicator, and outcome-model drift is
nonignorable selection
\citep{kim2011semiparametric,uehara2023semiparametric}. There is no
exclusion-restricted instrument here; its role is played by the bridge,
which, unlike an instrument, is observed in both cohorts.

What $\gamma$ measures follows from the shared-coefficient structure. It is
the magnitude of the drift along the chosen tilt direction, and nothing more: any mechanism that moves the outcome
and the bridges together enters through it, and the model neither separates such
mechanisms nor needs to, since \Cref{ass:codrift} asks only that the outcome's
drift lie in the subspace the bridges detect. With two populations this is not a
limitation that better estimation can remove. A common trend that happens to
move both channels and a single mechanism that moves both are observationally
identical: no target-side moment distinguishes them, because the only
target-side information about the drift is the bridge shift itself, and both
produce the same shift. Separating them requires a third population, or a
comparison in which one channel is known a priori not to move.
Consequently $\hat\gamma$ should be read as the detectable drift and not as the
effect of any named cause.

The sensitivity analysis that follows sweeps a parameter \emph{within} this
exponential family. A cohort change
that alters the residual scale of $Y\mid X$, the dependence between $Y$ and $Z$
given $X$, or the shape of the conditional law at fixed mean and variance need not
lie in the span of $t+b$ and $s$, and then $|\kappa|\le\bar\kappa$ can hold in the
fitted model while $\mu_1$ lies outside the reported set.
\Cref{sec:sim-offmodel} generates targets of each kind and reports what the set
does.

\paragraph{Defining the residual direction.}
Rather than assume that the residual is undetectable by the bridge, we
\emph{define} it so. Project the natural $Y$-drift direction $t(Y)$ onto the
closed span $\Hz(x)$ of square-integrable functions of $(Z,X)$ \emph{under the
identified law}; the $\gamma^\star$-tilted source law
$dP^{\gamma^\star}\propto e^{\gamma^\star[t+b]}dP_0$, around which the
sensitivity direction is perturbed:
\begin{equation}
  t(Y)
  \;=\;
  \underbrace{\Pi^{\gamma^\star}\big[t(Y)\mid Z,X\big]}_{\text{bridge-detectable}}
  \;+\;
  \underbrace{s}_{\text{bridge-orthogonal}},
  \qquad
  s \;=\; t(Y)-\Pi^{\gamma^\star}\big[t(Y)\mid Z,X\big].
  \label{eq:proj}
\end{equation}
By construction $E^{\gamma^\star}[s\,\varphi(Z)\mid X]=0$ for every
$\varphi\in L_2$. Thus $\kappa=0$ means that the entire source-to-target drift in
$Y$ lies in the subspace the bridge can see, while $\kappa\neq0$ is drift
\emph{orthogonal} to the bridge. This turns the purity condition of the
shadow-variable literature \citep{miao2018identifying}, an assumption there, into
a geometric identity here.

Two features of \eqref{eq:proj} are used repeatedly below. First, although we write $s$ without arguments, it is a function of
$(Y,Z,X)$: the projection removes a function of the bridge and the covariates, so
the residual depends on them even though it is orthogonal to them. In the
Gaussian instance, $s=\{(y-m_Y(x))-\lambda^{\!\top}(z-m_Z(x))\}/\sigma^0_Y$ with
$\lambda=\Sigma_{ZZ}^{-1}\Sigma_{ZY}$, and $s$ is free of $z$ only in the
degenerate case $\lambda=0$ that \Cref{ass:relevance} excludes. Consequently the
$\kappa$-tilt moves the joint law of $(Y,Z)$, not the conditional law of $Y$
alone; that it nonetheless leaves the bridge moment unmoved at first order is the
content of \Cref{thm:set}(ii), and is a conclusion rather than a restatement of
the definition. Second, we apply no further normalization: $s$ is the projection
residual on the $t$ scale, so $\Var^{\gamma^\star}(s\mid x)$ is not one, and the
scale of $\kappa$ is fixed by the loading constant $\sigma^0_Y$ alone. An
alternative convention rescales $s$ to unit tilted variance; the two differ by a
deterministic rescaling of $\kappa$ and trace the same identified set
(\Cref{rem:convention}). We prefer the present one because it keeps the tilt
direction free of an estimated normalizer, which matters for the influence
function of \Cref{thm:eif} and for stability across resamples.
Neither convention makes $\bar\kappa$ interpretable on its own: the displacement
of $\mu_1$ per unit of $\kappa$ is $E_{P_1}[\Cov^{\gamma^\star}(Y,s\mid X)]$, not
$\sigma^0_Y$, so a bound must always be elicited through an explicit mapping from
a substantive judgement.

In the Gaussian instantiation, \emph{or} as $\gamma^\star\!\to\!0$, the tilted and
untilted projections coincide (Web Appendix~A, \Cref{rem:projlaw}), so the reader
may picture the ordinary $P_0$-projection without loss.
Outside that case the two projections differ, and the difference is not
merely formal: \Cref{sec:sim-binary} reports its magnitude in an exactly
computable binary instance.

\subsection{Assumptions}

\begin{assumption}[Co-drift]
\label{ass:codrift}
Model~\eqref{eq:tilt} holds. The substantive content is $\kappa=0$ (pure
co-drift); $\kappa\neq0$ is entertained only as bounded sensitivity,
$|\kappa|\le\bar\kappa$.
\end{assumption}

\begin{assumption}[Bridge relevance]
\label{ass:relevance}
There is $\varepsilon>0$ such that, for $P_1(X)$-almost every $x$ and every
$(\gamma,\kappa)\in\Gamma\times[-\bar\kappa,\bar\kappa]$,
\[
  \Cov^{\,\gamma[t+b]+\kappa s}\!\big(b_k(Z),\,t(Y)+b(Z)\mid x\big)\;\ge\;\varepsilon
  \qquad\text{for each }k=1,\dots,q,
\]
where $\Gamma$ is a compact interval containing $\gamma^\star$ and the roots
$\gamma(\kappa;x)$ of \eqref{eq:bridgematch-kappa}.
\end{assumption}

Three aspects of the statement require comment. First, the bound is uniform in
$x$ rather than holding on a set of positive measure: the argument of \Cref{thm:point}
establishes uniqueness of the root \emph{at each} $x$, and a condition holding
only somewhere does not deliver that. Second, it is required for each bridge coordinate
separately, since each supplies its own matching equation. Third, it is required
over the whole sensitivity range and not only at $\kappa=0$: \Cref{thm:set} solves
\eqref{eq:bridgematch-kappa} at every $\kappa$ in $[-\bar\kappa,\bar\kappa]$, so
the monotonicity that gives a unique root must be available there too. Under the
Gaussian instance the covariance is free of $\kappa$ and the requirement reduces
to the usual one; in general it is a restriction on how large $\bar\kappa$ may be
taken, and an analysis should report the smallest fitted value of
$\Cov^{\gamma^\star}(b_k,t+b\mid x)$ as a diagnostic.
Because the bridge equation is linear in $\gamma$ under the Gaussian working
model, monotonicity there is automatic and the assumption does no work; the
binary study of \Cref{sec:sim-binary}, where the equation is nonlinear, is the
design in which it does.

\begin{assumption}[Overlap and integrability]
\label{ass:overlap}
$0<\Prob(S=1\mid X)<1$ $P$-almost surely, and there is an open set
$\mathcal N\supset\Gamma\times[-\bar\kappa,\bar\kappa]$ with
\[
  E_{P_0}\!\Big[\exp\big\{\gamma[t(Y)+b(Z)]+\kappa\,s(Y,Z,X)\big\}
      \big(1+t(Y)^2+b(Z)^2+s^2\big)\;\Big|\;x\Big]<\infty
\]
for all $(\gamma,\kappa)\in\mathcal N$ and $P_1(X)$-almost every $x$.
\end{assumption}

Relative to a bare integrability requirement this restores $b(Z)$ inside the
exponent, ranges $\kappa$ over the whole bound rather than a neighbourhood of a
``truth'' it does not have, and asks for second moments, which is what the
dominated-convergence step of the tilt-derivative lemma (Web Appendix~A,
\Cref{lem:tiltderiv}) actually uses.

\subsection{Point identification under co-drift}

\begin{theorem}[Point identification under co-drift]
\label{thm:point}
Under \Cref{ass:codrift} with $\kappa=0$, and
\Cref{ass:relevance,ass:overlap}, the shared drift $\gamma$ solves the
bridge-matching equation
\begin{equation}
  E_{P_1}\!\big[b(Z)\mid x\big]
  \;=\;
  \Ptilt{\gamma}\!\big[b(Z)\mid x\big]
  \;=\;
  \frac{E_{P_0}\!\big[b(Z)\,e^{\gamma[t(Y)+b(Z)]}\mid x\big]}
       {E_{P_0}\!\big[e^{\gamma[t(Y)+b(Z)]}\mid x\big]},
  \label{eq:bridgematch}
\end{equation}
and this root $\gamma^\star$ is unique. Consequently the target mean is
point-identified:
\begin{equation}
  \mu_1
  \;=\;
  \E\!\left[\;
    \frac{E_{P_0}\!\big[Y\,e^{\gamma^\star[t(Y)+b(Z)]}\mid X\big]}
         {E_{P_0}\!\big[e^{\gamma^\star[t(Y)+b(Z)]}\mid X\big]}
    \;\middle|\; S=1\right].
  \label{eq:estimand}
\end{equation}
When \eqref{eq:bridgematch} is imposed across covariate strata and across bridge
coordinates $b=(b_1,\dots,b_q)$, $\gamma^\star$ is over-identified. The resulting
restriction is a test of \emph{mutual consistency among the bridges}; it does not,
and cannot, test the extrapolation from the bridges to the unobserved outcome,
which is governed by \Cref{ass:codrift} and carried by the sensitivity parameter
of \Cref{thm:set}.
The two restrictions are different and we test them separately. Agreement
across bridge coordinates fails when one bridge moves differently from the
others; agreement across covariate strata fails when the drift intensity varies
with $X$, so that no single detectable channel summarises the cohort transition.
A statistic that pooled them would reject without saying which had happened.
\Cref{sec:inference} gives both.
\end{theorem}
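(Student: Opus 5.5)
The plan has three stages: verify that $\gamma^\star$ satisfies \eqref{eq:bridgematch}, prove that it is the only root, and then substitute it into the tilt to recover $\mu_1$.

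\emph{Step 1: $\gamma^\star$ satisfies the equation.} Under \Cref{ass:codrift} with $\kappa=0$, model \eqref{eq:tilt} states that $E_{P_1}[h\mid x]=\Ptilt{\gamma[t+b]}[h\mid x]$ for every integrable $h$. Taking $h=b(Z)$ shows that the true drift satisfies \eqref{eq:bridgematch}. The left-hand side $E_{P_1}[b(Z)\mid x]$ is a functional of the target bridge law $p_1(z\mid x)$. The right-hand side uses only the source law $p_0(y,z\mid x)$. Both are therefore identified, and \eqref{eq:bridgematch} is an equation in observed quantities. \Cref{ass:overlap} guarantees that the numerator and denominator are finite on $\mathcal N\supset\Gamma$.

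\emph{Step 2: uniqueness (the main step).} Define $g(\gamma;x)=\Ptilt{\gamma}[b(Z)\mid x]$. I would invoke the tilt-derivative lemma (Web Appendix~A, \Cref{lem:tiltderiv}). Its dominated-convergence step uses the second-moment envelope in \Cref{ass:overlap}, which allows differentiation under the integral on the open set $\mathcal N$. The lemma gives the exponential-family identity
\[
  \partial_\gamma g(\gamma;x)=\Cov^{\gamma[t+b]}\big(b(Z),\,t(Y)+b(Z)\mid x\big).
\]
Since $b=\sum_k b_k$, this covariance is the sum over $k$ of the coordinate covariances. By \Cref{ass:relevance} at $\kappa=0$, each coordinate covariance is at least $\varepsilon$, so the derivative is at least $q\varepsilon>0$ uniformly on $\Gamma$.

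It follows that $g(\cdot;x)$ is strictly increasing on $\Gamma$. Hence $g(\gamma;x)=E_{P_1}[b(Z)\mid x]$ has at most one root in $\Gamma$ at each $x$. By Step 1 that root exists and equals the true drift, so the root is $\gamma^\star$. The uniformity in $x$ of \Cref{ass:relevance} is used here: it gives uniqueness at each $x$, not merely on a set of positive measure.

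When the equation is imposed coordinatewise, $E_{P_1}[b_k\mid x]=\Ptilt{\gamma}[b_k\mid x]$, the same argument applies to each $k$ separately. Each coordinate equation then pins down the same $\gamma^\star$. This is the source of the over-identification: every coordinate and every stratum $x$ delivers a root, and the model forces these roots to coincide. That is a restriction on the observed laws alone. It involves $Y$ only through the source law, so it cannot speak to whether $\kappa=0$.

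\emph{Step 3: identification of $\mu_1$.} Apply \eqref{eq:tilt} again, now with $h(y,z)=y$ and $\gamma=\gamma^\star$. This gives $E_{P_1}[Y\mid x]=\Ptilt{\gamma^\star}[Y\mid x]$, which is finite by \Cref{ass:overlap} because $|y|\lesssim 1+|t(y)|$. Then integrate over $P_1(X)$, which is identified, using the tower property $\mu_1=\E[E_{P_1}[Y\mid X]\mid S=1]$; this yields \eqref{eq:estimand}.

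\emph{Expected obstacle.} The delicate part is justifying the derivative identity and the domain. Differentiation under the integral requires the exponential moments to hold on an open neighbourhood of $\Gamma$, which is exactly what $\mathcal N$ provides. One also needs any root to lie in $\Gamma$, and this holds because $\Gamma$ is defined to contain the roots. If a root outside $\Gamma$ were a concern, I would fall back on the fact that the cumulant $\log C(\gamma;x)$ is convex, so $g$ is nondecreasing on its whole domain. Combined with the strict increase on $\Gamma$, this rules out a second root only once the domain is assumed connected, which the interval $\Gamma$ supplies.
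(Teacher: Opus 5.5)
Your argument is essentially the paper's: the $\kappa=0$ tilt with $h=b(Z)$ makes the truth a root (\Cref{lem:targetistilt}), the tilt-derivative identity plus \Cref{ass:relevance} makes $\gamma\mapsto\Ptilt{\gamma}[b(Z)\mid x]$ strictly increasing on $\Gamma$ (\Cref{lem:mono}), and taking $h=Y$ and averaging over $P_1(X)$ gives \eqref{eq:estimand}; the only difference is that you take existence from the model alone, where the paper also states the range condition \eqref{eq:rangecond}, and the model alone suffices under \Cref{ass:codrift}. Drop your fallback, though, because it is false: convexity of $\log C(\gamma;x)$ gives $\partial_\gamma \Ptilt{\gamma}[t+b\mid x]=\Var^{\gamma}(t+b\mid x)\ge0$, not monotonicity of $g$, whose derivative $\Cov^{\gamma}(b,t+b\mid x)=\Var^{\gamma}(b\mid x)+\Cov^{\gamma}(b,t\mid x)$ can be negative off $\Gamma$, so uniqueness rests solely on $\Gamma$ being defined to contain every root.
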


\begin{proof}[Proof sketch; full proof in Web Appendix~A]
The left side of \eqref{eq:bridgematch} is identified from target data; the right
side is, for each fixed $\gamma$, a functional of the source law. Differentiating,
$\partial_\gamma \Ptilt{\gamma}[b(Z)\mid x]
   =\Cov^{\gamma}\!\big(b(Z),\,t(Y)+b(Z)\mid x\big)\ge\varepsilon>0$ by
\Cref{ass:relevance}; hence $\gamma\mapsto\Ptilt{\gamma}[b(Z)\mid x]$ is strictly
increasing and \eqref{eq:bridgematch} has at most one root, with existence from
continuity and \Cref{ass:overlap}. For \eqref{eq:estimand}, $\kappa=0$ makes
$P_1(\cdot\mid x)$ the $\gamma^\star$-tilt of $P_0(\cdot\mid x)$, so
$E_{P_1}[Y\mid x]=\Ptilt{\gamma^\star}[Y\mid x]$, identified from source; averaging
over $P_1(X)$ gives $\mu_1$.
\end{proof}

\begin{proposition}[Distribution-general drift and core]
\label{prop:general}
The closed forms below hold for \emph{any} source law under the natural-parameter
tilt, expressed through its conditional tilt (co)variances at the truth
$\gamma^\star$. Writing all covariances under the identified law
$\Ptilt{\gamma^\star}$ and conditioning on $X=x$:
\begin{enumerate}[label=(\roman*)]
  \item \emph{(Identified drift.)} The bridge-matching equation of
  \Cref{thm:point} pins the outcome shift to
  \begin{equation}
    E_{P_1}[t(Y)\mid x]-E_{P_0}[t(Y)\mid x]
    \;=\;
    \big(E_{P_1}[b(Z)\mid x]-E_{P_0}[b(Z)\mid x]\big)\;
    c_t(x),\quad
    c_t(x)=\frac{\Cov^{\gamma^\star}\!\big(t,\,t+b\mid x\big)}
              {\Cov^{\gamma^\star}\!\big(b,\,t+b\mid x\big)} .
    \label{eq:c-general}
  \end{equation}
  Equation~\eqref{eq:c-general} is a first-order statement: it is the ratio of
  the two directional derivatives at $\gamma^\star$, so it holds exactly when the
  tilted moments are linear in $\gamma$, as in the Gaussian instance, and to
  $O(\gamma^{\star2})$ otherwise. The exact relation replaces the ratio of
  derivatives by the ratio of integrals
  $\int_0^{\gamma^\star}\Cov^{\gamma}(t,t+b\mid x)\,d\gamma$ and
  $\int_0^{\gamma^\star}\Cov^{\gamma}(b,t+b\mid x)\,d\gamma$. We use the
  derivative form throughout because it is the quantity that appears in the
  influence function of \Cref{thm:eif}.
  Carried to the outcome scale, $\sigma^0_Y\,c_t(x)$ is the local rate at which a
  unit of bridge drift translates into outcome drift. It should not be confused
  with the efficiency coefficient $c$ of \Cref{thm:eif}, which is a different
  object. The ratio $c_t$ carries a denominator, the Stage-1 sensitivity of the
  bridge moment, while $c$ does not, that denominator already sitting inside the
  Stage-1 influence function.
  \item \emph{(Irreducible core.)} With $s$ the tilted-law projection residual
  \eqref{eq:proj}, so that $\Cov^{\gamma^\star}(b,s\mid x)=0$ and
  $\Cov^{\gamma^\star}(t,s\mid x)=\Var^{\gamma^\star}(s\mid x)$, the half-width is
  \begin{equation}
    \tfrac12 W(\bar\kappa)
    \;=\;
    \bar\kappa\,E_{P_1}\!\big[\Cov^{\gamma^\star}\!\big(Y,s\mid X\big)\big]
    +O(\bar\kappa^2)
    \;=\;
    \bar\kappa\,\sigma^0_Y\,E_{P_1}\!\big[\Var^{\gamma^\star}(s\mid X)\big]
    +O(\bar\kappa^2),
    \label{eq:core-general}
  \end{equation}
  the target-averaged tilted residual variance, carried to the outcome scale by
  the loading constant $\sigma^0_Y$; the second equality is
  $\Cov^{\gamma^\star}(Y,s)=\sigma^0_Y\Cov^{\gamma^\star}(t,s)
  =\sigma^0_Y\Var^{\gamma^\star}(s)$.
\end{enumerate}
The half-width does \emph{not} depend on how precisely the bridge identifies
$\gamma^\star$, nor on the sample size. It does depend on how much of the outcome
the bridge and the covariates leave unexplained: a bridge that explains more of
$Y$ shrinks $\Var^{\gamma^\star}(s\mid x)$ and with it the core. In the limit of a
bridge that determines the outcome given the covariates, $s\to0$ and $\mu_1$ is
point identified, as it must be, since $Z$ is observed in the target. The
Gaussian and binary specializations below are two instances; the identity is not
tied to either.
$W(\bar\kappa)$ is a working-model functional. Both of its inputs are:
$\sigma^0_Y$ is a loading constant fixed by convention, and
$\Var^{\gamma^\star}(s\mid x)$ depends on the assumed dependence between $Y$ and
$Z$ given $X$, which is the one modelling choice \Cref{lem:dcdr} does not protect
(\Cref{rem:stage1-scope}). That dependence enters the analysis twice. It enters
the tilted bridge moment, so a misspecified dependence moves $\hat\gamma$ and with
it the benchmark; and it enters $\Var^{\gamma^\star}(s\mid x)$, so it moves the
half-width. Neither is guaranteed when it is wrong.
\end{proposition}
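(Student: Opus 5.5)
The plan is to work conditionally on $X=x$ throughout and to derive both parts from one calculus fact about exponential tilts. For any weight $\omega_\theta$ depending smoothly on a scalar $\theta$ and any integrable $h$,
\[
  \partial_\theta \Ptilt{\omega_\theta}[h\mid x]
  = \Cov^{\omega_\theta}\!\big(h,\partial_\theta\omega_\theta\mid x\big).
\]
This is the tilt-derivative lemma of Web Appendix~A. Its differentiation under the integral is justified by \Cref{ass:overlap}, whose second-moment condition dominates $|h\,\partial_\theta\omega|\,e^{\omega}$ on the open set $\mathcal N$.

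For part (i), set $\kappa=0$ and consider the two maps $g_t(\gamma)=\Ptilt{\gamma}[t\mid x]$ and $g_b(\gamma)=\Ptilt{\gamma}[b\mid x]$. At $\gamma=0$ they return the source moments $E_{P_0}[t\mid x]$ and $E_{P_0}[b\mid x]$. At $\gamma^\star$ they return the target moments, by \Cref{thm:point} and model \eqref{eq:tilt}. Hence
\[
  \Delta_t=\int_0^{\gamma^\star}\Cov^{\gamma}(t,t+b\mid x)\,d\gamma,
  \qquad
  \Delta_b=\int_0^{\gamma^\star}\Cov^{\gamma}(b,t+b\mid x)\,d\gamma .
\]
This gives the exact relation stated in (i). \Cref{ass:relevance} makes the integrand of $\Delta_b$ at least $\varepsilon$, so $\Delta_b\neq0$ whenever $\gamma^\star\neq0$ and the ratio $\Delta_t/\Delta_b$ is well defined. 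Expanding each integrand around $\gamma^\star$ yields $\Delta_t=\gamma^\star\Cov^{\gamma^\star}(t,t+b)+O(\gamma^{\star2})$, and likewise for $\Delta_b$. Dividing and using the lower bound $\varepsilon$ on the denominator gives \eqref{eq:c-general} to the stated order. In the Gaussian instance the tilted covariances do not depend on $\gamma$, so the error term vanishes. With $b=\sum_k b_k$ this is a scalar statement; in the multi-bridge case it applies coordinatewise.

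For part (ii), fix $\gamma$ at the root $\gamma(\kappa;x)$ of the $\kappa$-version \eqref{eq:bridgematch-kappa} and define the set endpoints as $\mu_1(\pm\bar\kappa)$, where $\mu_1(\kappa)=E_{P_1}\big[\Ptilt{\gamma(\kappa;X)[t+b]+\kappa s}[Y\mid X]\big]$. The derivative at $\kappa=0$ follows from the chain rule and the tilt lemma:
\[
  \tfrac{d}{d\kappa}\Ptilt{\omega}[Y\mid x]\Big|_{0}
  = \Cov^{\gamma^\star}(Y,s\mid x)
  + \gamma'(0;x)\,\Cov^{\gamma^\star}(Y,t+b\mid x).
\]
To compute $\gamma'(0;x)$, differentiate the matching equation implicitly:
\[
  0=\Cov^{\gamma^\star}(b,s\mid x)+\gamma'(0;x)\Cov^{\gamma^\star}(b,t+b\mid x).
\]
The first term is zero by the projection identity $E^{\gamma^\star}[s\varphi(Z)\mid X]=0$ from \eqref{eq:proj}, and \Cref{ass:relevance} keeps the second covariance away from zero. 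Therefore $\gamma'(0;x)=0$, so the detectable drift is stationary at the benchmark, and the first-order slope is exactly $\Cov^{\gamma^\star}(Y,s\mid x)$.

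Since $\mu_1(\pm\bar\kappa)=\mu_1(0)\pm\bar\kappa E_{P_1}[\Cov^{\gamma^\star}(Y,s\mid X)]+O(\bar\kappa^2)$, half the difference of the endpoints gives the first equality of \eqref{eq:core-general}. The second equality uses $Y=m^0_Y+\sigma^0_Y t(Y)$, which holds because the loading constants are fixed; hence $\Cov(Y,s)=\sigma^0_Y\Cov(t,s)$. Writing $t=\Pi^{\gamma^\star}[t\mid Z,X]+s$ and using orthogonality of $s$ to $\Hz(x)$ gives $\Cov(t,s)=\Var(s)$. The remarks after the formula then follow directly. Nothing in the formula involves the precision of $\hat\gamma$. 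If $t(Y)\in\Hz(x)$, then $s=0$ and the width is zero.

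I expect the main obstacle to be controlling the $O(\bar\kappa^2)$ remainder uniformly in $x$. This requires $\gamma(\kappa;x)$ to be $C^2$ on $[-\bar\kappa,\bar\kappa]$ with bounds that hold for all $x$. I would get this from the implicit function theorem, using the uniform relevance bound $\varepsilon$ together with the integrability on $\mathcal N$ to bound the second derivatives of the tilted moments. Integrating against $P_1(X)$ is then justified by dominated convergence.

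A second subtlety is that $s$ is defined under the $\gamma^\star$-tilted law. As $\kappa$ moves, $\gamma$ moves only at second order, so the orthogonality of $s$ is needed only at $\kappa=0$, which is exactly the order the statement claims.
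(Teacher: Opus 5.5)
Your proposal is correct and follows essentially the paper's route. Part (i) applies the tilt-derivative identity $\partial_\gamma \Ptilt{\gamma}[h\mid x]=\Cov^{\gamma}(h,t+b\mid x)$ with $h\in\{t,b\}$ along the $\gamma$-path, and part (ii) combines the chain rule with the implicit-function derivative of the bridge equation, whose numerator $\Cov^{\gamma^\star}(b,s\mid x)$ vanishes by the tilted projection, followed by $\Cov^{\gamma^\star}(Y,s)=\sigma^0_Y\Cov^{\gamma^\star}(t,s)=\sigma^0_Y\Var^{\gamma^\star}(s)$, as in Web Appendix~A. The differences are presentational: you derive the exact integral relation in (i) by the fundamental theorem of calculus before linearizing, which justifies the $O(\gamma^{\star2})$ claim more explicitly than the paper does, and you prove $\partial_\kappa\gamma(0;x)=0$ inline where the paper forward-references \Cref{thm:set}(ii).
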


\begin{proof}[Proof]
Both parts are the exponential-family cumulant identity
$\partial_\gamma E^{\gamma}[h]=\Cov^{\gamma}(h,t+b)$ applied to $h\in\{t,b,s\}$,
combined with the implicit-function derivative of the bridge equation; the
vanishing of the $\gamma$-path at $\kappa=0$ is \Cref{thm:set}(ii), to which we
forward-reference. See Web Appendix~A.
\end{proof}

\begin{corollary}[Gaussian instance]
\label{cor:gaussian}
Let the source working model be $Y\mid X\sim N(m_0(X),\sigma_Y^2)$ with the bridge
$Z\mid X\sim N(m_{Z0}(X),\sigma_Z^2)$ and residual correlation $r$, and take the
loading constants at their conditional values, $\sigma^0_Y=\sigma_Y$ and
$\sigma^0_Z=\sigma_Z$. Write $\Delta_Z=m_{Z1}-m_{Z0}$ for the observed bridge
drift. Then $\Var^{\gamma^\star}(s\mid X)=1-r^2$, and \Cref{prop:general} gives
\begin{equation}
  \gamma^\star=\frac{\Delta_Z}{\sigma_Z\,(1+r)},
  \qquad
  E_{P_1}[Y\mid x]-m_0(x)=\Delta_Z\,\frac{\sigma_Y}{\sigma_Z},
  \qquad
  W(\bar\kappa)=2\bar\kappa\,\sigma_Y\,(1-r^2).
  \label{eq:gauss-instance}
\end{equation}
Two consequences follow. First, the identified outcome drift is the observed
bridge drift rescaled by the ratio of standard deviations $\sigma_Y/\sigma_Z$: the
factor
$(1+r)$ cancels between the bridge-matching step and the outcome-shift step, so
the transfer coefficient is free of the residual correlation even though
$\gamma^\star$ itself is not. This gives a derivation for the weighted bridge-drift calibration used in applied
practice, replacing its weights by the derived ratio $\sigma_Y/\sigma_Z$.
Second, the core width carries the factor $(1-r^2)$: a bridge more strongly
related to the outcome yields a narrower core, and $r^2\to1$ returns point
identification. With $q$ bridges, $r^2$ is replaced throughout by the multiple
$R^2_{Y\mid Z,X}$. A further Gaussian simplification, used in
\Cref{sec:est-eq}, is that the bridge equation is \emph{linear} in $\gamma$, so
Stage~1 is closed form with per-bridge roots
$\hat\gamma_k=\hat\Delta_{Z,k}/\{\sigma^0_{Z,k}(1+r_k)\}$.
Both simplifications are properties of the Gaussian working model rather
than of the method: the cancellation of $(1+r)$ in the transfer coefficient and
the linearity of the bridge equation are what make this instance transparent, and
neither survives to a general exponential family.
\end{corollary}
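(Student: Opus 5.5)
The plan is to reduce everything to a bivariate (or, with $q$ bridges, multivariate) Gaussian computation on the standardized scale, where exponential tilting by a linear function acts as an exact mean shift with unchanged covariance. With $\sigma^0_Y=\sigma_Y$ and $\sigma^0_Z=\sigma_Z$, the source conditional law of $(t(Y),b(Z))$ given $X=x$ is standard bivariate normal with correlation $r$. The one fact I would use repeatedly is the Gaussian moment-generating identity. If $U\sim N(0,\Sigma)$ and $a$ is a fixed vector, the law proportional to $e^{a^\top U}dN(0,\Sigma)$ is $N(\Sigma a,\Sigma)$. The change of measure is exact, not first order.

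\emph{Step 1: the drift $\gamma^\star$.} Take $a=\gamma(1,1)^\top$. The $\gamma$-tilted law of $(t,b)$ is then $N(\gamma(1+r)(1,1)^\top,\Sigma)$, so $\Ptilt{\gamma}[b(Z)\mid x]=\gamma(1+r)$. This is linear in $\gamma$, which gives the linearity remark. The bridge-matching equation \eqref{eq:bridgematch} of \Cref{thm:point} has left side $E_{P_1}[b\mid x]=\Delta_Z/\sigma_Z$. Solving gives $\gamma^\star=\Delta_Z/\{\sigma_Z(1+r)\}$. Uniqueness is immediate here, since $\Cov^{\gamma}(b,t+b\mid x)=1+r>0$ for $r>-1$.

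\emph{Step 2: the transfer coefficient.} Under $\kappa=0$, $E_{P_1}[t\mid x]=\gamma^\star(1+r)=\Delta_Z/\sigma_Z$. Multiplying by $\sigma_Y$ gives $E_{P_1}[Y\mid x]-m_0(x)=\Delta_Z\sigma_Y/\sigma_Z$. As a check, the ratio $c_t$ of \Cref{prop:general}(i) is $(1+r)/(1+r)=1$, and it is exact because the tilted moments are linear in $\gamma$.

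\emph{Step 3: the residual and its variance.} The $\gamma^\star$-tilted law is Gaussian with the source covariance, so the $L_2$ projection of $t$ onto functions of $(Z,X)$ is the linear regression
\[
\Pi^{\gamma^\star}[t\mid Z,X]=\gamma^\star(1+r)+r\{b-\gamma^\star(1+r)\}.
\]
This confirms \Cref{rem:projlaw}: the tilted and untilted projections coincide up to an additive constant. The constant is absorbed into $C(\gamma,\kappa;x)$ and so does not affect the tilt. Hence $s=t-rb+\text{const}$ and $\Var^{\gamma^\star}(s\mid x)=1+r^2-2r^2=1-r^2$, with $\Cov(b,s)=0$ and $\Cov(t,s)=1-r^2$.

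\emph{Step 4: the width.} The exponent $\gamma(t+b)+\kappa s$ is still linear in $(t,b)$, so the same Gaussian identity applies exactly. The $\kappa$-tilt shifts the mean of $b$ by $\kappa\Cov(b,s)=0$. The bridge equation therefore has the same root $\gamma^\star$ for every $\kappa$, which is the Gaussian form of \Cref{thm:set}(ii). The mean of $t$ shifts by $\kappa(1-r^2)$. Thus $E_{P_1}[Y\mid x]$ moves by exactly $\kappa\sigma_Y(1-r^2)$, linear in $\kappa$ with no $O(\bar\kappa^2)$ remainder. Sweeping $|\kappa|\le\bar\kappa$ and averaging over $P_1(X)$ gives $W(\bar\kappa)=2\bar\kappa\sigma_Y(1-r^2)$, in agreement with \eqref{eq:core-general}. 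For $q$ bridges the same algebra holds with $\Sigma$ a $(q+1)$-dimensional correlation matrix. The regression of $t$ on $b$ gives residual variance $1-R^2_{Y\mid Z,X}$, which replaces $1-r^2$.

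The main obstacle I anticipate is Step 3. I must justify that the residual direction $s$ defined through the $\gamma^\star$-tilted projection is, up to a constant, the same linear function under every tilt considered. This is what makes Step 4 exact rather than first order, and it requires the observation that Gaussian tilts leave the covariance unchanged. Two further caveats need handling. The tilted and untilted projections differ by a constant, which I would argue is immaterial. The claim "$(1+r)$ cancels" needs $r\neq-1$, which \Cref{ass:relevance} supplies.
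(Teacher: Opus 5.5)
Your proposal is correct and follows essentially the paper's route in Web Appendix~A: a linear exponent tilts the Gaussian by the exact mean shift $\Sigma a$ with unchanged covariance, which gives $\gamma^\star$ and the outcome shift by bridge matching. The same covariance invariance keeps $\Cov^{\omega}(b,s\mid x)=0$ for every $\kappa$, so $\gamma(\kappa)\equiv\gamma^\star$ and the width is exact (\Cref{rem:exactness}); your Step~3 only makes explicit the tilt-invariant projection coefficient that the paper asserts.

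One piece of the statement that neither argument checks is the per-bridge root formula for $q>1$. With the shared exponent $\gamma[t+\sum_j b_j]$, your $(q+1)$-dimensional algebra gives $\Ptilt{\gamma}[b_k(Z)\mid x]=\gamma\{1+r_k+\sum_{j\neq k}\Cov(b_j,b_k\mid x)\}$. So $\hat\gamma_k=\hat\Delta_{Z,k}/\{\sigma^0_{Z,k}(1+r_k)\}$ holds only for bridges that are conditionally uncorrelated given $X$, or for roots read off single-bridge tilts $\gamma[t+b_k]$.
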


\begin{corollary}[Binary/logistic instance]
\label{cor:binary}
If $Y\mid X$ is binary with $t(Y)=Y$ and the bridge is (possibly binary) with
$b(Z)=Z$, the same identity holds with the Gaussian variances replaced by their
tilted-Bernoulli counterparts: $c_t(x)=\Cov^{\gamma^\star}(Y,Y{+}Z\mid x)/
\Cov^{\gamma^\star}(Z,Y{+}Z\mid x)$ and core driver
$\Var^{\gamma^\star}(s\mid x)=E^{\gamma^\star}[(Y-E^{\gamma^\star}[Y\mid Z,x])^2\mid x]$,
the tilted residual variance of a Bernoulli outcome. Here $t(Y)=Y$ carries no
standardizing constant, so $\sigma^0_Y=1$ and the two forms in
\eqref{eq:core-general} coincide: the half-width is
$\bar\kappa\,E_{P_1}[\Var^{\gamma^\star}(s\mid X)]$ directly, and $\kappa$ is read
as a shift on the logit scale. This differs from the Gaussian instance, where the
loading divides by $\sigma_Y$; the two instances therefore fix $\kappa$ in
different units and a value of $\bar\kappa$ is not transferable between them. No
normal approximation is used; the closed form is exact for the working
exponential family. This is the form relevant to binary attainment outcomes and
to the differential-missingness layer of the application.
This instance is also where several statements that are vacuous under the
Gaussian working model acquire content. With a discrete covariate the support of
$(Y,Z)$ is finite, so the tilted projection \eqref{eq:proj}, the root of
\eqref{eq:bridgematch-kappa} at every $\kappa$, and the identified set itself are
exact finite sums. \Cref{sec:sim-binary} uses this to measure the gap between the
tilted and untilted projections, the derivative $\partial_\kappa\gamma$ at
$\kappa=0$ in a family where the bridge equation is not linear in $\gamma$, and
the second-order coefficient that \Cref{rem:set-scope} declines to characterize.
\end{corollary}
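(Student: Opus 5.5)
The plan is to obtain the corollary by direct substitution into \Cref{prop:general}. That proposition is stated for an arbitrary source law under the natural-parameter tilt, so nothing needs re-deriving. What needs checking is that the binary choices $t(Y)=Y$, $b(Z)=Z$ satisfy its hypotheses, and that each abstract quantity reduces to the stated Bernoulli expression.

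First I would verify the assumptions. $Y$ is binary and $Z$ is either binary or has whatever support the working model gives it. With $Y\in\{0,1\}$ the exponent $\gamma(Y+Z)+\kappa s$ is bounded in $Y$. Hence the integrability part of \Cref{ass:overlap} reduces to a moment condition on $Z$ alone, and it holds trivially when $Z$ is binary. \Cref{ass:relevance} becomes the requirement $\Cov^{\gamma[Y+Z]+\kappa s}(Z,Y+Z\mid x)\ge\varepsilon$. I would keep this as a hypothesis rather than prove it, since it is exactly the nonlinear monotonicity that \Cref{thm:point} uses. Part~(i) of \Cref{prop:general} then gives
\[
c_t(x)=\frac{\Cov^{\gamma^\star}(Y,Y+Z\mid x)}{\Cov^{\gamma^\star}(Z,Y+Z\mid x)}
\]
by substitution.

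Second, I would identify the residual. The projection $\Pi^{\gamma^\star}[\,\cdot\mid Z,X]$ onto the closed span $\Hz(x)$ of square-integrable functions of $(Z,X)$ under $P^{\gamma^\star}$ is the $L_2(P^{\gamma^\star})$ conditional expectation. Therefore $s=Y-E^{\gamma^\star}[Y\mid Z,X]$. By the tower property $E^{\gamma^\star}[s\mid x]=0$ and $s\perp\varphi(Z)$, which gives
\[
\Cov^{\gamma^\star}(Y,s\mid x)=E^{\gamma^\star}[s^2\mid x]=\Var^{\gamma^\star}(s\mid x)=E^{\gamma^\star}\!\big[(Y-E^{\gamma^\star}[Y\mid Z,x])^2\mid x\big].
\]
Since $t(Y)=Y$ has no standardization, $\sigma^0_Y=1$. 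The two expressions in \eqref{eq:core-general} therefore coincide, and the half-width is $\bar\kappa\,E_{P_1}[\Var^{\gamma^\star}(s\mid X)]+O(\bar\kappa^2)$. For the units claim, $s$ is measured on the $Y$ scale. Hence $e^{\kappa s}$ multiplies the conditional odds of $Y$ given $(Z,X)$ by $e^{\kappa}$ relative to the $\gamma^\star$-tilt, so $\kappa$ is a log-odds shift. The Gaussian loading divides by $\sigma_Y$ instead, which is why $\bar\kappa$ values are not transferable between the two instances.

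Third, for exactness and finiteness I would note the following. With discrete $X$ and binary $(Y,Z)$ the support is finite, so every tilted expectation, the normalizer $C(\gamma,\kappa;x)$, and the projection are finite sums. The root of \eqref{eq:bridgematch-kappa} is unique by the relevance-driven monotonicity of \Cref{thm:point}, and the identified set is then an exact finite computation. No normal approximation enters anywhere.

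The main obstacle is stating honestly what ``exact'' means. The covariance identities and the form of $s$ are exact. However, \eqref{eq:c-general} and \eqref{eq:core-general} remain first-order in $\gamma^\star$ and $\bar\kappa$, because the Bernoulli tilt is not linear in $\gamma$. I would phrase the exactness claim as applying to the finite-sum evaluation of the set and of its driver $\Var^{\gamma^\star}(s\mid x)$, not as removing the $O(\bar\kappa^2)$ term. The second-order remainder is left to the numerical study of \Cref{sec:sim-binary}.
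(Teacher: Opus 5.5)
Your proposal is correct and follows the paper's route. The paper supports this corollary only through the remark in the proof of \Cref{prop:general} that the Bernoulli case is a substitution of the tilted (co)variances into the general cumulant identities, which is what you do; your additions (identifying $s$ as $Y-E^{\gamma^\star}[Y\mid Z,X]$, justifying the log-odds reading because $e^{-\kappa E^{\gamma^\star}[Y\mid Z,X]}$ cancels conditionally on $(Z,X)$, and reading ``exact'' as the finite-sum evaluation rather than as removing the $O(\bar\kappa^2)$ remainder) agree with the paper's own qualifications in \Cref{prop:general} and \Cref{rem:set-scope}.
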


\subsection{Identified set under imperfect co-drift}

\begin{theorem}[Identified set and irreducible core]
\label{thm:set}
Relax to $|\kappa|\le\bar\kappa$ and keep \Cref{ass:relevance,ass:overlap}, with
$s$ the projection residual~\eqref{eq:proj}. For each $\kappa$, let
$\gamma(\kappa;x)$ solve the residual-augmented bridge equation
\begin{equation}
  E_{P_1}\!\big[b(Z)\mid x\big]
  \;=\;
  \Ptilt{\,\gamma[t+b]+\kappa s}\!\big[b(Z)\mid x\big],
  \label{eq:bridgematch-kappa}
\end{equation}
and define
$\mu_1(\kappa)=\E\big[\Ptilt{\,\gamma(\kappa)[t+b]+\kappa s}[Y\mid X]\mid S=1\big]$.
Then:
\begin{enumerate}[label=(\roman*)]
  \item Within the scalar residual model, that is, taking the residual
  perturbation to lie along the single direction $s$, the identified set for
  $\mu_1$ is $\{\mu_1(\kappa):|\kappa|\le\bar\kappa\}$. The map
  $\kappa\mapsto\mu_1(\kappa)$ is continuous, so the set is an interval; it
  equals $[\mu_1(-\bar\kappa),\mu_1(\bar\kappa)]$ whenever
  $\partial_\kappa\mu_1$ keeps its sign on $[-\bar\kappa,\bar\kappa]$, which
  holds automatically in the Gaussian instance, where $\mu_1$ is affine in
  $\kappa$, and more generally for $\bar\kappa$ small enough that the sign
  established at $\kappa=0$ in part (ii) is not reversed. Otherwise the set is
  $[\min_\kappa\mu_1(\kappa),\max_\kappa\mu_1(\kappa)]$, obtained from the same
  sweep. Under fully multi-directional residual drift the displayed interval is
  a first-order inner approximation; see \Cref{rem:scalar-scope}.
  \item By the projection property~\eqref{eq:proj},
  $\partial_\kappa\gamma(\kappa;x)\big|_{\kappa=0}=0$: the bridge-identified
  $\gamma^\star$ is \emph{stationary at the benchmark}. The benchmark
  $\mu_1(0)$ is thus the \Cref{thm:point} point estimate, and the interval is
  anchored at it with no first-order drift as $\kappa$ is switched on; the
  midpoint of $[\mu_1(-\bar\kappa),\mu_1(\bar\kappa)]$ coincides with $\mu_1(0)$ up
  to $O(\bar\kappa^2)$ (exactly, in the Gaussian instance, where $\mu_1(\kappa)$ is
  affine). The half-width obeys
  \begin{equation}
    \tfrac12 W(\bar\kappa)
    \;=\;
    \bar\kappa\;
    E_{P_1}\!\big[\Cov^{\gamma^\star}\!\big(Y,\,s\mid X\big)\big]
    \;+\;O(\bar\kappa^2),
    \label{eq:core}
  \end{equation}
  the \emph{irreducible core}. By \Cref{prop:general} it equals
  $\bar\kappa\,\sigma^0_Y\,E_{P_1}[\Var^{\gamma^\star}(s\mid X)]$ for any working
  family. It is not a sampling-error quantity: no sample size narrows it, and
  neither does a bridge that pins $\gamma^\star$ more precisely. It is narrowed
  only by a smaller $\bar\kappa$, or by a bridge that explains more of the
  outcome. The Gaussian instance gives
  $W(\bar\kappa)=2\bar\kappa\,\sigma_Y(1-R^2_{Y\mid Z,X})$ and the binary
  instance the tilted-Bernoulli residual variance
  (\Cref{cor:gaussian,cor:binary}).
\end{enumerate}
\end{theorem}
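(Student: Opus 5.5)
The argument runs through the one-parameter family of tilted source laws indexed by $(\gamma,\kappa)$ and handles parts (i) and (ii) separately. Both rest on the cumulant identity already used for \Cref{prop:general}: for $\omega=\gamma[t+b]+\kappa s$ and square-integrable $h$,
\[
\partial_\gamma \Ptilt{\omega}[h\mid x]=\Cov^{\omega}(h,t+b\mid x),
\qquad
\partial_\kappa \Ptilt{\omega}[h\mid x]=\Cov^{\omega}(h,s\mid x).
\]
These identities, and the continuity of both sides in $(\gamma,\kappa)$, hold on the open set $\mathcal N$ of \Cref{ass:overlap}; the second-moment condition there justifies dominated convergence under the integral.

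\textbf{Part (i).} Fix $x$ and $\kappa\in[-\bar\kappa,\bar\kappa]$, and repeat the argument of \Cref{thm:point} with $\kappa s$ held in the exponent. The map $\gamma\mapsto\Ptilt{\gamma[t+b]+\kappa s}[b\mid x]$ has derivative at least $\varepsilon$ on $\Gamma$ by \Cref{ass:relevance}, which holds uniformly over the $\kappa$-range. Hence \eqref{eq:bridgematch-kappa} has at most one root in $\Gamma$, and existence follows from continuity. Define
\[
F(\gamma,\kappa;x)=\Ptilt{\gamma[t+b]+\kappa s}[b\mid x]-E_{P_1}[b\mid x].
\]
Because $\partial_\gamma F\ge\varepsilon$ never vanishes, the implicit function theorem makes $\kappa\mapsto\gamma(\kappa;x)$ continuously differentiable, with
\[
\partial_\kappa\gamma=-\frac{\Cov^{\omega}(b,s\mid x)}{\Cov^{\omega}(b,t+b\mid x)}.
\]
It follows that $\kappa\mapsto\Ptilt{\gamma(\kappa)[t+b]+\kappa s}[Y\mid x]$ is continuous. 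Averaging over $P_1(X)$ gives continuity of $\mu_1(\kappa)$, by dominated convergence using \Cref{ass:overlap}. A continuous image of a compact interval is a compact interval, namely $[\min\mu_1,\max\mu_1]$. If $\partial_\kappa\mu_1$ keeps its sign, $\mu_1$ is monotone and the endpoints are $\mu_1(\pm\bar\kappa)$. In the Gaussian instance, the direct computation behind \Cref{cor:gaussian} shows that $\mu_1$ is affine in $\kappa$. Within the scalar model, the identified set is exactly this image: for each $\kappa$, the observed law is reproduced by the tilt \eqref{eq:tilt} with $\gamma=\gamma(\kappa;x)$, because $P_0$ and $P_1(X)$ are untouched and the target bridge law matches through \eqref{eq:bridgematch-kappa}.

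One point needs care here. Matching the bridge mean is all the scalar model is asked to reproduce, so I would state explicitly that \eqref{eq:bridgematch-kappa} is the operative restriction rather than the full law $p_1(z\mid x)$. This mirrors the convention of \Cref{thm:point}.

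\textbf{Part (ii).} Evaluate the implicit derivative at $\kappa=0$, where $\omega=\gamma^\star[t+b]$. By the projection property \eqref{eq:proj}, $s$ is $\Ptilt{\gamma^\star}$-orthogonal to every function of $(Z,X)$, and in particular $\Cov^{\gamma^\star}(b,s\mid x)=0$. Hence $\partial_\kappa\gamma|_{0}=0$. It matters that the projection is taken under the tilted law rather than $P_0$; that choice is exactly what makes this covariance vanish.

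The chain rule then gives
\[
\partial_\kappa\mu_1\big|_0=E_{P_1}\!\big[\Cov^{\gamma^\star}(Y,t+b\mid X)\,\partial_\kappa\gamma|_0+\Cov^{\gamma^\star}(Y,s\mid X)\big]=E_{P_1}\big[\Cov^{\gamma^\star}(Y,s\mid X)\big].
\]
A second-order Taylor expansion, $\mu_1(\pm\bar\kappa)=\mu_1(0)\pm\bar\kappa\,\partial_\kappa\mu_1|_0+O(\bar\kappa^2)$, yields both the midpoint statement and the half-width \eqref{eq:core}. The remainder is controlled by a bounded second derivative on the compact set, which requires a further derivative of the tilted moments (again via \Cref{ass:overlap}, possibly shrinking $\mathcal N$).

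The identity $\Cov^{\gamma^\star}(Y,s)=\sigma^0_Y\Var^{\gamma^\star}(s)$ follows by writing $Y=m^0_Y+\sigma^0_Y t$ and using \eqref{eq:proj}. This is \Cref{prop:general}(ii). The Gaussian and binary forms then follow from \Cref{cor:gaussian,cor:binary}.

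\textbf{Main obstacle.} The delicate step is the uniform control behind the $O(\bar\kappa^2)$ remainder and behind the sign-preservation claim. Both require $\partial_\kappa^2\gamma$ and $\partial_\kappa^2\mu_1$ to be bounded uniformly in $x$. My plan is to differentiate the implicit equation a second time. This expresses $\partial_\kappa^2\gamma$ through third tilted cumulants divided by $\Cov^{\omega}(b,t+b\mid x)\ge\varepsilon$, and those cumulants are bounded by the integrability in \Cref{ass:overlap}, strengthened if necessary to third moments.
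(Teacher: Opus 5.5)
\paragraph{Overall.} Apart from one step, your argument is the paper's own proof (Web Appendix~A, \Cref{lem:gammakappa,lem:invariance,lem:core}). The shared steps are:
\begin{itemize}
\item uniform monotonicity in $\gamma$ from \Cref{ass:relevance} over the whole product $\Gamma\times[-\bar\kappa,\bar\kappa]$;
\item the implicit-function derivative $\partial_\kappa\gamma=-\Cov^{\omega}(b,s\mid x)/\Cov^{\omega}(b,t+b\mid x)$;
\item its vanishing at $\kappa=0$ because $s$ is projected under the tilted law;
\item the chain rule, followed by a Taylor expansion.
\end{itemize}
Two points in your proposal are more careful than the paper, which simply asserts both. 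One is that the $O(\bar\kappa^2)$ remainder needs second-derivative bounds uniform in $x$, possibly with a third-moment strengthening of \Cref{ass:overlap}. The other is your explicit caveat that \eqref{eq:bridgematch-kappa} matches only the bridge mean, not the full law $p_1(z\mid x)$.

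\paragraph{The gap: existence of the root.} The step that does not go through as written is ``existence follows from continuity.'' Strict monotonicity on $\Gamma$ plus continuity gives at most one root. It gives a root only if $E_{P_1}[b(Z)\mid x]$ lies in the range of $\gamma\mapsto\Ptilt{\gamma[t+b]+\kappa s}[b(Z)\mid x]$. You give no reason why that range, which contains the target value at the data-generating $\kappa$, still contains it at every other $\kappa$ in the sweep, where the base measure $e^{\kappa s}dP_0(\cdot\mid x)$ has changed. Without this, $\mu_1(\kappa)$ is not defined on all of $[-\bar\kappa,\bar\kappa]$, and part~(i) has no set to describe.

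\paragraph{How the paper handles it.} The paper supplies the step with a range argument:
\begin{itemize}
\item $e^{\kappa s}$ is strictly positive and integrable by \Cref{ass:overlap}, so $e^{\kappa s}\,dP_0(\cdot\mid x)$ is equivalent to $P_0(\cdot\mid x)$.
\item Equivalent measures have the same essential infimum and supremum of $b(Z)$.
\item \Cref{lem:existence}, applied with this base measure, makes the map a bijection onto $(\underline b,\overline b)$ for every $\kappa$.
\item The range condition \eqref{eq:rangecond} then places the target bridge mean strictly inside that interval.
\end{itemize}
If you do not want to use that argument, you must state existence as a hypothesis rather than derive it from continuity. The definition of $\Gamma$ in \Cref{ass:relevance}, which requires $\Gamma$ to contain the roots, already does this in effect.
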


\begin{proof}[Proof sketch; full proof in Web Appendix~A]
(i) For each $\kappa$, \eqref{eq:bridgematch-kappa} has a unique root by the
monotonicity argument of \Cref{thm:point}, which is available over the whole
sweep by \Cref{ass:relevance}; existence follows because $e^{\kappa s}>0$ almost
surely and is integrable by \Cref{ass:overlap}, so the tilted measure is
\emph{equivalent} to $P_0(\cdot\mid x)$ and the range argument is unchanged.
Ranging $\kappa$ traces the set.
(ii) Differentiating \eqref{eq:bridgematch-kappa} implicitly at $\kappa=0$,
$\partial_\kappa\gamma
   =-\,\Cov^{\gamma^\star}(b(Z),s\mid x)/\partial_\gamma(\mathrm{RHS})$, and the
numerator vanishes because $s$ is orthogonal to $\Hz(x)$ under the identified
(tilted) law by \eqref{eq:proj}. Thus
$\gamma(\kappa)=\gamma^\star+O(\kappa^2)$ and
$\partial_\kappa\mu_1=E_{P_1}[\Cov^{\gamma^\star}(Y,s\mid X)]$, giving
\eqref{eq:core}.
\end{proof}

\begin{remark}[Role of the projection in defining the residual]
\label{rem:whyproject}
The vanishing numerator in part~(ii) has the following consequence. Had the
residual direction been chosen on any other grounds, a substantively motivated
contrast say, or the raw $t(Y)$ itself, one would have
$\Cov^{\gamma^\star}(b,s\mid x)\neq0$, hence
$\gamma(\kappa)=\gamma^\star+\kappa\cdot(\text{nonzero})+O(\kappa^2)$: the
detectable drift, and with it the \emph{center} of the reported set, would move at
first order in the sensitivity parameter, and the half-width would acquire an
additional $\gamma$-path term
$E_{P_1}[\Cov^{\gamma^\star}(Y,t{+}b\mid X)]\,\partial_\kappa\gamma$ on top of
\eqref{eq:core}. Defining $s$ by projection is what produces a set that opens
around a fixed center, and is why the center can be reported as an estimate while
the width is reported as a judgement.
This is a property of the parameterization, not additional identifying
information. The projection buys a convenient report, in which an estimate and a
judgement are cleanly separated, and it buys nothing about the target law that
was not already implied by \eqref{eq:tilt}.
\end{remark}

\begin{remark}[Scope of \Cref{thm:set}(ii)]
\label{rem:set-scope}
Part~(ii) establishes $\partial_\kappa\gamma|_{\kappa=0}=0$, a statement at a
point. Two cases must be separated. In the Gaussian instance a linear tilt shifts
means without changing covariances, so $\Cov^{\omega}(b,s\mid x)=0$ at
\emph{every} $\kappa$, not only at zero; there $\gamma(\kappa)\equiv\gamma^\star$
exactly, the center is invariant over the whole sweep, and the half-width
\eqref{eq:core} carries no remainder (Web Appendix~A,
\Cref{rem:exactness}). Outside such cases the claim is first order:
$\gamma(\kappa)=\gamma^\star+O(\kappa^2)$, and we do not characterize the
second-order coefficient
in general, so the paper offers no general criterion for whether a
particular $\bar\kappa$ is small enough. In one instance it can be measured
rather than bounded: \Cref{sec:sim-binary} computes the coefficient, and the
movement of the centre of the set that follows from it, on population tables
for a binary outcome with a discrete covariate, where every quantity is an exact
finite sum. There the centre moves by a few percent of the width of the set at
the residual bounds one would entertain in practice, and the closed form of
\Cref{prop:general} is accurate to within one percent. This is the same limitation
\Cref{rem:scalar-scope} concedes for \Cref{prop:scalar}, and it should be read the
same way: the reported interval is exact under the Gaussian working model and a
first-order inner approximation otherwise. We avoid describing the center as
``locally stable,'' which suggests an interval statement we have not proved;
\emph{stationary at the benchmark} is what is established.
\end{remark}

\begin{remark}[Interpretation of a narrower set]
\label{rem:narrower}
Three distinct quantities are routinely described as narrowing, and they should
be kept apart. A bridge that pins $\gamma^\star$ more
sharply, through more bridges or a stronger bridge-drift covariance, gives a
tighter \emph{center}. A larger study gives a tighter \emph{confidence interval} around
each endpoint. Neither touches the \emph{core}: by \eqref{eq:core} the core
responds only to $\bar\kappa$ and to the residual variance
$\Var^{\gamma^\star}(s\mid X)$, that is, to how much of the outcome the bridge and
covariates leave unexplained. A bridge can therefore narrow the core, but only by
explaining more of $Y$, never by identifying $\gamma^\star$ more precisely and
never through the sample size.
\end{remark}

The three results divide the problem as follows. Under $\kappa=0$,
\Cref{thm:point} fixes the drift from the data. The over-identification check
interrogates consistency among the bridges. The $\bar\kappa$-sweep of
\Cref{thm:set} reports the extrapolation the data cannot test. The framework
therefore imposes one assumption, tests what can be tested, and confines the
remainder to a single dimension.

\begin{remark}[Selection of the bridge set and the consequence of a rejection]
\label{rem:bridge-choice}
A construct should not appear on both sides of the analysis. The covariance
in \Cref{ass:relevance} is taken after conditioning on $X$, so what identifies
$\gamma$ is the variation in $Z$ that survives that conditioning, and an earlier
administration of a bridge instrument leaves little of it. Such a measurement is
a bridge or a covariate, not both. The failure is quiet: the drift estimate goes
to zero with a small standard error rather than an unstable one, so nothing in
the fit announces it. The quantity to inspect is $R^2_{Y\mid Z,X}$, the share of
the outcome the bridges explain \emph{given} the covariates.

Two practical points follow. First, the bridge that anchors the analysis must be
fixed on substantive grounds, not selected from the data. Because $Y$ is unobserved
at $S=1$, no observed-data criterion can identify which bridge best reflects the
\emph{outcome's} drift; a bridge chosen after inspecting per-bridge drift estimates
can therefore be chosen, deliberately or not, to move $\mu_1$, so post hoc
selection converts an untestable assumption into an apparently data-driven
decision without recording that it has done so.
We recommend pre-specifying a primary bridge on construct grounds: closest in
meaning to the outcome, and using any remaining bridges for the
over-identification check rather than as competing candidates. Second, when that
check rejects, the appropriate response is not to search for a bridge that restores
agreement, nor to escalate to a richer drift model: with $Y$ unobserved, even a
vector-loading generalization leaves the outcome-direction drift unidentified
(\Cref{rem:vector}). A rejection is instead evidence that the cohort transition is
not summarized by a single detectable channel, and its natural consequence is a
wider region: raising $\bar\kappa$, guided for instance by the observed
spread among the per-bridge drift estimates, rather than a narrower point.
\end{remark}

\begin{remark}[Vector loadings and multi-directional drift]
\label{rem:vector}
The scalar $\gamma$ is not a claim that drift is one-dimensional; it is the
coefficient along the chosen tilt direction $t(\cdot)$. Taking vector loadings
$t=(t_1,\dots,t_r)$ and $b=(b_1,\dots,b_r)$ makes $\gamma\in\R^r$ and turns
\eqref{eq:bridgematch} into $r$ simultaneous bridge-matching equations, identified
whenever the bridges span the drift directions ($q\ge r$). The scalar case is the
deliberate low-dimensional default; the identified-set geometry of \Cref{thm:set}
extends coordinatewise. We note that a fully nonparametric vector drift
$\Delta\in\R^p$ acting on all covariates is \emph{not} identified when $Y$ is
unobserved in the target: any procedure that appears to estimate it has implicitly
reduced its dimension. In particular, allowing a separate loading on the
outcome direction: writing the exponent with an outcome coefficient $\gamma_0
t(y)$ distinct from the bridge coefficients, leaves $\gamma_0$ unidentified, since
no target-side moment involves $Y$; the shared-coefficient structure is precisely
what transfers the bridge-identified drift to the outcome, and some such transfer
assumption is unavoidable. Our scalar (or low-rank) drift makes that reduction
explicit and testable rather than hidden. The parallel question for the
\emph{residual} sensitivity $\kappa$, why a single scalar suffices there, is
settled next.
\end{remark}

\begin{proposition}[Sufficiency of a scalar residual sensitivity]
\label{prop:scalar}
Let the residual drift be an arbitrary, possibly infinite-dimensional,
perturbation of the target law within the bridge-orthogonal subspace,
$dP_1^{\eta}(y,z\mid x)\propto dP_1(y,z\mid x)\exp\{\sum_j \eta_j s_j\}$.
Here $\{s_j\}$ is an orthonormal basis, under the identified law
$P^{\gamma^\star}(\cdot\mid x)$, of the orthogonal complement of $\Hz(x)$ within
$L^2$, and $\|\eta\|$ is the corresponding $\ell^2$ norm; the direction $s$ of
\eqref{eq:proj} lies in that complement and is a scalar multiple of one of its
elements (\Cref{rem:convention}). We assume $\sum_j\eta_j^2<\infty$ and that
$\mathcal N$ of \Cref{ass:overlap} contains a neighbourhood of $\eta=0$, so that
$\eta\mapsto\mu_1(\eta)$ is well defined and differentiable there. The
perturbation is applied to the joint law because $s$ is a function of $(Y,Z,X)$;
it leaves the target bridge law, and hence \eqref{eq:bridgematch}, unchanged at
first order, which is what makes it invisible to the data. Then:
\begin{enumerate}[label=(\roman*)]
  \item \emph{(First-order reduction is exact.)} The target mean depends on the
  full vector $\eta$ only through the one-dimensional projection onto
  $\hat g(x)=\big(\Cov^{\gamma^\star}(Y,s_j\mid x)\big)_j$:
  \begin{equation}
    \partial_{\eta}\,\mu_1\big|_{\eta=0}
    =E_{P_1}\!\big[\hat g(X)\big],
    \qquad
    \partial_{\eta}\,\mu_1\perp\{\eta:\langle\eta,\hat g\rangle=0\}.
    \label{eq:scalar-reduction}
  \end{equation}
  Any residual drift orthogonal to $\hat g$ leaves $\mu_1$ unchanged at first
  order; the scalar $\kappa=\langle\eta,\hat g/\|\hat g\|\rangle$ captures the
  entire first-order sensitivity of $\mu_1$ to functional residual drift.
  \item \emph{(The scalar direction is the worst case.)} Among all residual
  drifts of a given size, the one that moves $\mu_1$ most is the direction
  $\hat g$; sweeping $\kappa$ over $[-\bar\kappa,\bar\kappa]$ therefore traces the
  first-order extremes of the identified set.
\end{enumerate}
Moreover $\hat g$ is proportional to $s$: writing $t=\Pi^{\gamma^\star}[t\mid
Z,X]+s$ and using $\Cov^{\gamma^\star}(s,\varphi(Z))=0$, one has
$\hat g_j=\sigma^0_Y\Cov^{\gamma^\star}(s,s_j\mid x)$, so the worst-case direction
of part~(ii) is the direction swept in \Cref{thm:set}. The two results are
therefore about the same object.
\end{proposition}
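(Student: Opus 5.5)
The plan is to mirror the proof of \Cref{thm:set}(ii), replacing the scalar direction $s$ by the family $\{s_j\}$. Fix $x$ and consider the perturbed target law $dP_1^{\eta}\propto dP_1\exp\{\sum_j\eta_j s_j\}$. Under \eqref{eq:tilt} with $\kappa=0$ this is the tilt of $P_0$ with exponent $\gamma[t+b]+\sum_j\eta_j s_j$.

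In that family the bridge-matching equation becomes
\[
E_{P_1}[b\mid x]=\Ptilt{\gamma[t+b]+\sum_j\eta_j s_j}[b\mid x].
\]
Differentiating implicitly in $\eta_j$ at $\eta=0$ gives
\[
\partial_{\eta_j}\gamma=-\Cov^{\gamma^\star}(b,s_j\mid x)\big/\Cov^{\gamma^\star}(b,t+b\mid x).
\]
The denominator is at least $\varepsilon$ by \Cref{ass:relevance}. The numerator vanishes because $s_j$ lies in the orthogonal complement of $\Hz(x)$ under $P^{\gamma^\star}$ and $b(Z)\in\Hz(x)$. So $\gamma$ is stationary in every direction $\eta$, exactly as in \Cref{thm:set}(ii).

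The exchange of derivative and conditional expectation uses the tilt-derivative lemma (Web Appendix~A, \Cref{lem:tiltderiv}). Its domination condition is what the assumed neighbourhood of $\eta=0$ inside $\mathcal N$ supplies. I expect this to be the main technical obstacle. Differentiability of $\eta\mapsto\mu_1(\eta)$ in $\ell^2$ needs the series $\sum_j\eta_j s_j$ to converge in $L^2(P^{\gamma^\star})$, which holds by orthonormality. It also needs $e^{\sum\eta_js_j}$ times second moments to be uniformly integrable near $0$. I would take Fréchet differentiability as granted by the stated hypothesis and verify only that the Gateaux derivative is the $\ell^2$ vector claimed. That requires $\sum_j\Cov^{\gamma^\star}(Y,s_j\mid x)^2<\infty$, which is Bessel's inequality for $Y-E^{\gamma^\star}[Y\mid x]$ projected onto $\{s_j\}$.

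With the $\gamma$-path term removed, apply the cumulant identity to $h=Y$:
\[
\partial_{\eta_j}\Ptilt{\omega}[Y\mid x]\big|_{\eta=0}=\Cov^{\gamma^\star}(Y,s_j\mid x)=\hat g_j(x).
\]
The $\gamma$-path contribution $\Cov^{\gamma^\star}(Y,t+b\mid x)\,\partial_{\eta_j}\gamma$ is zero. Averaging over $P_1(X)$ gives \eqref{eq:scalar-reduction}, since the covariate law is untouched. The orthogonality claim in (i) is then the chain rule: the directional derivative along $\eta$ is $\langle\eta,E_{P_1}\hat g\rangle$, which vanishes for $\eta\perp\hat g$. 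Here $\hat g$ should be read as its $P_1$-average, or pointwise if $\eta$ may depend on $x$; I will state which. Part (ii) follows from Cauchy--Schwarz in $\ell^2$: $|\langle\eta,\bar g\rangle|\le\|\eta\|\|\bar g\|$, with equality exactly at $\eta\propto\bar g$. So the first-order extremes over $\|\eta\|\le\bar\kappa$ are $\mu_1(0)\pm\bar\kappa\|\bar g\|$.

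For the closing claim, write $Y=m^0_Y+\sigma^0_Y t$. Then $\Cov(Y,s_j)=\sigma^0_Y\Cov(t,s_j)$. Decompose $t=\Pi^{\gamma^\star}[t\mid Z,X]+s$; the projection term is in $\Hz(x)$ and hence orthogonal to $s_j$, so $\hat g_j=\sigma^0_Y\Cov^{\gamma^\star}(s,s_j\mid x)$. These are the Fourier coefficients of $s$ in the basis $\{s_j\}$. Because $s$ lies in the complement, $\sum_j\hat g_js_j=\sigma^0_Y s$ (after centering), so the direction $\hat g$ corresponds to the function $s$. This identifies the worst-case direction with the one swept in \Cref{thm:set}.
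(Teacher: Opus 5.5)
Your argument is correct and follows the paper's route. The cumulant identity gives $\Cov^{\gamma^\star}(Y,s_j\mid x)$ as the derivative along each $s_j$, these are collected into the gradient $E_{P_1}[\hat g(X)]$, part~(ii) is Cauchy--Schwarz, and decomposing $t$ identifies $\hat g$ with $s$.

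The one addition is that you let $\gamma$ re-solve the bridge equation along $\eta$ and show $\partial_{\eta_j}\gamma=0$ as in \Cref{thm:set}(ii). The paper instead perturbs $P_1$ directly at fixed $\gamma^\star$, so no $\gamma$-path arises; since that path vanishes, the two derivatives coincide. Your Bessel-inequality check that $\hat g(x)\in\ell^2$ also supplies a detail the paper leaves implicit.
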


\begin{proof}[Proof]
Because $\mu_1$ is a linear functional (the mean) of the outcome law, its Gateaux
derivative in direction $s_j$ is $\Cov^{\gamma^\star}(Y,s_j\mid x)$ averaged over
$P_1(X)$, by the exponential-family cumulant identity used in \Cref{prop:general}.
Collecting these over $j$ gives the gradient $E_{P_1}[\hat g(X)]$; directions with
$\langle\eta,\hat g\rangle=0$ have zero derivative, proving (i). Part (ii) is the
Cauchy--Schwarz extremal characterization of the gradient. \hfill$\square$
\end{proof}

\begin{remark}[The unit-variance convention]
\label{rem:convention}
\Cref{prop:scalar} is stated on an orthonormal basis, so the scalar it produces is
the coefficient on the unit-variance direction $\hat g/\|\hat g\|$. Our $s$ of
\eqref{eq:proj} is a constant multiple of that direction,
\[
  s \;=\; c_s\,\frac{\hat g}{\|\hat g\|},
  \qquad
  c_s=\Big(E_{P_1}\big[\Var^{\gamma^\star}(s\mid X)\big]\Big)^{1/2},
\]
so the two conventions are related by $\kappa_{\text{unit}}=c_s\,\kappa$ and trace
the same identified set once $\bar\kappa$ is mapped accordingly. Nothing in
\Cref{prop:scalar} depends on the choice: the worst-case direction, and the fact
that a scalar captures the entire first-order sensitivity, are properties of the
subspace, not of its parameterization. What the choice does fix is the units in
which $\bar\kappa$ is elicited, and therefore how a substantive judgement about
residual drift is translated into a bound, which is why a reported
$\bar\kappa$ has to be accompanied by the mapping that produced it. We adopt the $t$-scale convention
because it keeps the tilt direction free of an estimated normalizer: under the
unit-variance convention the direction itself depends on
$E_{P_1}[\Var^{\gamma^\star}(s\mid X)]$, which must be estimated, which adds a term
to the influence function of \Cref{thm:eif} and moves the endpoints across
bootstrap replicates and sensitivity blocks that are meant to be comparable.
\end{remark}

\begin{remark}[Scope of \Cref{prop:scalar}]
\label{rem:scalar-scope}
\Cref{prop:scalar} is a \emph{first-order} (local) statement, and this is the
sense in which our $\bar\kappa$ is a local residual bound, the standard reading of
a sensitivity parameter \citep{robins2000sensitivity}. At larger residual
magnitudes, several residual directions can act together and the worst case over
the full functional ball $\{\|\eta\|\le\bar\kappa\}$ can exceed the scalar
endpoint by an $O(\bar\kappa^2)$ amount.
The relation is stated precisely below, since \Cref{thm:set}(i) is now
conditioned on it. Write $\mu_1^{\mathrm{scal}}(\bar\kappa)$ for the endpoint
traced by the scalar sweep and $\mu_1^{\mathrm{fun}}(\bar\kappa)$ for the
supremum over $\{\|\eta\|\le\bar\kappa\}$. Then
$\mu_1^{\mathrm{scal}}\le\mu_1^{\mathrm{fun}}$, with equality to first order, and
the gap is $O(\bar\kappa^2)$ with a constant given by the curvature of
$\eta\mapsto\mu_1(\eta)$ at $\eta=0$. An analyst who wants a set that is valid
against multi-directional residual drift should therefore read the reported
interval as an inner bound and inflate $\bar\kappa$ accordingly; the inflation
needed is second order in $\bar\kappa$ and negligible at the magnitudes used in
practice, but it is not zero. Alternatively one may
sweep a low-rank residual $\kappa\in\R^d$ by the coordinatewise extension of
\Cref{thm:set}, at the cost of a wider and less interpretable set. We prefer the
scalar default precisely because it is the direction the target functional
actually feels, and because a single interpretable sensitivity dimension is what
makes the analysis interpretable to subject-matter collaborators.
\end{remark}

\section{Estimation and Semiparametric Efficiency}
\label{sec:est}

We now turn the identified quantities of \Cref{thm:point,thm:set} into estimators.
Survey design weights $d_i$ are reinstated. The building blocks are three
nuisances: the cohort propensity $e(x)=\Prob(S{=}1\mid x)$; a source working model
$f_0(y,z\mid x;\beta)=f(y,z\mid x,S{=}0;\beta)$, used with fractional imputation
\citep{kim2011fractional} to evaluate tilted conditional expectations; and the
drift $\gamma$, solved from the bridge-matching equation.

\subsection{Estimation}
\label{sec:est-eq}

\paragraph{Evaluating tilted moments.}
The Stage-1 moment requires $\Ptilt{\omega}[b(Z)\mid x]$, an expectation over the
\emph{joint} conditional law of $(Y,Z)$ given $X$: the tilt exponent contains
$t(Y)$, so $Z$ cannot be held fixed. We therefore draw pairs
$\{(y^{(m)},z^{(m)})\}_{m=1}^{M}$ and, for any $h$,
\begin{equation}
  \Ptilt{\omega}[h\mid x]
  \;\approx\;
  \frac{\sum_{m} w^{(m)}\,h(y^{(m)},z^{(m)})}{\sum_{m} w^{(m)}},
  \qquad
  w^{(m)} = \exp\!\big\{\omega(y^{(m)},z^{(m)},x)\big\}\,
            \frac{f_0(y^{(m)},z^{(m)}\mid x)}{g(y^{(m)},z^{(m)}\mid x)},
  \label{eq:fi}
\end{equation}
where $g$ is the proposal from which the pairs are drawn. The draws are a
device for evaluating the integral, not a completion of the data: they are
generated afresh at each evaluation and never enter the analysis file. Item
nonresponse in $X$ or $Z$ is a separate matter, handled outside the estimator by
multiple imputation. Taking $g=f_0$ recovers
the plain form, in which $w^{(m)}=e^{\omega}$. That choice degrades as the tilt
grows: the variance of $\omega$ increases with $q$ and with $|\gamma|$, the
effective sample size of the weights falls, and the self-normalized ratio
acquires an $O(M^{-1})$ bias that does not vanish with $n$.
In the design of \Cref{sec:sim} the effective sample size of the plain
weights is about a seventh of $M$ and the bias in $\hat\gamma$ falls only as
$M^{-1}$, from $0.15$ at $M=40$ to $0.011$ at $M=800$, against a drift of
$0.5$. We therefore take $g$
to be the working family tilted to the current $\omega$, for which the weights are
constant in the Gaussian and other natural-parameter instances, so that
\eqref{eq:fi} is exact up to Monte Carlo error and the effective sample
size is the full $M$; the residual bias in $\hat\gamma$ is then about $0.004$
and does not move with $M$. \Cref{sec:sim} compares the two
proposals and reports the bias each leaves at a given number of draws.
Misspecification of $f_0$ is
tolerated through the double-robustness structure of \Cref{sec:est-eq}; the
proposal affects only Monte Carlo accuracy, not the estimand. We report the
effective sample size $(\sum_m w^{(m)})^2/\sum_m (w^{(m)})^2$ as a diagnostic.

\paragraph{Stage 1: drift.} With $\hat N_s=\sum_{i}\mathbf 1\{S_i{=}s\}d_i$, solve
the sample bridge-matching equation
\begin{equation}
  U_\gamma(\gamma)
  =\frac{1}{\hat N_1}\sum_i \mathbf 1\{S_i{=}1\}\,d_i\,
     \big(b(Z_i)-\Ptilt{\gamma}[b(Z)\mid X_i]\big)=0.
  \label{eq:stage1}
\end{equation}
With $q>1$ bridges this is over-identified and combined by an efficient GMM
weight; the over-identification statistic is the consistency check of
\Cref{thm:point}, and is discussed in \Cref{sec:inference}. In the Gaussian
instance \eqref{eq:stage1} is linear in $\gamma$ and the per-bridge roots are
available in closed form (\Cref{cor:gaussian}), which we use as the starting
value of the general solve.

\paragraph{Stage 2: drift-augmented target mean.} Let
$\tilde m_1(x)=\Ptilt{\hat\gamma[t+b]+\kappa s}[Y\mid x]$. The estimator is
\begin{equation}
  \hat\mu_1^{DR}(\kappa)
  =\frac{1}{\hat N_1}\sum_i \mathbf 1\{S_i{=}1\}\,d_i\,\tilde m_1(X_i)
  \;+\;
  \frac{1}{\hat N_0}\sum_i \mathbf 1\{S_i{=}0\}\,d_i\;
     r_e(X_i)\;\rho_{\hat\gamma,\kappa}(Y_i,Z_i\mid X_i)\;
     \big(Y_i-\tilde m_1(X_i)\big),
  \label{eq:aipw}
\end{equation}
where
\[
  r_e(x)=\frac{e(x)}{1-e(x)}\cdot\frac{\pi_0}{\pi_1}
        =\frac{dP_1(x)}{dP_0(x)},
  \qquad
  \rho_{\gamma,\kappa}(y,z\mid x)
  =\frac{e^{\gamma[t(y)+b(z)]+\kappa s(y,z,x)}}{C(\gamma,\kappa;x)}
  =\frac{dP_1(y,z\mid x)}{dP_0(y,z\mid x)} .
\]
The first term imputes the tilted outcome regression at each target unit; the
second reweights source residuals by the covariate density ratio and the
conditional tilt ratio. Two points of arithmetic require care. First, the augmentation term must be normalized by $\hat N_0$: with $r_e$
carrying the factor $\pi_0/\pi_1$, dividing instead by $\hat N_1$ leaves a
residual factor $\pi_0/\pi_1$, which has no effect when the two cohorts are of equal size. Second, $\rho$ is the \emph{normalized} conditional density ratio;
using the raw tilt factor $e^{\omega}$ without dividing by $C(\gamma,\kappa;x)$
breaks the cancellation on which \Cref{lem:dcdr} rests. Setting
$\hat\gamma=0,\kappa=0$ gives $\rho\equiv1$ and recovers the standard
covariate-shift AIPW estimator; the tilt factor is what carries the drift.

\subsection{Efficient influence function}

\begin{theorem}[Efficient influence function at anchored drift]
\label{thm:eif}
Fix $\gamma=\gamma^\star$ and $\kappa$, and treat $s$ as known. Under
\Cref{ass:codrift,ass:relevance,ass:overlap} and standard regularity,
$\mu_1(\kappa)$ is pathwise differentiable with efficient influence function
\begin{equation}
  \varphi(O)
  =\underbrace{\tfrac{\mathbf 1\{S=1\}}{\pi_1}\big(\tilde m_1(X)-\mu_1\big)}
      _{\text{(I) target imputation}}
  +\underbrace{\tfrac{\mathbf 1\{S=0\}}{\pi_0}\,r_e(X)\,
      \rho_{\gamma^\star,\kappa}(Y,Z\mid X)\big(Y-\tilde m_1(X)\big)}
      _{\text{(II) tilted propensity correction}}
  +\underbrace{c\,\varphi_\gamma(O)}
      _{\text{(III) bridge-moment correction}},
\end{equation}
where $\varphi_\gamma$ is the influence function of the Stage-1 drift
estimator, which decomposes as
$\varphi_\gamma=\varphi_\gamma^{\rm tgt}+\varphi_\gamma^{\rm src}$ according as
the contribution arises from the target bridge moment or from the source-fitted
bridge regressions at which \eqref{eq:stage1} is evaluated.
The efficiency bound is $\Var\{\varphi(O)\}$.
Because $\gamma$ is a scalar, $\varphi_\gamma$ is scalar-valued and the
multiplier is the constant
\[
  c \;=\; \partial_\gamma\mu_1
    \;=\; E_{P_1}\!\big[\Cov^{\gamma^\star}(Y,\,t{+}b\mid X)\big]
    \;=\; \sigma^0_Y\,E_{P_1}\!\big[\Cov^{\gamma^\star}(t,\,t{+}b\mid X)\big],
\]
which is $\sigma_Y(1+r)$ in the Gaussian instance of \Cref{cor:gaussian} and
$E_{P_1}[\Cov^{\gamma^\star}(Y,Y{+}Z\mid X)]$ in the binary instance. The
drift-transfer ratio $c_t(x)$ of \eqref{eq:c-general} is a different object: it is
the local rate at which the bridge moment translates into the outcome moment, and
it enters here only through the Stage-1 Jacobian inside $\varphi_\gamma$. The full
tangent-space derivation is given in Web Appendix~B.
The statement is asymptotic and at a known dependence structure between $Y$ and
$Z$ given $X$. Its finite-sample adequacy depends on the effective sample size of
the tilt weights, which \Cref{sec:inference} quantifies and which the analysis
should report.
\end{theorem}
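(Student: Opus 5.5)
The plan is the standard pathwise-derivative route. We write $\mu_1(\kappa)$ as a functional $\Psi(P;\gamma)$ of the observed law, evaluated at $\gamma=\gamma(P)$, the Stage-1 root of \eqref{eq:bridgematch-kappa} (here at $\kappa$ fixed, with $s$ known). We take a regular one-dimensional parametric submodel $P_\epsilon$ with score $g(O)$ and factor it as
\[
g=g_S(S)+g_X(X\mid S)+\mathbf 1\{S=0\}\,g_{YZ}(Y,Z\mid X)+\mathbf 1\{S=1\}\,g_{Z}(Z\mid X),
\]
which is the factorisation matching the three identified objects of \Cref{sec:ident}. The chain rule then splits the derivative as
\[
\tfrac{d}{d\epsilon}\mu_1(P_\epsilon)=\partial_\epsilon\Psi(P_\epsilon;\gamma^\star)+\partial_\gamma\Psi\cdot\tfrac{d}{d\epsilon}\gamma(P_\epsilon).
\]
The first piece should produce terms (I) and (II), and the second should produce $c\,\varphi_\gamma$, with $c=\partial_\gamma\mu_1$.

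For the first piece, with $\gamma$ frozen, $\Psi=\E[\tilde m_1(X)\mid S=1]$, where $\tilde m_1(x)=E_{P_0}[Y\rho_{\gamma^\star,\kappa}\mid x]$ and $\rho$ is the normalized ratio.
\begin{itemize}
\item Perturbing $P_1(X)$ and $\pi_1$ gives $\E[\mathbf 1\{S=1\}(\tilde m_1-\mu_1)g]/\pi_1$, which is term (I).
\item Perturbing $p_0(y,z\mid x)$ requires differentiating both numerator and normaliser $C$. The normalisation is what produces the centring: the result is $E_{P_0}[\rho(Y-\tilde m_1)g_{YZ}\mid x]$.
\item We average over $P_1(X)$, convert to a source expectation via $r_e=dP_1(x)/dP_0(x)$, and divide by $\pi_0$. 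This yields $E[\tfrac{\mathbf 1\{S=0\}}{\pi_0}r_e\rho(Y-\tilde m_1)g]$, which is term (II).
\end{itemize}
Both terms have conditional mean zero in the appropriate directions, so they lie in the tangent space.

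For the second piece, we differentiate $\mu_1$ in $\gamma$ using the cumulant identity $\partial_\gamma E^{\gamma}[h]=\Cov^{\gamma}(h,t+b)$ from the proof of \Cref{prop:general}. This gives $c=E_{P_1}[\Cov^{\gamma^\star}(Y,t+b\mid X)]$. Using $Y=m^0_Y+\sigma^0_Y t$ gives the second form of $c$, and the Gaussian value $\sigma_Y(1+r)$ follows from \Cref{cor:gaussian}. For $d\gamma/d\epsilon$ we apply the implicit function theorem to the population version of \eqref{eq:stage1}, whose Jacobian is $J=-E_{P_1}[\Cov^{\gamma^\star,\kappa}(b,t+b\mid X)]$; it is bounded away from zero by \Cref{ass:relevance}. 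The pathwise derivative of the moment has two parts, which give the decomposition in the statement:
\begin{itemize}
\item the target bridge moment, perturbed through $g_Z$ and $g_X$ at $S=1$, gives $\varphi^{\rm tgt}_\gamma=-J^{-1}\mathbf 1\{S=1\}\pi_1^{-1}\{b(Z)-\Ptilt{\gamma^\star}[b\mid X]\}$;
\item the source tilted bridge regression gives, by exactly the calculation that produced (II) but with $b$ in place of $Y$, $\varphi^{\rm src}_\gamma=J^{-1}\mathbf 1\{S=0\}\pi_0^{-1}r_e\rho\{b(Z)-\Ptilt{\gamma^\star}[b\mid X]\}$.
\end{itemize}
Multiplying by $c$ gives term (III).

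Because the model for the observed data is locally nonparametric in $(P(S),P(X\mid S),p_0(y,z\mid x),p_1(z\mid x))$ once $\kappa$ and $s$ are fixed, the tangent space is all of $L_2^0$. The influence function found is therefore the unique one, hence efficient, and the bound is $\Var\{\varphi\}$. There is one caveat: the tilt model imposes a restriction linking $p_1(z\mid x)$ to $p_0$, since only the mean of $b$ is matched. I would therefore verify that $\varphi$ lies in the restricted tangent space as well, or argue that the bridge equation is a just-identifying moment for a scalar $\gamma$ when $q=1$. The main obstacle is this tangent-space characterisation and the regularity needed to interchange derivatives and tilted integrals. For the interchange I would invoke the dominated-convergence bound of \Cref{ass:overlap}, through \Cref{lem:tiltderiv}. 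For the tangent space I would first treat $q=1$, and for $q>1$ I would replace $J^{-1}$ by the efficient GMM weighting.
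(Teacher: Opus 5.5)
Your gradient computation matches the paper's Web Appendix~B term for term. That covers the chain rule through $\gamma$, term (I) from the $(S,X)$ law, term (II) from differentiating the normalized tilt and changing measure with $r_e$, and $\varphi_\gamma=-J^{-1}\Psi$ with exactly the paper's target-side and source-side pieces. The gap is in the efficiency step.

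Under \Cref{ass:codrift}, with $\kappa$ and $s$ fixed, the observed-data model is not locally nonparametric. \eqref{eq:tilt} fixes the entire target bridge law, $p_1(z\mid x)=\int p_0(y,z\mid x)\,\rho_{\gamma,\kappa}(y,z\mid x)\,dy$, as a function of $p_0$ and the scalar $\gamma$, so $p_1(z\mid x)$ is not a free block in your factorization. A perturbation $g$ of $p_0(y,z\mid x)$, together with a drift direction $h\in\R$, has score
\[
  \mathbf 1\{S=0\}\,g+\mathbf 1\{S=1\}\big\{E_{P_1}[g\mid Z,X]-E_{P_1}[g\mid X]+h\,u(Z,X)\big\},
  \qquad u=E_{P_1}[t+b\mid Z,X]-E_{P_1}[t+b\mid X].
\]
The tangent space $\T$ is the closed span of these scores and the $(S,X)$ scores, and it is a proper subspace of $L_2^0$.

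The restriction bears on the whole target bridge law, not ``only the mean of $b$''. Your $q=1$ fallback also fails for the mean alone, because that condition is imposed at every $x$ with one common $\gamma$; this is the over-identification the paper tests across strata. So ``unique gradient, hence efficient'' is not available. For the same reason term (II) alone is generally not in $\T$: a source score without its induced target part is not a model score.

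What your argument does establish is that $\varphi$ is a gradient in the tilt model, and you can check this directly against the displayed scores. Along $g$, the target and source pieces of term (III) cancel, using the change of measure $E_{P_0}[r_e\rho\,\cdot\,]=E_{P_1}[\,\cdot\,]$ and $E_{P_1}[b\mid X]=B(\gamma^\star;X)$. What remains is $E_{P_1}[(Y-\tilde m_1)g]$ from term (II), which is the derivative of $\mu_1$ along $g$. Along $h$, the target piece returns $-cJ^{-1}E_{P_1}[\Cov_{P_1}(b,t+b\mid X)]\,h=c\,h$.

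Efficiency additionally needs $\varphi\in\T$. Write $G=\pi_0^{-1}r_e\rho\{(Y-\tilde m_1)+cJ^{-1}(b-B)\}$ for the function multiplying $\mathbf 1\{S=0\}$ in $\varphi$. Membership then means $-cJ^{-1}\pi_1^{-1}\{b(Z)-B(\gamma^\star;X)\}=E_{P_1}[G\mid Z,X]-E_{P_1}[G\mid X]+h\,u$ for some scalar $h$.

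The paper's proof asserts this membership rather than verifying it, and it does not hold in general. Take the single-bridge Gaussian instance at $\kappa=0$. There $E_{P_1}[G\mid Z,X]$ equals $r_e/\pi_0$ times $E_{P_1}[\rho\mid Z,X]$ times a linear function of $Z$ with slope $\sigma_Y(r-1)/\sigma_Z$. The factor $E_{P_1}[\rho\mid Z,X]$ is exponential-linear in $Z$ when $\gamma^\star\neq0$. Every other term in the identity is at most linear in $Z$, so the identity fails whenever $|r|<1$.

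Your route therefore delivers two things. First, $\varphi$ is the efficient influence function of the identifying functional in the nonparametric model, and there your tangent-space argument is exactly right. Second, under the tilt restriction $\Var\{\varphi\}$ is the asymptotic variance of a regular estimator. It is thus an upper bound on the efficiency bound, not the bound itself.
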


Term~(III) is absent from ordinary AIPW: because $\gamma^\star$ is an estimated
nuisance identified through the bridge moment, its uncertainty propagates to
$\mu_1$ and must be accounted for.
The sign is positive because $\hat\mu_1$ inherits the error in $\hat\gamma$ with
coefficient $\partial_\gamma\mu_1=c$: term~(III) propagates that error rather
than removing it. \Cref{sec:sim} confirms both the coefficient and its sign by
regressing the pilot estimator on the error in $\hat\gamma$.

Both parts of $\varphi_\gamma$ have to be carried. The target-side part is
the one visible in \eqref{eq:stage1}; the source-side part enters because the
moment is evaluated at fitted bridge regressions, and an implementation that
reads the influence function off the displayed estimating equation will omit it.
Since term~(III) multiplies $\varphi_\gamma$ by $c$, the omission changes both
the size of the endpoint standard error and its correlation with term~(II).

\begin{remark}[Why the source-side contribution cannot be dropped]
\label{rem:psi-src}
What the omission costs is visible without any numerical work. The
target-side part is a function of the target sample alone, so it is uncorrelated
with term~(II), which is a source-sample average. Dropping the source-side part
therefore gives
\[
  \Var\big(\mathrm{(II)}\pm c\,\varphi_\gamma^{\rm tgt}\big)
  \;=\;\Var\big(\mathrm{(II)}\big)+c^2\,\Var\big(\varphi_\gamma^{\rm tgt}\big),
\]
which does not depend on the sign of term~(III). A variance that is insensitive
to the sign of one of its own terms is not carrying that term's correlation with
the rest, so the resulting standard error can be right only by coincidence.
Retaining $\varphi_\gamma^{\rm src}$ restores a nonzero cross-term, and with it
the sign.

Which representation of $\varphi_\gamma^{\rm src}$ to use is a
finite-sample question rather than an asymptotic one: a parametric plug-in
through the fitted coefficients and a fully nonparametric conditional residual
have the same limit and differ at the sample sizes an analysis actually has. We
fix it numerically. \Cref{sec:sim-psi} scores both, together with the
target-side-only form, against the sampling variability of the estimator, and
every standard error reported in this paper uses the form selected there.
\end{remark}

The residual direction $s$ is itself estimated, through the projection
coefficient $\lambda$ of \eqref{eq:proj}, and its estimation error requires a
statement that depends on $\kappa$. Perturbing $s\mapsto s+\epsilon\delta$ with
$\delta\in\Hz(x)$ the projection error, and differentiating the
residual-augmented bridge equation \eqref{eq:bridgematch-kappa} implicitly, gives
\begin{equation}
  \frac{\partial\mu_1(\kappa)}{\partial\epsilon}\bigg|_{\epsilon=0}
  =\kappa\;E_{P_1}\!\Big[
     \Cov^{\omega}\!\big(\,Y-c_b(X)b(Z),\;\delta\;\big|\;X\big)\Big],
  \qquad
  c_b(X)=\frac{\Cov^{\omega}(Y,t{+}b\mid X)}{\Cov^{\omega}(b,t{+}b\mid X)},
  \label{eq:s-sensitivity}
\end{equation}
the subtraction of $c_b(X)b(Z)$ being the partial cancellation contributed by the
$\gamma$ path, whose own derivative is
$-\kappa\,\Cov^{\omega}(b,\delta\mid x)/\Cov^{\omega}(b,t+b\mid x)$.

Two consequences. At $\kappa=0$ the right side of \eqref{eq:s-sensitivity}
vanishes identically, so the benchmark $\hat\mu_1(0)$ is first-order immune to
estimation of $s$ and $\varphi$ as displayed is its efficient influence function.
At the endpoints $\kappa=\pm\bar\kappa$ it does not vanish in general: since
$\delta$ is a function of $(Z,X)$, the expression is zero for all $\delta$ only if
$\Pi^{\omega}[Y\mid Z,X]=c_b(X)b(Z)$ up to a function of $X$, which holds for a
single linear bridge under the Gaussian working model but fails when $q>1$ with
unequal bridge-outcome coefficients. The
magnitude is $O(\bar\kappa)\cdot\|\hat s-s\|$. Refitting the projection
inside every resample, rather than holding it at its full-sample value, changes
the endpoint standard error by about one part in a thousand at
$\bar\kappa=0.30$ in the designs of \Cref{sec:sim}, which is two orders of
magnitude below the gap between either resampling estimator and the sampling
variability it is estimating. We therefore retain $\varphi$ as displayed. It is the
efficient influence function at $\kappa=0$ with $s$ estimated, and at the
endpoints with $s$ known; endpoint inference does not rely on it, for the
separate reason given in \Cref{sec:inference}.

\subsection{Orthogonality, rate robustness, and double robustness}

\begin{theorem}[Neyman orthogonality and rate robustness]
\label{thm:orth}
The stacked moment $(U_\gamma,\hat\mu_1^{DR})$ is Neyman-orthogonal to the
nuisances $(e,f_0)$: its Gateaux derivative in each nuisance direction vanishes at
the truth. Consequently the one-step estimator
$\hat\mu_1(\kappa)=\hat\mu_1^{DR}(\kappa)+\Pn\hat\varphi(\kappa)$ is
$\sqrt n$-consistent and asymptotically normal, attaining the bound of
\Cref{thm:eif} provided each nuisance is estimated at $o_p(n^{-1/4})$ rate, even
with flexible machine-learning estimators \citep{chernozhukov2018double}.
\end{theorem}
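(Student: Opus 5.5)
We prove \Cref{thm:orth} by computing Gateaux derivatives of the population version of the stacked moment along smooth paths $e_\epsilon=e+\epsilon h_e$ and $f_{0,\epsilon}=f_0+\epsilon h_f$. The parameters are held at the truth $(\gamma^\star,\mu_1(\kappa))$. We then invoke the standard cross-fitting argument of \citet{chernozhukov2018double}, whose remainder is a product of nuisance errors.

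Write the population Stage-2 moment as
\[
\Psi(\mu;e,f_0,\gamma)=E\big[\tfrac{\mathbf 1\{S=1\}}{\pi_1}(\tilde m_1(X)-\mu)\big]+E\big[\tfrac{\mathbf 1\{S=0\}}{\pi_0}r_e(X)\rho_{\gamma,\kappa}(Y,Z\mid X)(Y-\tilde m_1(X))\big],
\]
where $\tilde m_1$ and $\rho$ are both computed from $f_0$.

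\emph{Propensity direction.} Perturbing $e$ changes only $r_e$. The resulting derivative is $E_{P_0}[\partial r_e(X)\,E_{P_0}\{\rho(Y-\tilde m_1)\mid X\}]$. The inner expectation equals $E_{P_1}[Y\mid X]-\tilde m_1(X)$, because $\rho=dP_1(y,z\mid x)/dP_0(y,z\mid x)$ is the \emph{normalized} density ratio. It is therefore zero at the truth $f_0$. This is exactly the cancellation flagged after \eqref{eq:aipw}.

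\emph{Outcome-model direction.} Perturbing $f_0$ moves $\tilde m_1$, $\rho$ and $C$. The $\tilde m_1$ perturbation enters term (I) with weight $+1$ under $P_1(X)$. It enters term (II) with weight $-E_{P_0}[r_e\rho\mid X]=-1$ under $P_0$ reweighted to $P_1(X)$, so the two contributions cancel. The perturbation of $\rho$ multiplies $E_{P_0}[\cdot\,(Y-\tilde m_1)\mid X]$-type residual terms. Because $\tilde m_1$ is itself the tilted mean, these vanish: the derivative of $C$ produces a centred factor whose covariance with $Y-\tilde m_1$ is annihilated after normalization. We will verify this directly using the tilt-derivative lemma of Web Appendix~A (\Cref{lem:tiltderiv}) and the dominated-convergence bounds of \Cref{ass:overlap}.

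\emph{Stage-1 moment and its coupling.} For $U_\gamma$, the $e$ direction does not enter. The $f_0$ direction does enter, through $\Ptilt{\gamma}[b\mid X]$. Here we use the stacking: the first-order effect of $f_0$ on $U_\gamma$ moves $\hat\gamma$, and that movement passes to $\mu_1$ through $c=\partial_\gamma\mu_1$. The orthogonal score is therefore the one-step correction $\varphi$ of \Cref{thm:eif}, which includes term (III), $c\,\varphi_\gamma$. Its source-side part $\varphi_\gamma^{\rm src}$ is precisely the correction that removes the $f_0$-derivative of the Stage-1 moment. We will show that the derivative of $U_\gamma$ in direction $h_f$, multiplied by $c/\partial_\gamma U_\gamma$, equals minus the derivative of $E\varphi_\gamma^{\rm src}$. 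This follows because $\varphi_\gamma^{\rm src}$ is the residual-form (AIPW-type) correction of the bridge regression, which is mean-zero and orthogonal by the same conditional-mean argument used for term (II).

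\emph{Asymptotics.} With orthogonality in hand, expand the one-step estimator as $\Pn\varphi+R_n$. Under cross-fitting, $R_n$ is bounded by second-order terms: $\|\hat e-e\|\,\|\hat f_0-f_0\|$, the squared error $\|\hat f_0-f_0\|^2$ coming from the nonlinearity of the tilt in $f_0$ (controlled by \Cref{ass:overlap} and overlap bounding $r_e$), and $(\hat\gamma-\gamma^\star)^2$, which is $O_p(n^{-1})$. Each term is $o_p(n^{-1/2})$ under $o_p(n^{-1/4})$ rates. Asymptotic normality with variance $\Var\varphi$ then follows from the CLT, and Stage-1 $\sqrt n$-consistency follows from \Cref{ass:relevance}, which bounds the Jacobian away from zero.

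\emph{Main obstacle.} We expect the hardest step to be the $f_0$ direction through the Stage-1 bridge moment. The concern is that $\varphi_\gamma^{\rm src}$ exactly neutralizes the dependence of $\Ptilt{\gamma}[b\mid X]$ on the joint $(Y,Z)$ law, including the $Y$–$Z$ dependence that \Cref{rem:stage1-scope} says is unprotected. If exact neutralization fails, we would restrict the orthogonality claim to the marginal bridge regression. The claim would then hold with the dependence structure treated as known, as in the statement of \Cref{thm:eif}.
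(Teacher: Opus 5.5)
Your proof is correct in substance, but it takes a different route from the paper's. The paper's proof (end of Web Appendix~B) is a single appeal. It takes the stacked moment to be $(\Psi(\gamma;O),\varphi(O)+\mu_1)$, with the orthogonalized Stage-1 function \eqref{eq:psi} in place of the displayed $U_\gamma$. It asserts that the Gateaux derivatives vanish ``by construction of $\varphi$ as the tangent-space projection'', then cites the usual product-of-errors remainder.

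You instead verify orthogonality direction by direction. Everything reduces to three conditional identities delivered by the \emph{normalized} ratio:
\[
E_{P_0}[\rho\mid X]=1,\qquad E_{P_0}[\rho\,(Y-\tilde m_1)\mid X]=0,\qquad E_{P_0}[\rho\,(b-B)\mid X]=0,
\]
together with the fact that, with $s$ fixed, a perturbation $f_0(1+\epsilon g)$ gives $\partial\rho=-\rho\,E^{\omega}[g\mid X]$, an $X$-measurable factor that the residuals annihilate. This buys three things. It does not depend on the tangent-space derivation of Web Appendix~B, since orthogonality is a property of the moment and needs no efficiency claim. It shows that terms (I)+(II) are orthogonal on their own at fixed $\gamma$, so the only non-orthogonal ingredient is the target-side bridge moment. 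And it shows exactly where an unnormalized tilt would break.

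You are right that the displayed $U_\gamma$ is not orthogonal in $f_0$. The two constructions handle this equivalently. The paper orthogonalizes Stage~1 itself. You concentrate $\gamma$ out and orthogonalize only the $\mu$-score: the $\gamma$-derivative of $E[\mathrm{(I)}+\mathrm{(II)}]$ is $c$, and that of $E[c\,\varphi_\gamma]=-cJ^{-1}E[\Psi]$ is $-c$, so they cancel.

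Four steps need repair, none fatal.

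First, the Stage-1 cancellation identity is stated wrongly. With $\varphi_\gamma=-J^{-1}\Psi$ and $J=\partial_\gamma E[U_\gamma]$, the $f_0$-derivative of $E[c\,\varphi_\gamma^{\rm src}]$ is $+cJ^{-1}\partial_{f_0}E[U_\gamma]$. Meanwhile the root moves $\mu_1$ by $-cJ^{-1}\partial_{f_0}E[U_\gamma]$. These are what cancel; the identity you propose lacks the factor $c$ and has the opposite sign.

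Second, your $\rho$ computation is valid only with $s$ held fixed. If $s$ is recomputed from the perturbed joint law, then at $\kappa\neq0$ the term $\kappa\{\partial s-E^{\omega}[\partial s\mid X]\}$ enters $\partial\rho/\rho$. It is not $X$-measurable and produces the non-vanishing term \eqref{eq:s-sensitivity}. The theorem, like \Cref{thm:eif}, must therefore be read with $s$ known or at $\kappa=0$.

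Third, your ``main obstacle'' is not one. The same identities with $h=b$ show that the nonparametric source term of \eqref{eq:psi} cancels \emph{every} direction $g(y,z,x)$ of the joint law, the $Y$--$Z$ dependence included. \Cref{rem:stage1-scope} concerns a misspecified dependence, which is a consistency issue rather than first-order sensitivity to estimation error. No retreat to the marginal bridge regression is needed, nor would it be coherent, since $B$ is a functional of the joint law.

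Fourth, because $U_\gamma$ is not orthogonal in $f_0$, solving it with a flexible $\hat f_0$ gives $\hat\gamma-\gamma^\star=O_p(n^{-1/2}+\|\hat f_0-f_0\|)$, not $O_p(n^{-1/2})$. Assumption~\ref{ass:relevance} supplies an invertible Jacobian, not the rate. Your conclusion for $\hat\mu_1$ still holds, because the score is orthogonal jointly in $(\gamma,e,f_0)$. Then $(\hat\gamma-\gamma^\star)^2$, and the cross terms $|\hat\gamma-\gamma^\star|\,(\|\hat e-e\|+\|\hat f_0-f_0\|)$ that you omitted, are all $o_p(n^{-1/2})$ under $o_p(n^{-1/4})$ rates. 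A $\sqrt n$-consistent $\hat\gamma$, however, requires solving $\Pn\hat\Psi(\gamma)=0$, which is what the paper's stacked moment does.
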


\begin{remark}[Scope of \Cref{thm:orth} in relation to the simulation and the
application]
\label{rem:orth-untested}
\Cref{thm:orth} is a theoretical guarantee that we do not currently probe
numerically. Every nuisance in \Cref{sec:sim} is a low-dimensional parametric fit
converging at $n^{-1/2}$, so the $o_p(n^{-1/4})$ condition holds trivially and the
protection the theorem provides never binds; the ``misspecified'' cells of the
robustness map drop covariates from a linear model, which changes the fit but not
the function class. A design with machine-learning nuisances and cross-fitting
would exercise the claim, and none is reported here.
The application is also outside the theorem's hypotheses. The cohort
propensity there is an $\ell_1$-regularized fit obtained without cross-fitting,
whose $n^{-1/2}$ behaviour we assume rather than derive, so its error does not
satisfy the conditions under which \Cref{thm:orth} licenses the one-step
correction. We therefore do not appeal to \Cref{thm:orth} for the standard
errors an application of this kind reports; those rest on the resampling scheme of
\Cref{sec:inference} and on that assumption.
\end{remark}

\begin{lemma}[Drift-conditional double robustness]
\label{lem:dcdr}
Suppose the bridge model (\Cref{ass:codrift,ass:relevance}) is correct, so
$\hat\gamma\to\gamma^\star$. Then $\hat\mu_1^{DR}$ is consistent for $\mu_1$ if
\emph{either} the cohort propensity $e(x)$ \emph{or} the source outcome law
$f_0(y\mid x)$ is correct, not necessarily both.
\end{lemma}

\begin{lemma}[Drift-identification robustness]
\label{lem:gammarobust}
The Stage-1 estimator $\hat\gamma$ from \eqref{eq:stage1} is a functional of the
target bridge margin and of the source \emph{joint} law of $(Y,Z)$ given $X$,
entering the latter only through the tilted bridge moment
$\Ptilt{\gamma}[b(Z)\mid x]$. In particular it does not involve the cohort
propensity $e$, and it involves the outcome only through the fixed loading
$t(\cdot)$ inside the tilt weight, never through a fitted outcome regression.
Hence $\hat\gamma$ does not inherit the
error of a fitted conditional \emph{mean} model for $Y$, and the three nuisances
$(\gamma,e,f_0)$ are decoupled in the sense of \Cref{tab:robust}.
The insulation is not exact, and \Cref{rem:stage1-scope} states what
remains.
\end{lemma}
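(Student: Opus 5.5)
The plan is to read $\hat\gamma$ off the sample bridge-matching equation \eqref{eq:stage1} as a Z-estimator, and to track which population objects enter it. Write the population version of the moment as
\[
U(\gamma)=E_{P_1}\big[\,b(Z)-\Ptilt{\gamma}[b(Z)\mid X]\,\big].
\]
Inspecting the summand shows it involves only three things: the target bridge value $b(Z_i)$ at $S_i=1$; the target covariate law, through the design-weighted average over $S=1$ units; and the tilted bridge moment $\Ptilt{\gamma}[b(Z)\mid x]$. By the definition of the tilted expectation, that moment is the ratio $E_{P_0}[b(Z)e^{\gamma[t(Y)+b(Z)]}\mid x]/E_{P_0}[e^{\gamma[t(Y)+b(Z)]}\mid x]$, which is a functional of the source joint law $p_0(y,z\mid x)$ alone. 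Neither $e(x)$ nor $\pi_s$ appears, because the average is taken directly over target units rather than reweighted from the source. The root is unique by \Cref{ass:relevance} and exists by \Cref{ass:overlap}, exactly as in \Cref{thm:point}. Consistency then follows from the standard Z-estimator argument: uniform convergence of the sample moment to $U$ on the compact set $\Gamma$ (using the integrability in \Cref{ass:overlap}), together with the strict monotonicity $\partial_\gamma U=-E_{P_1}[\Cov^{\gamma}(b,t+b\mid X)]\le-\varepsilon$. This establishes that $\hat\gamma$ is a functional of the target bridge margin and the source joint law, entering the latter only through the tilted bridge moment, and that it is free of $e$.

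For the outcome-insulation claim, observe that $Y$ enters $\Ptilt{\gamma}[b\mid x]$ only through the fixed loading $t(\cdot)$, with its constants frozen from the source primary specification as in \Cref{sec:tilt}. It does not enter through any fitted regression $E_{P_0}[Y\mid x]$. To make this precise I would compute the Gateaux derivative of $U$ in the direction of a perturbation of the source law that changes only the conditional mean model for $Y$. By the cumulant identity, the effect on the tilted bridge moment passes through $\Cov^{\gamma}(b,\cdot)$ and the tilt weight $e^{\gamma t}$. The first-order effect of a pure mean-model error then drops out once the induced change in the joint law of $(Y,Z)$ is accounted for: in the Gaussian instance the tilted bridge mean is $m_{Z0}+\gamma\sigma_Z(1+r)$, which is free of $m_0(x)$ because $t$ is centred at $m_Y^0$ held fixed. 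This gives ``does not inherit the error of a fitted conditional mean model''. Decoupling in the sense of \Cref{tab:robust} then follows by combining this with \Cref{lem:dcdr}: $\gamma$ is pinned by the bridge margin, and given $\gamma$, either $e$ or $f_0$ suffices.

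The main obstacle is stating the insulation honestly. The tilted bridge moment depends on the dependence between $Y$ and $Z$ given $X$, for example through $r$ in the Gaussian case. A mean-model error that is correlated with the $Y$–$Z$ coupling, or an error in $f_0$'s dependence structure, does move $\hat\gamma$. I would therefore restrict the claim to perturbations of $E_{P_0}[Y\mid x]$ that leave the conditional joint centred law of $(t(Y)-E[t\mid x],\,b(Z))$ unchanged, and prove invariance for that class by a location-family argument. Under a pure location shift in $Y$ with the loading held fixed, the tilt factor $e^{\gamma t}$ changes by a multiplicative constant in $x$, which cancels in the ratio. Everything outside that class is deferred to \Cref{rem:stage1-scope}, as the statement indicates.
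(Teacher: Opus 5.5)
Your first paragraph is the paper's proof. Web Appendix~C argues by inspection that the Stage-1 moment touches the source law only through $\Ptilt{\gamma}[b(Z)\mid x]$, and that neither $e$ nor any fitted mean $m$ is evaluated in it. The population moment is therefore unchanged by misspecifying $m$ or $e$ \emph{given} the source joint law of $(Y,Z)\mid X$ over which the tilted moment is taken. The paper then obtains $\hat\gamma\to\gamma^\star$ from the uniqueness in \Cref{thm:point} and standard $Z$-estimation.

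Where you genuinely differ is in how the mean model is handled. The paper never lets it enter the joint law: it holds that law at the truth, and attributes the residual effect in an implementation to the fitted residual covariance. Its exact grid computation and the $0.0024$ displacement in \Cref{tab:robsim} separate these two cases. You instead let the mean component of the working $f_0$ be wrong. You then show the tilted bridge moment is \emph{exactly} invariant when the error is a pure location shift in $Y$ with the centred joint law fixed. The reason is that $t$ is affine, so the shift multiplies $e^{\gamma t}$ by a factor constant in $(y,z)$ given $x$, and that factor cancels in the ratio.

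This buys a sharper account of why the only surviving channel is the residual variance and the $Y$--$Z$ correlation. It costs generality:
\begin{itemize}
\item It needs $t$ affine in $y$ and the scale $\sigma^0_Y$ frozen.
\item The centring is irrelevant, since it cancels whatever its value. Your remark that invariance holds ``because $t$ is centred at $m^0_Y$ held fixed'' points at the wrong constant.
\item It does not reach the binary instance, where a mean error is not a location shift.
\end{itemize}
The paper's inspection argument covers every working family, but only at the correct joint law.

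Finally, the Gateaux-derivative framing in your second paragraph is unnecessary. The location-family invariance is exact, not first order, and it is your third paragraph that actually carries the argument.
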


\begin{remark}[Scope of \Cref{lem:gammarobust}]
\label{rem:stage1-scope}
The tilted bridge moment is an expectation over $(Y,Z)\mid X$, so evaluating it
requires a working joint law: the exponent contains $t(Y)$ and $Z$ cannot be
integrated out without it. \Cref{lem:gammarobust} is therefore not the claim that
Stage~1 uses the bridge margin $p_0(z\mid x)$ alone. It is the sharper and more
useful claim that Stage~1 never fits a regression of $Y$ on $X$: the outcome
enters only as observed data through a known loading, at source units where $Y$
is recorded. Consequently a misspecified conditional mean $m_0(x)$ leaves
$\hat\gamma$ unaffected, the case verified in row three of \Cref{tab:robust},
while a misspecified \emph{dependence} between $Y$ and $Z$ given $X$ does
propagate to $\hat\gamma$, since it changes the tilted moment. In the Gaussian
instance this is the residual correlation $r$, a modelling commitment on the
same footing as the choice of bridge set. \Cref{sec:sim-offmodel} varies it
across cohorts and reports what the identified set does.

That channel leaves a residue even when only the conditional \emph{mean} is
misspecified, because the working joint law from which the tilted moment is
evaluated is built around that mean. The correct statement is therefore not that
$\hat\gamma$ is exactly insulated but that Stage~1 never regresses $Y$ on $X$, so
the error of such a regression does not enter directly; what survives is an
indirect dependence through the proposal. \Cref{sec:sim-rob} reports its size in
the designs considered there, where it is small relative to $\gamma^\star$ but
not zero.
The same channel carries into the reported width. Because
$\Var^{\gamma^\star}(s\mid x)$ is computed under the assumed dependence, a
dependence that differs across cohorts moves both $\hat\gamma$ and the
half-width, and \Cref{sec:sim-offmodel} reports the size of that movement and its
effect on coverage.
\end{remark}

\Cref{lem:dcdr,lem:gammarobust} together give the precise robustness picture,
summarized in \Cref{tab:robust}: the estimator is doubly robust in $(e,f_0)$
\emph{conditional} on the bridge-identified drift, and the drift itself is
protected by using only quantities the target actually reports. We do not claim,
and do not need, joint triple robustness
\citep{robins1994estimation,bang2005doubly}. Proofs are given in Web Appendix~C,
where the four rows of \Cref{tab:robust} are also verified numerically.

\begin{table}[t]
\centering
\caption{Robustness of $\hat\mu_1^{DR}$.}
\label{tab:robust}
\begin{tabular}{cccc}
\toprule
$\hat\gamma$ (bridge moment) & $e$ (propensity) & $f_0$ (outcome mean) &
$\hat\mu_1^{DR}$ consistent? \\
\midrule
\checkmark & \checkmark & \checkmark & \checkmark\ (efficient) \\
\checkmark & \checkmark & $\times$   & \checkmark\ (\Cref{lem:dcdr}) \\
\checkmark & $\times$   & \checkmark & \checkmark\ (\Cref{lem:dcdr}) \\
\checkmark & $\times$   & $\times$   & $\times$ \\
$\times$   & n/a         & n/a         & $\times$ \\
\bottomrule
\end{tabular}

\vspace{2pt}
{\footnotesize\textit{Note.} A checkmark denotes correct specification. The outcome column refers to
misspecification of the conditional \emph{mean} model for $Y$; see
\Cref{rem:stage1-scope} for the dependence structure, which is not
protected.}
\end{table}

\subsection{Inference for the identified set}
\label{sec:inference}

Sweeping $\kappa\in[-\bar\kappa,\bar\kappa]$ (warm-started across the grid), the
endpoint estimates are combined into an Imbens--Manski interval
\citep{imbens2004confidence}, whose nominal level is theirs. Because $\kappa$ is
anchored rather than estimated, the interval carries no
sensitivity-parameter variance term.
\Cref{sec:sim-ess} reports the conditions under which that asymptotic
level is approached in finite samples and those under which it is not.

With $q>1$ bridges, the mutual agreement of the bridge moments is a refutable
restriction. The statistic is Hansen's over-identification statistic
$J=n_1\,\hat g^{\!\top}\hat\Omega^{-1}\hat g$, referred to $\chi^2_{q-1}$, where
$\hat g$ is the vector of sample bridge moments at $\hat\gamma$ and $\hat\Omega$
estimates their covariance. The covariance must carry the source-side estimation
error and not only the spread of the target-side residuals: the moments depend on
the source sample through the fitted bridge regressions, the loading constants and
the residual covariance, and omitting that contribution understates $\Var(\hat g)$
by a factor of about four, so that the test rejects a true null
almost half the time (\Cref{tab:overid}). \Cref{sec:sim} reports the size of
each form and the size of the discrepancy. We obtain $\hat\Omega$ by a two-sample bootstrap
in which both cohorts are resampled; \Cref{sec:sim} reports the size of this form
against the two alternatives it replaces. A statistic formed instead from the
spread of the per-bridge roots, treating them as independent, is not correctly
sized: the roots share the source and target samples and are strongly positively
correlated.
That statistic tests agreement across bridge coordinates. The second
restriction of \Cref{thm:point}, that one $\gamma$ serves in every covariate
stratum, calls for a second statistic. Partition the covariates into $L$ strata
fixed before the analysis and observed in both cohorts, solve \eqref{eq:stage1}
within each, and refer the contrasts of the resulting roots against their mean
to $\chi^2_{L-1}$, with their covariance again obtained by resampling both
cohorts: the roots share the source sample through the fitted bridge regressions
and the loading constants, so treating them as independent understates it by the
same mechanism as above. The bridges are combined within each stratum by the
weight estimated on the full sample, so that a stratum root is the same
functional of its own units that $\hat\gamma$ is of all of them. The two
statistics are reported separately rather than pooled; \Cref{sec:sim-bridge}
shows that neither responds to the alternative the other is built for, and with
both reported the level should be split between them.

Where a primary bridge anchors the drift and the others are reserved for
the check, as \Cref{sec:ident} recommends, the bridge-coordinate statistic takes
a third form: the moments of the held-out bridges, evaluated at the drift solved
from the primary bridge, referred to $\chi^2$ on as many degrees of freedom as
there are held-out bridges, with their covariance again from a two-sample
bootstrap. \Cref{sec:sim-bridge} calibrates it.

Endpoint variances are obtained by a subject- or cluster-level bootstrap that
refits, within each replicate, the propensity, the source working model, the
projection coefficient defining $s$, and the drift $\hat\gamma$; the loading
constants of \Cref{sec:tilt} are \emph{not} refitted, since they define the units
of $\bar\kappa$ and every replicate must estimate the same endpoints. The
bootstrap also propagates the design variance.
The observed law of \Cref{sec:ident} has $(S,X,Z)$ recorded for every unit,
so item nonresponse is outside the structure this paper analyses. Nothing below
propagates the uncertainty that imputing missing covariates or bridges would
introduce.

Under the conditions of \Cref{thm:orth} the influence-function variance
estimator is consistent for the asymptotic variance of the endpoints. Its
finite-sample behaviour is a separate question, and the relevant summary is not
the average of the estimated standard error but its \emph{median}: an
Imbens--Manski interval is built from one draw of it.
In the designs of \Cref{sec:sim-ess} the root mean square of the estimated
standard error tracks the sampling variability of the estimator while its median
falls well below it once the tilt is large, so the variance estimator is
calibrated in the aggregate and unreliable in any one sample. \Cref{sec:sim-ess}
therefore reports the median, the fraction of replicates whose reported standard
error falls below $0.8$ times the sampling variability, and the coverage of the
resulting interval, in the same row, so that a cell in which a typical analysis
reports a fifth of the truth can be read against what that does to the interval.

What governs that dispersion is the tail of the conditional density ratio. As
$|\gamma^\star|$ grows, $\rho$ concentrates on a shrinking part of the source
sample, and the estimated standard error becomes both more dispersed and, at the
median, smaller. A diagnostic for this is the effective sample size of the
tilt weights,
\begin{equation}
  \widehat{\mathrm{ESS}}
  =\frac{\big(\sum_{i:S_i=0} d_i\,r_e(X_i)\,\rho_i\big)^2}
        {n_0\sum_{i:S_i=0}\big(d_i\,r_e(X_i)\,\rho_i\big)^2},
  \label{eq:ess}
\end{equation}
the fraction of the source sample effectively contributing to the augmentation,
which is computable from the observed data.
\Cref{sec:sim-ess} traces its relation to the behaviour of a reported
standard error across a grid of designs.

Resampling does not remove the difficulty. The bootstrap described above is the
standard error we report for the endpoints, and it inherits the same skewness
rather than correcting it, since each resample is subject to the same
concentration of the density ratio. Robust variance estimators do not help
either, for a reason given in \Cref{sec:sim-ess}.
Because the two constructions can disagree, \Cref{sec:sim-ess} scores them
on the same replicates, and an application should state which one produces the
interval it reports.

One configuration remains difficult even where the standard errors behave:
$\kappa_0=\bar\kappa$, where the truth lies at an endpoint of the identified set
and is missed whenever that endpoint's interval is too short. The failure there
is one-sided by construction.
This is a different phenomenon from under-coverage in the interior of the
bound, which arises when the tilt weights concentrate and the reported standard
error is typically too small. Both are reported in \Cref{sec:sim-ess}, and they
should not be conflated: the first is a property of the estimand's geometry at the
boundary, the second of the variance estimator.
Where the identified width is large relative to the sampling error, as in
an application, the Imbens--Manski interval is insensitive to the first; where it
is not, endpoint standard errors should be read with the effective sample size in
view.

\paragraph{Algorithm.}
\begin{quote}\small
\textbf{Input:} pooled data $\{(S_i,X_i,Z_i,S_iY_i,d_i)\}$, loadings $t,b$, grid
$\{\kappa\}$, draws $M$.\\
0. Fix the loading constants $(m^0_Y,\sigma^0_Y)$, $(m^0_{Z,k},\sigma^0_{Z,k})$
   from the source sample at the primary specification; hold them fixed
   thereafter.\\
1. Fit nuisances: $\hat e$ by an $\ell_1$-regularized survey-weighted
   regression, without cross-fitting and with its $n^{-1/2}$ behaviour assumed
   rather than derived (\Cref{rem:orth-untested}); $\hat\beta$ for
   $f_0(y,z\mid x,S{=}0)$.\\
2. Solve \eqref{eq:stage1} at $\kappa=0$ for $\hat\gamma^\star$; form $s$ by the
   tilt-weighted projection \eqref{eq:proj}; with $q>1$, compute the
   bridge-consistency statistic $J$.\\
3. For each $\kappa$: (a) solve \eqref{eq:bridgematch-kappa} for
   $\hat\gamma(\kappa)$; (b) form $\tilde m_1$ by \eqref{eq:fi};
   (c) compute $\hat\mu_1^{DR}(\kappa)$ by \eqref{eq:aipw};
   (d) one-step correct via $\hat\varphi$.\\
4. Resample both cohorts, refitting $(\hat e,\hat\beta,\lambda,\hat\gamma)$ and
   recomputing the bridge moments within each replicate. One pass supplies both
   the endpoint standard errors and the reference distribution for $J$.\\
5. Combine $\hat\mu_1(\pm\bar\kappa)$ and their standard errors into an
   Imbens--Manski interval.\\
The input is the observed law of \Cref{sec:ident}, with $(S,X,Z)$ recorded
for every unit.\\
\textbf{Output:} the benchmark $\hat\mu_1(0)$ and its standard error; the
identified set $[\hat\mu_1(-\bar\kappa),\hat\mu_1(\bar\kappa)]$; the
Imbens--Manski interval; the bridge-consistency statistic $J$ with its bootstrap
$p$-value; and the effective sample size \eqref{eq:ess} of the tilt weights.
\end{quote}

\section{Simulation Study}
\label{sec:sim}

\subsection{Design}
We calibrate a Gaussian data-generating process matching \Cref{cor:gaussian}:
$X\sim N(0,I_p)$ with a target covariate-mean shift; source
$Y\mid X\sim N(x^\top\beta,\sigma_Y^2)$ and $Z_k\mid X\sim N(x^\top\alpha_k,
\sigma_Z^2)$ with $(Y,Z)$ residual correlation controlling bridge relevance; and a
target obtained by tilting, with detectable drift $\gamma$ and bridge-orthogonal
residual drift $\kappa_0$. The target law is constructed by applying the tilt
\eqref{eq:tilt} directly, so that the data-generating process does not reuse the
closed form the estimator inverts; a construction that shifts the target means by
the $\kappa=0$ Gaussian formula would make near-zero bias a tautology rather than
evidence.
Every target generated this way lies inside the identifying model
\eqref{eq:tilt}. \Cref{sec:sim-offmodel} generates targets outside it and
reports the benchmark bias, the coverage of the set, and the rejection rate of
the bridge-consistency test. None of the designs has item nonresponse: $(S,X,Z)$
is recorded for every unit, as in \Cref{sec:ident}.
The primary design is Gaussian for transparency;

\Cref{sec:sim-binary} reports a separate exact binary study. Its purpose is
not to repeat the Gaussian scenarios in another family but to measure three
quantities that are identically zero, or undefined, under a Gaussian working
model: the gap between the tilted and untilted projections in \eqref{eq:proj},
the derivative $\partial_\kappa\gamma$ at $\kappa=0$ in a family where the bridge
equation is not linear in $\gamma$, and the second-order coefficient that
\Cref{rem:set-scope} leaves uncharacterized.

Two reading conventions are needed before the numbers. First, ``set coverage'' is
not coverage at a nominal level. The identified set is an estimate of a set, and
asymptotically the event that it contains $\mu_1$ is the indicator of
$|\kappa_0|\le\bar\kappa$: one inside the bound, one half on the boundary, zero
outside. Widening $\bar\kappa$ pushes every entry to one and says nothing, so
width and coverage are two readings of the same choice rather than a trade-off,
and a set-coverage column must be read against the target its row implies, which
we print alongside. Only the Imbens--Manski column is a confidence statement, at
the nominal $95\%$. Second, benchmark bias should be read relative to the
identification width, not against zero: the two are quantities of different
kinds, and what is claimed is about the former.

\subsection{Recovery, coverage, and robustness}
\label{sec:sim-core}

\Cref{tab:sim-base} reports the baseline design at $R=1000$ replicates with
$M=200$ fractional draws and $\bar\kappa=0.30$. The subsection reports three findings.

Drift recovery is essentially exact: $\hat\gamma$ has bias $0.0004$ against a
Monte Carlo standard error of $0.0005$, which is $0.3\%$ of the half-width once
carried to the outcome scale. The benchmark $\hat\mu_1(0)$ is a different case
and we describe it differently. Its bias is $-0.032$, which is small, about $6\%$ of the
half-width and immaterial at the magnitudes the method is used at, but it is
persistent: the same sign and nearly the same magnitude at $R=60$, $R=200$ and
$R=1000$ ($-0.035$, $-0.038$, $-0.032$), and at $R=1000$ it sits $2.7$ Monte Carlo
standard errors from zero. It does not shrink when the number of fractional draws
is raised from $80$ to $200$, so it is not an imputation artefact; we suspect the
same density-ratio tail documented in \Cref{sec:inference}. We therefore call the
benchmark \emph{small relative to the identification width} rather than unbiased,
and reserve the latter word for $\hat\gamma$.

The core width matches its closed form. At $\bar\kappa=0.30$ the theoretical
width $2\bar\kappa\sigma_Y(1-R^2_{Y\mid Z,X})$ is $1.125$ and the empirical width
is $1.114$, a relative gap of $1.0\%$. Sweeping the bridge-outcome correlation
confirms the direction of \Cref{prop:general}: the width falls monotonically as
$R^2_{Y\mid Z,X}$ rises, so the core is \emph{not} invariant to bridge strength.
The competing closed form $2\bar\kappa\sigma_Y^2$, which appears in earlier
treatments of this construction, is wrong except
when $\sigma_Y=1$ and $R^2=0$; across a grid of $\sigma_Y\in\{1,3\}$,
$\sigma_Z\in\{1,2\}$ and $R^2\in[0,0.735]$ its mean relative error is
$0.544$. A third candidate, $2\bar\kappa\sigma_Y\sqrt{1-R^2}$, which is what
the unit-variance convention of \Cref{rem:convention} would give if $\bar\kappa$
were not remapped with it, has mean relative error $0.197$ and fails most
visibly where the bridge is informative: at $R^2=0.735$ it overstates the width
by a factor of two. The form in \eqref{eq:core-general} has mean relative error
$0.007$ across the same grid.

A regression of the pilot estimator on $\hat\gamma-\gamma^\star$ across the same
replicates returns a slope of $4.91$ (standard error $0.95$) against the
theoretical $c=\sigma_Y(1+r)=4.5$, confirming the coefficient and the sign of
term~(III) in \Cref{thm:eif}.
Ablating that term is more direct: dropping it from the influence function
and leaving everything else in place lowers the coverage of the benchmark
interval from $0.863$ to $0.810$ at $n=500$ and from $0.903$ to $0.893$ at
$n=5000$, and the naive plug-in that also drops the propensity correction
collapses to about $0.32$ at every sample size. The bridge-moment correction is
therefore not a refinement of the standard error but a component of it.

The comparators behave as the theory predicts. Covariate-shift AIPW, which assumes
the drift away, is biased by $-3.754$; the heuristic weighted-bridge-drift point
estimator by $-1.877$; a surrogate-index estimator by $-0.940$. None of the three
reports an interval that contains the truth, and none signals its failure.

\begin{table}[t]
\centering
\caption{Baseline design, $R=1000$ replicates, $M=200$ fractional draws,
$\bar\kappa=0.30$, $\kappa_0=0$ (co-drift holds). Monte Carlo standard errors in
parentheses.}
\label{tab:sim-base}
\begin{tabular}{lr}
\toprule
Quantity & Value \\
\midrule
Bias of $\hat\gamma$                                   & $0.0004\ (0.0005)$ \\
Bias of the benchmark $\hat\mu_1(0)$                   & $-0.032\ (0.012)$ \\
Identified width, empirical                            & $1.114$ \\
Identified width, theory $2\bar\kappa\sigma_Y(1-R^2)$  & $1.125$ \\
Imbens--Manski width                              & $2.067$ \\
Set coverage (target $1$)                              & $0.951$ \\
Imbens--Manski coverage (nominal $0.95$)               & $0.999$ \\
\addlinespace
\multicolumn{2}{l}{\emph{Comparators, bias}}\\
\quad Covariate-shift AIPW (drift ignored)             & $-3.754\ (0.002)$ \\
\quad Surrogate-index estimator                        & $-0.940\ (0.003)$ \\
\quad Heuristic weighted bridge drift                  & $-1.877\ (0.003)$ \\
\addlinespace
Replicate failures                                     & $0$ \\
\bottomrule
\end{tabular}

\vspace{2pt}
{\footnotesize\textit{Note.} The three criteria differ. Set coverage is asymptotically an indicator, so
its target here is $1$ and the shortfall is sampling noise in the estimated
endpoints. The Imbens--Manski column is the only nominal statement, and
$0.999$ is above its level rather than short of it: the identified width of
$1.11$ is most of the reported width of $2.07$, so the interval is conservative
by construction. The benchmark bias is $6\%$ of the identification
half-width.}
\end{table}

\label{sec:sim-sweep}
What happens once the residual bound is exceeded is the other half of the
claim. \Cref{tab:kappa-sweep} sweeps the true residual $\kappa_0$ at a fixed bound
$\bar\kappa=0.30$. Within the bound the set contains the truth; beyond it,
coverage falls steeply and does so visibly, which is the intended
behaviour: the method does not defend a conclusion the bound does not support.
The Imbens--Manski interval degrades more gradually because it adds sampling
error to the identified width, and at the boundary $\kappa_0=\bar\kappa$ its
$0.81$ against a nominal $0.95$ reflects both the difficulty of the boundary case
and the dispersion of the endpoint standard error documented in
\Cref{sec:inference}.

\begin{table}[t]
\centering
\caption{Residual sweep at fixed $\bar\kappa=0.30$ ($R=1000$, $M=200$).}
\label{tab:kappa-sweep}
\begin{tabular}{lrr}
\toprule
$\kappa_0$ & set coverage & Imbens--Manski coverage \\
\midrule
$0$ & $0.946$ & $1.000$ \\
$0.15$ & $0.758$ & $0.977$ \\
$0.30$ ($=\bar\kappa$, boundary) & $0.383$ & $0.829$ \\
$0.45$ & $0.148$ & $0.565$ \\
$0.60$ & $0.079$ & $0.335$ \\
\bottomrule
\end{tabular}

\vspace{2pt}
{\footnotesize\textit{Note.} Set coverage has target $1$ while $|\kappa_0|\le\bar\kappa$ and $0$ beyond
it, so the decline is the intended behaviour rather than a loss of performance.
The Imbens--Manski column carries a nominal $95\%$ level only while the bound
holds; past it the truth is not in the identified set, and those entries
measure the bound rather than the interval.}
\end{table}

\label{sec:sim-rob}
The robustness map of \Cref{tab:robust} is confirmed numerically in Web
Appendix~C, where the estimator shows no detectable bias whenever either the
propensity or the source outcome model is correct, and bias only when both are
wrong. That is the drift-conditional double robustness of \Cref{lem:dcdr}, and
$\hat\gamma$ itself is unmoved by outcome-model misspecification, as
\Cref{lem:gammarobust} requires.
The bias of $\hat\gamma$ under a misspecified outcome mean is reported
explicitly there, since it is the quantity \Cref{lem:gammarobust} asserts to be
zero, and which \Cref{rem:stage1-scope} reports is small rather than zero.

\subsection{Which form of the Stage-1 influence function}
\label{sec:sim-psi}

\Cref{rem:psi-src} establishes that the source-side contribution to
$\varphi_\gamma$ must be carried and leaves open which representation of it to
use at a finite sample size. \Cref{tab:psi} scores three candidates, at both
signs of term~(III), against the empirical sampling standard deviation of the
quantity each is meant to describe. The candidate whose ratio is one is the one
used throughout this paper.

The sign is positive. Wherever it is identifiable the negative choice
inflates the standard error by half or more. It is \emph{not} identifiable under
the target-side-only form, whose two signs agree to the third decimal at every
design in the table, and that insensitivity is exactly what \Cref{rem:psi-src}
predicts of a $\varphi_\gamma$ carrying no correlation with term~(II). That form is therefore not used, independently of how near one its ratio
happens to fall.

Among the two remaining candidates, none attains $1.00$. The parametric
plug-in errs by ten to fifteen percent in the conservative direction at the
drift magnitudes at which the method is used, which is the direction an interval
should err. The nonparametric residual is nearer one at the benchmark, but it
arrives there by overstating $\Var(\hat\gamma)$ against a negative cross-term
rather than by describing the variance correctly, and it is further from one,
not nearer, once the tilt concentrates. We use the parametric form.

This is a choice made in one family of Gaussian designs at one sample
size, and it is not a general result. What transfers is the argument of
\Cref{rem:psi-src}, that the source-side contribution must be carried at all.
Which representation of it is closest to the truth at a given sample size is a
property of the design, and an analyst in another setting should repeat this
comparison rather than adopt our answer.

\begin{table}[t]
\centering
\caption{Candidate forms of the Stage-1 influence function. $R=1000$,
$n_0=2500$, $n_1=2000$, $q=3$, $\bar\kappa=0.30$, $\kappa_0=0$.}
\label{tab:psi}
\begin{tabular}{llcccc}
\toprule
& & \multicolumn{2}{c}{$\gamma^\star=0.25$ ($\widehat{\mathrm{ESS}}=0.36$)}
  & \multicolumn{2}{c}{$\gamma^\star=0.50$ ($\widehat{\mathrm{ESS}}=0.074$)}\\
\cmidrule(lr){3-4}\cmidrule(lr){5-6}
source-side & sign of (III) & benchmark & endpoint & benchmark & endpoint \\
\midrule
none  & $+$ & $1.17\ (1.14)$ & $1.17\ (1.12)$ & $0.88\ (0.70)$ & $0.83\ (0.63)$ \\
none  & $-$ & $1.18\ (1.14)$ & $1.17\ (1.12)$ & $0.88\ (0.70)$ & $0.83\ (0.63)$ \\
param & $+$ & $1.14\ (1.11)$ & $1.15\ (1.09)$ & $0.88\ (0.70)$ & $0.83\ (0.63)$ \\
param & $-$ & $1.52\ (1.49)$ & $1.46\ (1.41)$ & $0.94\ (0.76)$ & $0.87\ (0.67)$ \\
np    & $+$ & $0.92\ (0.91)$ & $0.87\ (0.85)$ & $0.56\ (0.49)$ & $0.51\ (0.43)$ \\
np    & $-$ & $1.88\ (1.80)$ & $1.85\ (1.75)$ & $1.57\ (1.22)$ & $1.44\ (1.09)$ \\
\midrule
\multicolumn{2}{l}{$\Var(\mathrm{II})$}   & \multicolumn{2}{c}{$54.8$}
                                          & \multicolumn{2}{c}{$565.1$}\\
\multicolumn{2}{l}{$\Var(\mathrm{III})$}  & \multicolumn{2}{c}{$35.1$}
                                          & \multicolumn{2}{c}{$35.1$}\\
\multicolumn{2}{l}{$\Cov(\mathrm{II},\mathrm{III})$} & \multicolumn{2}{c}{$-12.8$}
                                          & \multicolumn{2}{c}{$-12.8$}\\
\bottomrule
\end{tabular}

\vspace{2pt}
{\footnotesize\textit{Note.} Entries are the mean estimated standard error over the empirical standard
deviation of the quantity it describes, with the median in parentheses;
$1.00$ means the candidate is right, and the median is what a single analysis
reports. Source-side treatment: \emph{none} is the target-side moment alone,
\emph{param} adds the parametric plug-in error of the fitted bridge
regressions, \emph{np} adds the full conditional residual.}
\end{table}

The variance decomposition beneath the table explains where the choice
matters. In these designs the variance of term~(III) is nearly constant in
$\gamma^\star$, at about $35$, while the variance of term~(II) rises from $55$
to $565$ and, at $\gamma^\star=1$, to $5.0\times10^{4}$. The uncertainty in the
bridge-identified drift is therefore a substantial share of the endpoint
variance at the magnitudes where the method is used, about two fifths at
$\gamma^\star=0.25$, and a negligible share where it is not. The constancy
itself is an artefact of the working model: under a linear Gaussian tilt the
conditional covariances that determine both $c$ and $\Var(\varphi_\gamma)$ do
not move with $\gamma$, and outside that instance neither is constant. What
survives is the ratio rather than the level, and with it one reading that
bears on \Cref{sec:sim-ess}: where a reported standard error is unreliable, it
is unreliable because of term~(II), and no treatment of the Stage-1 term
repairs it.

\subsection{Operating range of the endpoint standard errors}
\label{sec:sim-ess}

\Cref{tab:ess} varies the detectable drift, the sample size and the
covariate dimension, and reports at each design what a \emph{single} analysis
would report: the median estimated endpoint standard error relative to the
empirical sampling standard deviation, the fraction of replicates whose reported
standard error falls below $0.8$ times that standard deviation, the effective
sample size \eqref{eq:ess} of the tilt weights, and, in the same row, the
coverage of the Imbens--Manski interval when the residual drift is zero.

Four readings carry the table. First, what degrades is governed by
$\gamma^\star$, and the pattern is the same at every sample size and covariate
dimension in the grid: at $\gamma^\star=0.15$ the median reported standard error
is within a tenth of the sampling variability everywhere, and no replicate in
any of those six cells reports a standard error below $0.8$ times it. Second,
$\widehat{\mathrm{ESS}}$ tracks the degradation and is computable from the data
at hand; it falls from about $0.5$ at $\gamma^\star=0.15$ to below $0.01$ at
$\gamma^\star=1$, and the covariate dimension moves it as well as the drift
does, which is why the grid varies both.

Third, and this is the reading that matters for what the paper reports, a
badly small typical standard error does not by itself sink the interval. At
$\gamma^\star=0.5$ the median is between $0.40$ and $0.64$ of the sampling
variability and seven replicates in ten report a standard error below the $0.8$
threshold, yet Imbens--Manski coverage is $1.000$ in every one of those cells.
The reason is the paper's own point in another form: the identified width there
is about $1.13$ against an interval width of $1.8$ to $2.4$, so identification,
not sampling error, is what the interval is mostly made of, and an understated
standard error moves a minority of the reported width. Coverage fails only at
$\gamma^\star=1$, where $\widehat{\mathrm{ESS}}$ is below $0.01$, the median
reported standard error is a tenth of the truth or less, and coverage falls to
between $0.61$ and $0.69$. That is a different failure from the boundary case of
\Cref{tab:kappa-sweep}, where the truth sits at an endpoint of the set and is
missed one-sidedly, and the two should not be conflated.

Fourth, the bootstrap is not a repair. \Cref{tab:ess-boot} builds the
interval from an influence-function standard error and from a cohort-stratified
bootstrap on the same replicates. The two agree on coverage at every design, and
the bootstrap's own median tracks the sampling variability slightly better at
small drift and no better at large. A table diagnosing one construction
therefore speaks to the other, and an application reporting a bootstrap
interval, is entitled to read this table.

\begin{table}[t]
\centering
\caption{Operating range of the endpoint standard errors. $R=400$ per
cell, $\bar\kappa=0.30$, $\kappa_0=0$, $\delta=0.3$, $n_1=0.8\,n_0$.}
\label{tab:ess}
\begin{tabular}{rrrrrrrr}
\toprule
$\gamma^\star$ & $n_0$ & $p$ & $\widehat{\mathrm{ESS}}$ &
median/SD & RMS/SD & $P(\mathrm{se}<0.8\,\mathrm{SD})$ & IM \\
\midrule
$0.15$ & $1250$ & $2$ & $0.597$ & $1.123$ & $1.143$ & $0.000$ & $1.000$ \\
$0.15$ & $1250$ & $6$ & $0.414$ & $1.089$ & $1.128$ & $0.000$ & $1.000$ \\
$0.15$ & $2500$ & $2$ & $0.596$ & $1.145$ & $1.166$ & $0.000$ & $1.000$ \\
$0.15$ & $2500$ & $6$ & $0.412$ & $1.080$ & $1.125$ & $0.000$ & $1.000$ \\
$0.15$ & $5000$ & $2$ & $0.594$ & $1.126$ & $1.137$ & $0.000$ & $1.000$ \\
$0.15$ & $5000$ & $6$ & $0.414$ & $1.024$ & $1.047$ & $0.000$ & $1.000$ \\
\addlinespace
$0.25$ & $1250$ & $2$ & $0.393$ & $1.044$ & $1.130$ & $0.000$ & $1.000$ \\
$0.25$ & $1250$ & $6$ & $0.280$ & $1.097$ & $1.194$ & $0.000$ & $1.000$ \\
$0.25$ & $2500$ & $2$ & $0.390$ & $1.089$ & $1.154$ & $0.000$ & $1.000$ \\
$0.25$ & $2500$ & $6$ & $0.273$ & $0.956$ & $1.080$ & $0.028$ & $1.000$ \\
$0.25$ & $5000$ & $2$ & $0.386$ & $1.021$ & $1.067$ & $0.000$ & $1.000$ \\
$0.25$ & $5000$ & $6$ & $0.270$ & $0.977$ & $1.045$ & $0.005$ & $1.000$ \\
\addlinespace
$0.50$ & $1250$ & $2$ & $0.089$ & $0.468$ & $1.039$ & $0.802$ & $1.000$ \\
$0.50$ & $1250$ & $6$ & $0.070$ & $0.628$ & $1.085$ & $0.688$ & $0.990$ \\
$0.50$ & $2500$ & $2$ & $0.079$ & $0.397$ & $1.031$ & $0.892$ & $1.000$ \\
$0.50$ & $2500$ & $6$ & $0.063$ & $0.515$ & $1.052$ & $0.767$ & $1.000$ \\
$0.50$ & $5000$ & $2$ & $0.073$ & $0.642$ & $0.975$ & $0.713$ & $1.000$ \\
$0.50$ & $5000$ & $6$ & $0.057$ & $0.618$ & $1.028$ & $0.713$ & $1.000$ \\
\addlinespace
$1.00$ & $1250$ & $2$ & $0.0089$ & $0.092$ & $1.016$ & $0.940$ & $0.682$ \\
$1.00$ & $1250$ & $6$ & $0.0091$ & $0.162$ & $1.030$ & $0.925$ & $0.690$ \\
$1.00$ & $2500$ & $2$ & $0.0063$ & $0.083$ & $1.018$ & $0.930$ & $0.605$ \\
$1.00$ & $2500$ & $6$ & $0.0059$ & $0.015$ & $1.007$ & $0.983$ & $0.677$ \\
$1.00$ & $5000$ & $2$ & $0.0044$ & $0.132$ & $0.994$ & $0.895$ & $0.660$ \\
$1.00$ & $5000$ & $6$ & $0.0036$ & $0.025$ & $1.006$ & $0.983$ & $0.667$ \\
\bottomrule
\end{tabular}

\vspace{2pt}
{\footnotesize\textit{Note.} ``median/SD'' is the median estimated endpoint standard error over the
empirical standard deviation of the endpoint, which is what a single analysis
reports; ``RMS/SD'' is calibrated when the variance estimator is unbiased for
the variance and is shown for contrast; $P(\mathrm{se}<0.8\,\mathrm{SD})$ is
the fraction of replicates reporting a badly small standard error; IM is
Imbens--Manski coverage, nominal $0.95$.}
\end{table}

\begin{table}[t]
\centering
\caption{The two interval constructions on the same replicates. $R=400$
per cell, $n_0=2500$, $p=3$, $q=3$, $\bar\kappa=0.30$, $\kappa_0=0$, bootstrap
$B=200$.}
\label{tab:ess-boot}
\begin{tabular}{rrrrrr}
\toprule
$\gamma^\star$ & $\widehat{\mathrm{ESS}}$ &
median/SD (infl.\ fn.) & median/SD (bootstrap) &
IM (infl.\ fn.) & IM (bootstrap) \\
\midrule
$0.15$ & $0.543$ & $1.120$ & $0.998$ & $1.000$ & $1.000$ \\
$0.25$ & $0.353$ & $1.074$ & $0.937$ & $1.000$ & $1.000$ \\
$0.50$ & $0.073$ & $0.557$ & $0.524$ & $1.000$ & $1.000$ \\
\bottomrule
\end{tabular}

\vspace{2pt}
{\footnotesize\textit{Note.} ``IM'' is Imbens--Manski coverage at nominal $0.95$, with the endpoint
standard error taken from the influence function and from a cohort-stratified
bootstrap.}
\end{table}

Three corrections suggest themselves and none succeeds. A median-of-means
estimator and a symmetric trimmed second moment both land further from the
sampling variability than the estimator they replace; a self-normalized
(H\'ajek) augmentation reduces the variability of the pilot but leaves the
corrected estimator mismatched in the opposite direction. The common reason is
that all three suppress the tail of the density ratio, and it is that tail which
carries the variance. The cohort-stratified bootstrap of \Cref{tab:ess-boot}
avoids that objection but not the behaviour. We therefore report the estimator
as it stands, with $\widehat{\mathrm{ESS}}$ beside it, rather than substituting
one that is further from the truth.

Within the range of drift magnitudes in which the tilt weights are not
concentrated, the interval attains its nominal level across a fourfold range of
sample size and a threefold range of covariate dimension, under both
constructions. The grid is one working family at one bridge count, the values
that separate the regimes are read off these designs rather than derived, and
no calibration is offered for the regime in which the weights do concentrate.

\subsection{Bridge integrity and the consistency test}
\label{sec:sim-bridge}

The over-identification statistic of \Cref{sec:inference} is the only refutable
restriction the method provides, so its calibration is more important than its
power. \Cref{tab:overid} compares three forms of the statistic under co-drift.
Only the two-sample bootstrap form is correctly sized, at $0.054$ against a
Monte Carlo standard error of $0.007$. The analytic form, which
uses the target-side moment covariance alone, rejects $45\%$ of the time at a
nominal $5\%$; a precision-weighted statistic formed from the spread of the
per-bridge roots rejects $19\%$. The ratio of the bootstrap to the analytic variance is
$3.94$, which is the factor quoted in \Cref{sec:inference}.

Power against bridge-specific drift is modest until the deviation is large.
At the correct size, rejection rises from $0.054$ at deviation $0$ to $0.068$,
$0.191$, $0.578$ and $0.996$ at deviations $0.02$, $0.05$, $0.10$ and $0.20$ on
the standardized bridge scale. A deviation of $0.05$ is detected less than one
time in five. This bounds what the refutable implication of \Cref{thm:point}
delivers, and it should be read that way rather than treated as reassurance
when it does not reject.

\begin{table}[t]
\centering
\caption{Calibration of the bridge-consistency test with $q=3$ bridges under
co-drift ($R=1000$, $B=300$ bootstrap resamples). Monte Carlo standard
error in parentheses.}
\label{tab:overid}
\begin{tabular}{lrrl}
\toprule
Form of the statistic & size @ $.05$ & mean $J$ & verdict \\
\midrule
Two-sample bootstrap $\hat\Omega$   & $0.054\ (0.007)$ & $2.08$ & correctly sized \\
Analytic (target-side only)         & $0.450\ (0.016)$ & $7.41$ & over-rejects \\
Precision-weighted per-bridge roots & $0.187\ (0.012)$ & $3.71$ & over-rejects \\
\midrule
\multicolumn{4}{l}{\emph{Power of the bootstrap form, by per-bridge
deviation}}\\
deviation $0.02$ & $0.068\ (0.008)$ & $2.28$ & \\
deviation $0.05$ & $0.191\ (0.012)$ & $3.53$ & \\
deviation $0.10$ & $0.578\ (0.016)$ & $7.96$ & \\
deviation $0.20$ & $0.996\ (0.002)$ & $25.27$ & \\
\bottomrule
\end{tabular}

\vspace{2pt}
{\footnotesize\textit{Note.} The null is $\chi^2_2$, so the target mean is $2$ and the target size
$0.05$. The lower block gives the power of the correctly sized form against a
deviation applied to one bridge inside the tilt.}
\end{table}

Two departures are worth asking the test about, and \Cref{tab:bridge}
reports them: a cohort-specific \emph{measurement} offset on the target's observed
bridge, with no real drift at all, and a covariate-dependent drift. Neither
moves $\mu_1$ in the first case or violates anything the estimator fits in the
second, so both are tests of whether the analysis notices.

\begin{table}[t]
\centering
\caption{Bridge integrity, $R=1000$, $\bar\kappa=0.30$.}
\label{tab:bridge}
\begin{tabular}{lrrrrr}
\toprule
applied to & $\hat\gamma$ & bias & bias AIPW & set cov.\ & rej.\ $J$ \\
\midrule
\multicolumn{6}{l}{\emph{(a) measurement offset, no real drift}}\\
\quad none                       & $0.000$ & $-0.002$ & $-0.002$ & $1.000$ & $0.059$ \\
\quad $0.05\sigma_Z$, one bridge  & $0.007$ & $0.049$ & $-0.001$ & $1.000$ & $0.348$ \\
\quad $0.10\sigma_Z$, one bridge  & $0.013$ & $0.098$ & $-0.000$ & $1.000$ & $0.902$ \\
\quad $0.20\sigma_Z$, one bridge  & $0.026$ & $0.198$ & $-0.000$ & $1.000$ & $1.000$ \\
\quad $0.05\sigma_Z$, all bridges & $0.020$ & $0.148$ & $-0.001$ & $1.000$ & $0.053$ \\
\quad $0.10\sigma_Z$, all bridges & $0.040$ & $0.298$ & $-0.001$ & $1.000$ & $0.063$ \\
\quad $0.20\sigma_Z$, all bridges & $0.080$ & $0.597$ & $-0.000$ & $0.338$ & $0.062$ \\
\addlinespace
\multicolumn{6}{l}{\emph{(b) covariate-dependent drift}}\\
\quad $\gamma_{\rm het}=0$        & $0.500$ & $-0.033$ & $-3.750$ & $0.952$ & $0.047$ \\
\quad $\gamma_{\rm het}=0.1$      & $0.506$ & $-0.083$ & $-3.840$ & $0.932$ & $0.048$ \\
\quad $\gamma_{\rm het}=0.3$      & $0.517$ & $-0.167$ & $-4.020$ & $0.879$ & $0.052$ \\
\quad $\gamma_{\rm het}=0.5$      & $0.529$ & $-0.251$ & $-4.199$ & $0.775$ & $0.053$ \\
\bottomrule
\end{tabular}

\vspace{2pt}
{\footnotesize\textit{Note.} Panel (a) applies a cohort-specific measurement offset to the target's
bridge, in units of $\sigma_Z$, with $\gamma=0$, so $\mu_1$ is unchanged and a
bias of zero is the correct answer. Panel (b) applies a covariate-dependent
drift $\gamma(x)=\gamma+\gamma_{\rm het}x_1$ at $\gamma=0.5$. ``rej.\ $J$'' is
the rejection rate of the bootstrap form of the bridge-consistency statistic,
the only correctly sized form (\Cref{tab:overid}); ``bias AIPW'' is the
covariate-shift estimator, which the offset does not affect.}
\end{table}

Panel (a) is the regime in which the method can be worse than doing
nothing. There is no drift, so the
covariate-shift estimator is unbiased and stays so; the proposed estimator reads
the offset as drift and acquires a bias that grows with it. An offset on a
single bridge is caught: the rejection rate rises from the nominal level to
near one as the offset reaches a fifth of a bridge standard deviation. The same
offset applied to all three is not caught at any magnitude, the rejection rate
staying at $0.05$ throughout, because the bridges continue to agree with one
another. At the largest such offset the benchmark carries a bias of $0.60$ where
the comparator carries none, and the set covers only a third of the time, with
no warning of any kind.

Panel (b) is milder. A covariate-dependent drift degrades coverage
smoothly, from $0.95$ to $0.78$ as $\gamma_{\rm het}$ rises to $0.5$, and the
benchmark carries smaller bias than the covariate-shift comparator throughout.
The bridge statistic does not respond, which is what it is built to do: the
bridges continue to agree with one another at every stratum, so the alternative
here is not the one that statistic tests.

The alternative in panel (b) is the one the stratum-wise statistic of
\Cref{sec:inference} is built for, and \Cref{tab:stratum} computes both
statistics on the same replicates of that design, with the strata taken as
terciles of the covariate along which the drift varies, cut on the source
distribution.

\begin{table}[t]
\centering
\caption{Both over-identification statistics against a covariate-dependent
drift $\gamma(x)=\gamma+\gamma_{\rm het}x_1$. $R=1000$, $\gamma=0.5$, $L=3$
strata, $B=300$ bootstrap resamples, nominal $0.05$.}
\label{tab:stratum}
\begin{tabular}{rrrrrr}
\toprule
$\gamma_{\rm het}$ & $\hat\gamma$, low tercile & $\hat\gamma$, high tercile &
spread & rej.\ bridge & rej.\ stratum \\
\midrule
$0.0$ & $0.543$ & $0.541$ & $0.025$ & $0.059$ & $0.049$ \\
$0.1$ & $0.522$ & $0.567$ & $0.033$ & $0.062$ & $0.166$ \\
$0.3$ & $0.478$ & $0.618$ & $0.073$ & $0.066$ & $0.892$ \\
$0.5$ & $0.434$ & $0.670$ & $0.120$ & $0.065$ & $1.000$ \\
\bottomrule
\end{tabular}

\vspace{2pt}
{\footnotesize\textit{Note.} The first row is the size of each statistic
and the rest their power. ``spread'' is the standard deviation of the three
stratum roots. Monte Carlo standard errors are at most $0.012$.}
\end{table}

The stratum statistic is correctly sized, at $0.049$ against a Monte Carlo
standard error of $0.007$, and its power rises to $0.892$ at
$\gamma_{\rm het}=0.3$ and to one at $0.5$. The bridge statistic stays between
$0.059$ and $0.066$ throughout. Small heterogeneity is not detected: at
$\gamma_{\rm het}=0.1$, where the roots differ by $0.045$ between the outer
terciles, the stratum statistic rejects one time in six, so a non-rejection
bounds the heterogeneity present rather than excluding it.

A third arrangement of the same restriction remains to be calibrated.
\Cref{sec:ident} recommends fixing a primary bridge on construct grounds and
reserving the others for the check. Under that arrangement the drift is
identified by one bridge and the check is carried by the rest, which is not the
statistic \Cref{tab:overid} scores. \Cref{tab:heldout} computes both on the same
replicates against a deviation applied to one bridge inside the tilt.

\begin{table}[t]
\centering
\caption{Two arrangements of the same restriction, on the same
replicates. $R=1000$, $q=3$, $B=300$ bootstrap resamples, nominal $0.05$.}
\label{tab:heldout}
\begin{tabular}{rrrrr}
\toprule
deviation & sd$(\hat\gamma)$, GMM & sd$(\hat\gamma)$, held-out &
rej.\ GMM & rej.\ held-out \\
\midrule
$0.00$ & $0.017$ & $0.020$ & $0.064$ & $0.060$ \\
$0.02$ & $0.017$ & $0.021$ & $0.060$ & $0.064$ \\
$0.05$ & $0.017$ & $0.020$ & $0.144$ & $0.143$ \\
$0.10$ & $0.017$ & $0.021$ & $0.476$ & $0.485$ \\
$0.20$ & $0.018$ & $0.020$ & $0.986$ & $0.987$ \\
\bottomrule
\end{tabular}

\vspace{2pt}
{\footnotesize\textit{Note.} ``GMM'' combines the three bridges and takes
the check to be the residual of that combination; ``held-out'' anchors the drift
on the first bridge and refers the moments of the other two, evaluated at that
drift, to $\chi^2_{q-1}$. The first row is the size of each statistic and the
rest their power. Monte Carlo standard errors are at most $0.016$.}
\end{table}

The held-out statistic is correctly sized, at $0.060$ against $0.064$ for
the GMM form, and the two have the same power at every deviation: $0.476$
against $0.485$ at a deviation of $0.10$, within the Monte Carlo error. The
reason is specific to this alternative. The signal is the disagreement between
the two bridges that are held out, and both arrangements see it; the bridge
that anchors the drift does not move here, so leaving it out of the check costs
nothing.

The cost appears in the drift estimate instead. Anchoring on one bridge
raises the sampling standard deviation of $\hat\gamma$ by roughly a fifth, from
about $0.017$ to about $0.020$, at every deviation in the table. A wider
identified set follows from the same source, since one bridge explains less of
the outcome than three and the half-width of \Cref{prop:general} grows as the
residual variance does.

\subsection{Off-model targets}
\label{sec:sim-offmodel}

Every design above generates the target by applying \eqref{eq:tilt}, so the
target law lies inside the identifying model and the set is not being asked to
protect against anything. A partial-identification claim has to be scored where
the model is wrong. \Cref{tab:offmodel} reports four departures, none of which
lies in the span of $t+b$ and $s$, with $\kappa_0=0$ throughout, so that a set
behaving as advertised would cover.

\emph{Residual scale.} The conditional scale of $Y\mid X$ changes across
cohorts while the bridge shift is rescaled separately, so the bridge no longer
reports the outcome's drift at the rate the matching equation assumes.

\emph{Dependence structure.} The residual correlation between $Y$ and $Z$
given $X$ differs across cohorts. This is the one modelling choice
\Cref{lem:dcdr} does not cover: it enters the tilted bridge moment, propagates to
$\hat\gamma$ through \Cref{rem:stage1-scope}, and moves the half-width through
$\Var^{\gamma^\star}(s\mid X)$.

\emph{Unrecorded exposure.} Source and target are mixtures over a two-level
exposure with different mixing proportions, which cannot enter $X$ and does not
move the bridge. This is the structure of a survey pair in which the source
outcome is a mixture over the number of programme years, whose target
composition is not
recoverable, and it puts a number on what $\bar\kappa$ is being asked to absorb
there.

\emph{Shape.} The target residual is skewed at fixed conditional mean and
variance. The expectation here is that nothing breaks, since a mean-preserving
change of shape moves neither the estimand nor the bridge equation; it is included
so that the failures above are seen to be specific rather than generic.
\begin{table}[t]
\centering
\caption{Off-model targets. $R=1000$, $\bar\kappa=0.30$, $\kappa_0=0$.}
\label{tab:offmodel}
\begin{tabular}{lrrr}
\toprule
Departure & bias & set cov. & bias AIPW \\
\midrule
in-model (tilt)                    & $-0.011$ & $0.921$ & $-3.751$ \\
shape at fixed mean, variance      & $-0.019$ & $0.943$ & $-3.747$ \\
\addlinespace
residual scale, bridge $0.7\times$ & $-1.138$ & $0.006$ & $-3.757$ \\
residual scale, bridge $1.3\times$ & $\phantom{-}1.099$ & $0.121$ & $-3.755$ \\
dependence, $r:0.5\to0.15$         & $\phantom{-}1.031$ & $0.003$ & $-2.180$ \\
dependence, $r:0.5\to0.80$         & $-0.924$ & $0.153$ & $-5.106$ \\
exposure, $50\%\to10\%$            & $\phantom{-}1.152$ & $0.018$ & $-2.758$ \\
exposure, $50\%\to30\%$            & $\phantom{-}0.654$ & $0.565$ & $-3.254$ \\
\bottomrule
\end{tabular}

\vspace{2pt}
{\footnotesize\textit{Note.} ``bias'' is the bias of the benchmark
$\hat\mu_1(0)$ and ``bias AIPW'' that of the covariate-shift estimator, which
assumes the drift away. Monte Carlo standard errors are between $0.005$ and
$0.028$ for the first and below $0.003$ for the second.}
\end{table}

Three of the four departures break the set, at coverage between $0.003$
and $0.153$. The change of shape does not, at $0.943$ against $0.921$ in model:
it alters neither the conditional mean nor the bridge equation. The three that
break it share a feature: each alters the rate at which a unit of bridge drift
reports a unit of outcome drift, which is the quantity the shared coefficient in
\eqref{eq:tilt} fixes.

In every row, including those where the set misses, the benchmark carries
smaller bias than the covariate-shift estimator, by a factor between two and two
hundred.

\subsection{Binary instance: what the Gaussian design cannot measure}
\label{sec:sim-binary}

Taking the covariate discrete makes the support of $(Y,Z)$ finite, so every
object in the method is an exact finite sum: the tilted projection
\eqref{eq:proj}, the root of \eqref{eq:bridgematch-kappa} at every $\kappa$, the
identified set, and the true target prevalence. Nothing is approximated, and three
statements that are vacuous under the Gaussian working model become measurable.

First, the tilted and untilted projections in \eqref{eq:proj} coincide under
Gaussianity. Here they do not, and \Cref{tab:binary} reports the largest
discrepancy between them, so a reader can judge whether the Gaussian picture of
\Cref{sec:tilt} is a harmless simplification.

Second, \Cref{thm:set}(ii) states that $\partial_\kappa\gamma=0$ at
$\kappa=0$. Under the Gaussian working model this holds at every $\kappa$ and so
cannot be distinguished from the weaker first-order claim. Here the bridge
equation is nonlinear in $\gamma$ and the derivative is computed directly.

Third, \Cref{rem:set-scope} declines to characterize the second-order
coefficient, and therefore offers no criterion for whether a given $\bar\kappa$ is
small enough. \Cref{tab:binary} reports that coefficient and the induced movement
of the centre of the reported set, which is the quantity a reader needs in order
to judge the question the remark leaves open.

\begin{table}[t]
\centering
\caption{Binary instance, computed on the population tables, so no
sampling error intervenes. $L=4$ covariate strata, $q=2$ bridges.}
\label{tab:binary}
\begin{tabular}{rrrrrrr}
\toprule
$\gamma$ & $\bar\kappa$ & proj.\ gap & $\partial_\kappa\gamma$ & curvature &
width / closed form & centre \\
\midrule
$0.25$ & $0.1$ & $0.062$ & $7.9\times10^{-5}$ & $0.0202$ & $0.0428/0.0429$ & $-0.0002$ \\
$0.25$ & $0.2$ & $0.062$ & $3.1\times10^{-4}$ & $0.0201$ & $0.0856/0.0857$ & $-0.0008$ \\
$0.25$ & $0.4$ & $0.062$ & $1.3\times10^{-3}$ & $0.0199$ & $0.1702/0.1715$ & $-0.0032$ \\
$0.25$ & $0.8$ & $0.062$ & $4.9\times10^{-3}$ & $0.0191$ & $0.3331/0.3429$ & $-0.0124$ \\
\addlinespace
$0.50$ & $0.1$ & $0.124$ & $6.6\times10^{-5}$ & $0.0309$ & $0.0393/0.0393$ & $-0.0003$ \\
$0.50$ & $0.2$ & $0.124$ & $2.6\times10^{-4}$ & $0.0309$ & $0.0784/0.0786$ & $-0.0012$ \\
$0.50$ & $0.4$ & $0.124$ & $1.1\times10^{-3}$ & $0.0306$ & $0.1562/0.1571$ & $-0.0047$ \\
$0.50$ & $0.8$ & $0.124$ & $4.1\times10^{-3}$ & $0.0294$ & $0.3071/0.3142$ & $-0.0182$ \\
\addlinespace
$1.00$ & $0.1$ & $0.245$ & $3.2\times10^{-5}$ & $0.0449$ & $0.0297/0.0297$ & $-0.0004$ \\
$1.00$ & $0.2$ & $0.245$ & $1.3\times10^{-4}$ & $0.0448$ & $0.0595/0.0595$ & $-0.0015$ \\
$1.00$ & $0.4$ & $0.245$ & $5.0\times10^{-4}$ & $0.0444$ & $0.1188/0.1189$ & $-0.0060$ \\
$1.00$ & $0.8$ & $0.245$ & $2.0\times10^{-3}$ & $0.0429$ & $0.2369/0.2379$ & $-0.0233$ \\
\bottomrule
\end{tabular}

\vspace{2pt}
{\footnotesize\textit{Note.} ``proj.\ gap'' is the largest difference between the tilted and untilted
projections of $t(Y)$ on $(Z,X)$, identically zero under a Gaussian working
model. ``$\partial_\kappa\gamma$'' is a central difference at $\kappa=0$;
``curvature'' is the second-order coefficient of $\gamma(\kappa)$. ``centre''
is the midpoint of the identified set minus the benchmark, exactly zero under
the Gaussian working model.}
\end{table}

Three readings. The tilted and untilted projections differ here, and by an
amount proportional to the drift: the gap is $0.06$, $0.12$ and $0.25$ at
$\gamma=0.25$, $0.5$ and $1$. A reader who pictures the ordinary projection, as
\Cref{sec:tilt} invites, is therefore making an approximation whose size is
now on record rather than merely asserted to be benign.

The derivative $\partial_\kappa\gamma$ at the benchmark is not exactly
zero away from $\kappa=0$, and it scales as $\bar\kappa^2$: across each block
it rises by a factor of four when $\bar\kappa$ doubles, to four significant
figures. That is the behaviour \Cref{thm:set}(ii) predicts and the Gaussian
instance cannot exhibit, since there the derivative vanishes at every $\kappa$
rather than only at zero.

The second-order coefficient is what \Cref{rem:set-scope} declines to
characterize, and here it can be read off: about $0.02$, $0.03$ and $0.04$ in
the three blocks, nearly constant in $\bar\kappa$. Its practical content is the
last column. The centre of the reported set moves from the benchmark by $1.5\%$
of the set's width at $\bar\kappa=0.2$ and by $5.9\%$ at $\bar\kappa=0.8$, so
the anchoring that is exact under a Gaussian working model degrades slowly and
measurably outside it. A reader asking whether a given $\bar\kappa$ is small
enough for the first-order account to hold now has a quantity to look at, in
one instance at least, where before there was only an order symbol. The closed
form of \Cref{prop:general} is accurate to under one percent at
$\bar\kappa\le0.4$ and to three percent at $\bar\kappa=0.8$, which is the same
statement seen from the width side.

Across no drift,
co-drift, strong drift and in-, at- and out-of-bound residual drift, the
benchmark bias is below $0.001$ in prevalence wherever the residual bound holds,
the realized set width agrees with the closed form of \Cref{prop:general} to
within one percent, and set coverage is $1.000$ inside the bound, $0.521$ at it
and $0.000$ beyond it. That is the behaviour of the Gaussian study in a family
where none of it is automatic, which is the expected result and the reason the
table is not in the main text.

\subsection{Comparison with a bridge-blind analysis}
\label{sec:blind}

An analyst without a bridge would sweep the entire drift channel as sensitivity,
which is the global tilt analysis of \citet{steingrimsson2024sensitivity}
instantiated in our setting. The comparison with our set requires care, because
the two sensitivity parameters index displacements along \emph{different}
directions and their numerical values are not interchangeable. The global tilt
moves the outcome along $t$, at rate $\sigma_Y$ per unit of its parameter; ours
moves it along the bridge-orthogonal residual $s$, at rate
$\sigma_Y(1-R^2_{Y\mid Z,X})$ per unit of $\kappa$. Setting the two bounds equal
as numbers therefore grants the bridge-blind analyst a strictly larger
allowance, by a factor $1/(1-R^2_{Y\mid Z,X})$.

The like-for-like comparison is made on the outcome scale. Let $B$ denote a bound,
in outcome units, on the \emph{total} displacement of $E[Y\mid S=1]$ that drift
may produce, and $B_r$ a bound on the displacement attributable to
\emph{bridge-orthogonal} drift alone. The bridge-blind interval has width $2B$ and
ours has width $2B_r$, and each analyst must state the rule by which their bound
is elicited. The substantive point is then not arithmetic but epistemic:
\emph{the bridge-blind analyst must bound the entire drift channel, while we bound
only the part the bridge cannot see}, and $B_r\le B$ by construction, with equality
only for a bridge that explains none of the outcome. The bridge therefore reduces
the size of the judgement required; the narrower set is a consequence of that
reduction rather than an independent claim.

We report the comparison under one stated rule. The analyst commits to a
bound $B$ on the total drift-induced displacement of the target mean, in outcome
units; the corresponding bound on the global tilt parameter is $B/\sigma_Y$,
estimated from the source outcome alone, so the rule is feasible and uses no
bridge information. \Cref{tab:blind} sweeps $B$ and reports, at each value, the
width of the resulting interval and whether it covers.

The bound has to be large, and the transition is sharp. Because $\gamma$
multiplies $t+b$, the displacement the drift produces is
$\gamma\{\sigma_Y+\sum_k\sigma_{YZ_k}/\sigma_{Z_k}\}$, about $3.8$ outcome units
in this design, and sampling error asks for more: coverage is $0.086$ at
$B=3.6$, $0.570$ at $B=4.0$ and $0.982$ at $B=4.4$. A bridge-blind analyst who
elicits a fifth less than the bound that works reports an interval that covers
almost never. Nothing similar happens to the anchored analysis, whose set
narrows smoothly with $\bar\kappa$ (\Cref{tab:kappa-sweep}).

Stating the bound on the tilt scale does not help. Two rules that suggest
themselves, one reading it off the observed covariate shift and one setting it
at $|\gamma|+\bar\kappa$, give $3.52$ and $4.51$, and neither covers; the second
fails despite exceeding the outcome-scale bound that works, because the two
scales measure displacement along different directions.

At $B=4.4$, the smallest bound on the grid at which the bridge-blind
interval attains nominal coverage, its width is $7.77$ against $2.07$ for the
Imbens--Manski interval of the anchored analysis, a factor of $3.8$. Both cover,
so the two are answering the same question under rules of the same kind, and the
point of the preceding paragraphs acquires an arithmetic counterpart: bounding
only the bridge-orthogonal channel is not merely a smaller judgement to make, it
is a substantially narrower report.

\begin{table}[t]
\centering
\caption{Bridge-blind comparison under a stated outcome-scale rule.
$R=1000$ per row.}
\label{tab:blind}
\begin{tabular}{lrr}
\toprule
Interval & width & coverage \\
\midrule
\multicolumn{3}{l}{\emph{bridge-anchored, $\bar\kappa=0.30$}}\\
\quad identified set                       & $1.114$ & $0.951$ \\
\quad Imbens--Manski                       & $2.067$ & $0.999$ \\
\addlinespace
\multicolumn{3}{l}{\emph{bridge-blind, bound $B$ in outcome units}}\\
\quad $B=2.0$                              & $3.537$ & $0.000$ \\
\quad $B=3.0$                              & $5.304$ & $0.001$ \\
\quad $B=3.6$                              & $6.362$ & $0.086$ \\
\quad $B=3.8$                              & $6.715$ & $0.269$ \\
\quad $B=4.0$                              & $7.067$ & $0.570$ \\
\quad $B=4.2$                              & $7.419$ & $0.858$ \\
\quad $B=4.4$                              & $7.771$ & $0.982$ \\
\quad $B=4.6$                              & $8.122$ & $0.998$ \\
\quad $B=4.8$                              & $8.474$ & $1.000$ \\
\addlinespace
\multicolumn{3}{l}{\emph{bridge-blind, two rules stated on the tilt scale}}\\
\quad feasible covariate-shift rule        & $3.524$ & $0.000$ \\
\quad $|\gamma|+\bar\kappa$                & $4.506$ & $0.000$ \\
\bottomrule
\end{tabular}

\vspace{2pt}
{\footnotesize\textit{Note.} $B$ is a bound, in outcome units, on the
total drift-induced displacement of the target mean; the corresponding bound on
the global tilt parameter is $B/\hat\sigma_Y$, estimated from the source outcome
alone. The anchored analysis does not depend on $B$ and is shown once. Nominal
coverage is $0.95$; Monte Carlo standard errors are at most $0.016$.}
\end{table}

\section{Discussion}
\label{sec:disc}

We have reframed cross-population outcome-model drift as a proximal
partial-identification problem anchored by bridge outcomes observed in both
populations. The bridge fixes the detectable drift by a constructive moment
equation, the projection-defined residual isolates the untestable remainder as a
single scalar with a closed-form irreducible core, and a debiased estimator attains
the semiparametric bound at anchored sensitivity. This anchoring is what separates
the method from a global tilt sensitivity analysis
\citep{steingrimsson2024sensitivity}: rather than sweep the entire drift as
unidentified, we let the observed bridge identify its detectable part and confine
the sensitivity parameter to the residual, so a richer bridge shrinks what must be
assumed, not only what must be reported. The closed forms are not tied to the Gaussian working
model: the identified drift, the efficiency coefficient and the irreducible core
are all expressed through the tilt (co)variances of any working exponential
family, with the Gaussian and binary cases as substitutions.

Five matters are left unsettled, and it is worth saying which.

Endpoint inference is the first. The influence-function variance estimator is
consistent under the conditions of \Cref{thm:orth}, but its own sampling
distribution is heavily right-skewed once the tilt weights concentrate, so a
single analysis usually reports too small a standard error, and neither
resampling nor the robust variance estimators considered here repairs it
(\Cref{sec:inference,sec:sim-ess}). We report the effective sample size, which
describes how far the weights have concentrated, and offer no procedure whose
coverage is demonstrated across the concentrated regime. Closely related is the
treatment of item nonresponse: the observed law of \Cref{sec:ident} records
$(S,X,Z)$ for every unit, surveys do not, and an analysis that imputes the
missing entries has to carry the resulting uncertainty into the drift, the
endpoints and the interval. Neither the efficient influence function nor the
interval construction is derived for that observed law.

Rate robustness is the second. \Cref{thm:orth} is not probed numerically
(\Cref{rem:orth-untested}), and a design that would exercise it has to estimate
the residual covariance of \eqref{eq:tilt} as flexibly as the regressions it is
built from. That quantity is not among the nuisances the theorem orthogonalises,
so the extension is not a matter of substituting estimators.

The remaining two concern what the framework can say about a given data set
rather than what it delivers in principle. Co-drift is testable across bridge
coordinates and across covariate strata, and \Cref{sec:sim-bridge} calibrates
both statistics, but neither can establish the restriction; a withheld-outcome
exercise, in which part of a source cohort is treated as a target whose outcome
is then restored, would supply evidence about the extrapolation itself. And
$\hat\gamma$ is the magnitude of the detectable drift rather than the effect of
a named cause: with two populations, a common trend moving both channels and a
single mechanism moving both are observationally identical (\Cref{sec:tilt}), so
separating them requires a third.

\Cref{sec:sim-offmodel} reports what the identified set does when the
target law lies outside the tilt family. Of the four departures simulated there,
three leave the set failing to cover the target mean; in all four the benchmark
carries smaller bias than the covariate-shift estimator it replaces.

Two further scope limits are internal to the theory rather than to the
study. The first-order results, the stationarity of the centre in
\Cref{thm:set}(ii) and the scalar reduction of \Cref{prop:scalar}, are exact under
the Gaussian working model and carry $O(\bar\kappa^2)$ remainders outside it
(\Cref{rem:set-scope,rem:scalar-scope}); \Cref{sec:sim-binary} measures those
remainders in one exactly computable instance, which is a case rather than a
characterization. And the identified set is derived within an exponential-tilt
family, so it protects against residual drift in the direction the family
contains and not against a target law outside it; \Cref{sec:sim-offmodel} reports
what the set does in four such cases.

Three directions remain open: a genuine two-bridge
construction yielding classical double robustness in the drift itself; extension
of the scalar core to multi-directional drift, beyond the coordinatewise
treatment we give here; and a calibrated diagnostic for residual bridge
impurity. The survey design and the differential-missingness
structure of the application are natural further layers.
Items (i) to (iii) above are the ones we regard as most pressing, since each
concerns what a practitioner would report rather than what the framework can in
principle deliver.

\backmatter

\section*{Data Availability Statement}
\vspace{-6pt}
Code reproducing the simulation study, which requires no restricted data, will
be deposited in a public repository on publication and is available from the
corresponding author in the interim.

\vspace{-6pt}
\section*{Supplementary Materials}
\vspace{-6pt}
Web Appendices A, B and C, referenced in \Cref{sec:ident} and \Cref{sec:est},
are included below.

\bibliography{pact_refs}

\clearpage

\appendix
\renewcommand{\thesection}{A.\arabic{section}}
\setcounter{section}{0}
\section*{Web Appendix A: Proofs of the Identification Results}
\addcontentsline{toc}{section}{Web Appendix A}
\setcounter{theorem}{0}\setcounter{lemma}{0}\setcounter{corollary}{0}
\setcounter{remark}{0}
\renewcommand{\thelemma}{A.\arabic{lemma}}
\renewcommand{\thetheorem}{A.\arabic{theorem}}
\renewcommand{\theremark}{A.\arabic{remark}}
\setcounter{equation}{0}\renewcommand{\theequation}{A.\arabic{equation}}
Throughout, expectations without a subscript are over the pooled law; $E_{P_s}$
denotes conditioning on cohort $S=s$. We fix a covariate value $x$ and suppress it
where no ambiguity results; every conditional statement holds $P_1(X)$-almost
surely and every marginal statement is obtained by averaging over the identified
target covariate law $P_1(X)$. Recall the tilted expectation
\begin{equation}
  \Ptilt{\omega}[h\mid x]
  =\frac{E_{P_0}\!\big[h(Y,Z)\,e^{\omega(Y,Z)}\mid x\big]}
        {E_{P_0}\!\big[e^{\omega(Y,Z)}\mid x\big]},
  \label{eq:tiltexp}
\end{equation}
and the structural tilt, for weights
$\omega_{\gamma,\kappa}=\gamma[t(y)+b(z)]+\kappa s(y,z,x)$,
\begin{equation}
  \frac{dP_1(y,z\mid x)}{dP_0(y,z\mid x)}
  =\frac{\exp\{\omega_{\gamma,\kappa}(y,z)\}}{C(\gamma,\kappa;x)},
  \qquad
  C(\gamma,\kappa;x)=E_{P_0}[e^{\omega_{\gamma,\kappa}}\mid x].
  \label{eq:structtilt}
\end{equation}
The loadings $t,b$ are fixed known functions and $s$ is the projection residual
under the identified law,
\begin{equation}
  s(Y,Z,X)=t(Y)-\Pi^{\gamma^\star}[t(Y)\mid Z,X],
  \qquad
  \Cov^{\gamma^\star}\!\big(s,\,\varphi(Z)\mid X\big)=0
  \ \ \forall\varphi\in L_2 .
  \label{eq:projA}
\end{equation}
Although $s$ depends on $(Z,X)$ through the projection, it is orthogonal to every
function of $(Z,X)$ under $P^{\gamma^\star}$; both facts are used below.
\section{A tilted-moment calculus}
We first record the derivatives of tilted moments in the tilt parameters. These
are exponential-family score identities; we state them as a lemma because every
subsequent step is an instance.
\begin{lemma}[Tilt derivatives]
\label{lem:tiltderiv}
Fix $x$ and let $\omega_\theta$ depend smoothly on a scalar $\theta$ with
$\partial_\theta\omega_\theta=g_\theta(Y,Z)$, assuming
$E_{P_0}[e^{\omega_\theta}(1+g_\theta^2)\mid x]<\infty$ in a neighborhood. Then for
any $h$ with $E_{P_0}[|h|\,e^{\omega_\theta}(1+|g_\theta|)\mid x]<\infty$,
\begin{equation}
  \frac{\partial}{\partial\theta}\,\Ptilt{\omega_\theta}[h\mid x]
  =\Cov^{\omega_\theta}\!\big(h,\;g_\theta\mid x\big),
  \label{eq:tiltderiv}
\end{equation}
where $\Cov^{\omega}(\cdot,\cdot\mid x)$ is the covariance under the tilted law
$dP^{\omega}\propto e^{\omega}\,dP_0(\cdot\mid x)$.
\end{lemma}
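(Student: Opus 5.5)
The plan is to write the tilted expectation as a ratio $N(\theta)/D(\theta)$, with
\[
N(\theta)=E_{P_0}[h\,e^{\omega_\theta}\mid x],\qquad D(\theta)=E_{P_0}[e^{\omega_\theta}\mid x],
\]
differentiate numerator and denominator separately under the integral sign, and then apply the quotient rule. Formally,
\[
\partial_\theta N = E_{P_0}[h\,g_\theta\,e^{\omega_\theta}\mid x],\qquad \partial_\theta D = E_{P_0}[g_\theta\,e^{\omega_\theta}\mid x].
\]
The quotient rule then gives
\[
\frac{\partial_\theta N}{D}-\frac{N\,\partial_\theta D}{D^2}=\Ptilt{\omega_\theta}[h g_\theta\mid x]-\Ptilt{\omega_\theta}[h\mid x]\,\Ptilt{\omega_\theta}[g_\theta\mid x],
\]
which is exactly $\Cov^{\omega_\theta}(h,g_\theta\mid x)$. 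This algebra is routine. The only point needing attention is that $D(\theta)\in(0,\infty)$, which holds because $e^{\omega_\theta}>0$ and the integrability hypothesis keeps it finite.

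The substantive step is justifying the two interchanges of derivative and expectation. I will use the mean value theorem together with dominated convergence. For $|\delta|$ small, the difference quotient
\[
\frac{e^{\omega_{\theta+\delta}}-e^{\omega_\theta}}{\delta}
\]
equals $g_{\tilde\theta}\,e^{\omega_{\tilde\theta}}$ at some intermediate $\tilde\theta$ between $\theta$ and $\theta+\delta$. So I need an integrable envelope for
\[
\sup_{|\tilde\theta-\theta|\le\delta_0}|g_{\tilde\theta}|\,e^{\omega_{\tilde\theta}}
\]
for the denominator, and for the same quantity multiplied by $|h|$ for the numerator.

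For the denominator, I will bound $|g|e^{\omega}\le \tfrac12(1+g^2)e^{\omega}$, which is integrable by the stated hypothesis $E_{P_0}[e^{\omega_\theta}(1+g_\theta^2)\mid x]<\infty$ on a neighbourhood. For the numerator, I will use $E_{P_0}[|h|e^{\omega}(1+|g|)\mid x]<\infty$.

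In the cases actually used, namely $\omega_\theta=\gamma[t+b]+\kappa s$ with $g\in\{t+b,\,s\}$, the envelope over a neighbourhood comes from convexity of the exponential. Since $\omega$ is linear in the parameters,
\[
e^{\omega_{\tilde\theta}}\le e^{\omega_{\theta-\delta_0}}+e^{\omega_{\theta+\delta_0}}\quad\text{for }\tilde\theta\in[\theta-\delta_0,\theta+\delta_0].
\]
Because both endpoints lie in the open set $\mathcal N$ of \Cref{ass:overlap}, the pointwise hypotheses at those endpoints give integrable domination uniformly over the interval. This is the reason \Cref{ass:overlap} is stated on an open set with second moments.

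I expect the main obstacle to be exactly this uniform domination step when $\omega_\theta$ is a general smooth family rather than affine in $\theta$. There the hypothesis as stated, at each $\theta$ in a neighbourhood, is pointwise and not uniform. I would first read the neighbourhood condition as supplying integrability at the two endpoints and use convexity as above, which covers every application in the paper. For a genuinely nonlinear $\omega_\theta$, I would add the mild requirement that the supremum over the neighbourhood be integrable. With that in hand, dominated convergence yields differentiability of $N$ and $D$, the quotient rule yields \eqref{eq:tiltderiv}, and continuity of the derivative in $\theta$ follows by the same envelope.
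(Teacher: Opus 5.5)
Your proposal is correct and follows the paper's proof: write the tilted mean as $N(\theta)/D(\theta)$, differentiate numerator and denominator under the integral by dominated convergence, and apply the quotient rule to obtain $\Cov^{\omega_\theta}(h,g_\theta\mid x)$. Your convexity envelope $e^{\omega_{\tilde\theta}}\le e^{\omega_{\theta-\delta_0}}+e^{\omega_{\theta+\delta_0}}$ for tilts affine in the parameter makes rigorous the paper's one-line appeal to the second-moment condition of \Cref{ass:overlap} on the open set $\mathcal N$, and your caveat is well taken: a genuinely nonlinear $\omega_\theta$ would need a uniform envelope, not merely the pointwise hypothesis.
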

\begin{proof}
Write $\Ptilt{\omega_\theta}[h\mid x]=N(\theta)/D(\theta)$ with
$N=E_{P_0}[h\,e^{\omega_\theta}\mid x]$, $D=E_{P_0}[e^{\omega_\theta}\mid x]$.
Dominated convergence (justified by the second-moment requirement of
\Cref{ass:overlap}) gives
$N'=E_{P_0}[h\,g_\theta e^{\omega_\theta}\mid x]$ and
$D'=E_{P_0}[g_\theta e^{\omega_\theta}\mid x]$. Then
$\partial_\theta(N/D)=N'/D-(N/D)(D'/D)
   =\Ptilt{\omega_\theta}[h g_\theta\mid x]
   -\Ptilt{\omega_\theta}[h\mid x]\,\Ptilt{\omega_\theta}[g_\theta\mid x]
   =\Cov^{\omega_\theta}(h,g_\theta\mid x)$.
\end{proof}
Two specializations are used repeatedly. With $\omega_{\gamma,\kappa}$ and
$\theta=\gamma$ (so $g_\gamma=t+b$),
\begin{equation}
  \partial_\gamma\Ptilt{\omega_{\gamma,\kappa}}[h\mid x]
  =\Cov^{\omega_{\gamma,\kappa}}\!\big(h,\;t(Y)+b(Z)\mid x\big);
  \label{eq:dgamma}
\end{equation}
with $\theta=\kappa$ (so $g_\kappa=s$),
\begin{equation}
  \partial_\kappa\Ptilt{\omega_{\gamma,\kappa}}[h\mid x]
  =\Cov^{\omega_{\gamma,\kappa}}\!\big(h,\;s\mid x\big).
  \label{eq:dkappa}
\end{equation}
\section{Proof of \texorpdfstring{\thmpoint}{Theorem 1} (point identification under co-drift)}
Set $\kappa=0$ and write $\omega_\gamma=\gamma[t(Y)+b(Z)]$. Define the
\emph{bridge-matching map}
\begin{equation}
  B(\gamma;x)\;=\;\Ptilt{\omega_\gamma}[b(Z)\mid x],
  \qquad
  B_1(x)\;=\;E_{P_1}[b(Z)\mid x].
  \label{eq:Bmap}
\end{equation}
Both are identified from observed data: $B(\gamma;x)$ from the source law via
\eqref{eq:tiltexp}, and $B_1(x)$ from the target, since $(Z,X)$ are observed at
$S=1$. \thmpoint\ asserts that $B(\gamma;x)=B_1(x)$ has a unique root
$\gamma^\star(x)$, that it does not depend on $x$ under the model, and that the
estimand \eqref{eq:estimandA} below follows.
\subsection{Consistency of the target bridge mean with the tilt}
\begin{lemma}[The target bridge mean as a tilted source moment]
\label{lem:targetistilt}
Under the structural tilt \eqref{eq:structtilt} with $\kappa=0$ at the true
$\gamma^\star$, $B_1(x)=B(\gamma^\star;x)$ for every $x$.
\end{lemma}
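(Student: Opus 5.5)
The plan is to compute the target bridge mean directly from the structural tilt \eqref{eq:structtilt}, read as a change of measure from $P_0(\cdot\mid x)$ to $P_1(\cdot\mid x)$. With $\kappa=0$ the Radon--Nikodym derivative is $e^{\omega_{\gamma^\star}}/C(\gamma^\star,0;x)$. We have $C(\gamma^\star,0;x)=E_{P_0}[e^{\omega_{\gamma^\star}}\mid x]$, which is finite by \Cref{ass:overlap}, since $\gamma^\star\in\Gamma$ and $\mathcal N\supset\Gamma\times\{0\}$. It is also strictly positive because the exponential is positive. The density ratio is therefore a well-defined, normalized likelihood ratio, and $P_1(\cdot\mid x)\ll P_0(\cdot\mid x)$.

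First, I would establish that $b(Z)$ is $P_1(\cdot\mid x)$-integrable. The bound $|b|\,e^{\omega}\le(1+b^2)e^{\omega}$ combined with the second-moment condition of \Cref{ass:overlap} gives $E_{P_0}[|b(Z)|e^{\omega_{\gamma^\star}}\mid x]<\infty$. Then I would apply the change-of-measure identity to the function $b(Z)$, viewed as a function of $(Y,Z)$:
\[
E_{P_1}[b(Z)\mid x]=E_{P_0}\!\Big[b(Z)\,\frac{e^{\omega_{\gamma^\star}(Y,Z)}}{C(\gamma^\star,0;x)}\;\Big|\;x\Big]
=\frac{E_{P_0}[b(Z)e^{\omega_{\gamma^\star}}\mid x]}{E_{P_0}[e^{\omega_{\gamma^\star}}\mid x]}.
\]
The right-hand side is $\Ptilt{\omega_{\gamma^\star}}[b(Z)\mid x]=B(\gamma^\star;x)$ by \eqref{eq:tiltexp} and \eqref{eq:Bmap}.

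One point needs care, and it is the place I expect any real obstacle to sit. The expectation defining $B_1(x)$ is over the target \emph{margin} of $Z$, whereas the tilt is stated for the \emph{joint} law of $(Y,Z)$. I would note that the marginal expectation of $b(Z)$ under $P_1(\cdot\mid x)$ equals its expectation under the joint law $P_1(y,z\mid x)$. This is simply the tower property, so integrating over $Y$ with the joint density ratio is legitimate even though $Y$ is never observed at $S=1$. The identity is thus a population statement about the structural model. The observed-data side $B_1(x)$ is identified from target $(Z,X)$, and $B(\gamma^\star;x)$ is identified from the source joint law.

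The remaining subtleties are measure-theoretic. The conditional statements should hold for $P_1(X)$-almost every $x$, matching the quantifier in \Cref{ass:overlap}. Overlap in $X$ ensures that the source conditional law at such $x$ is defined, so that $B(\gamma;x)$ is meaningful. Neither \Cref{ass:relevance} nor uniqueness is needed at this step; those enter only in the subsequent monotonicity argument of \thmpoint.
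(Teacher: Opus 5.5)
Your proposal is correct and follows the paper's proof: both read the structural tilt with $\kappa=0$ as a change of measure, so that $E_{P_1}[h\mid x]=\Ptilt{\omega_{\gamma^\star}}[h\mid x]$ for integrable $h$, and then take $h=b(Z)$. Your extra checks are sound and the paper leaves them implicit: finiteness of $C(\gamma^\star,0;x)$ and integrability of $b(Z)$ from \Cref{ass:overlap}, the tower-property passage from the joint law to the bridge margin, and the $P_1(X)$-almost-every-$x$ qualifier.
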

\begin{proof}
By \eqref{eq:structtilt}, for any integrable $h$,
$E_{P_1}[h\mid x]=E_{P_0}[h\,e^{\omega_{\gamma^\star}}\mid x]/C(\gamma^\star,0;x)
   =\Ptilt{\omega_{\gamma^\star}}[h\mid x]$. Take $h=b(Z)$.
\end{proof}
Thus the truth $\gamma^\star$ is \emph{a} root of $B(\cdot;x)=B_1(x)$. Uniqueness
and existence are separate and require \Cref{ass:relevance,ass:overlap}.
\subsection{Strict monotonicity and uniqueness}
\begin{lemma}[Strict monotonicity]
\label{lem:mono}
Under \Cref{ass:relevance}, $\gamma\mapsto B(\gamma;x)$ is strictly increasing on
$\Gamma$, for $P_1(X)$-almost every $x$.
\end{lemma}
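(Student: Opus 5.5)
The plan is to differentiate the bridge-matching map directly and read off monotonicity from \Cref{ass:relevance}. Fix $x$ in the full-measure set on which \Cref{ass:relevance,ass:overlap} hold. Apply \Cref{lem:tiltderiv} with $\theta=\gamma$, $\omega_\gamma=\gamma[t(Y)+b(Z)]$, $g_\gamma=t+b$ and $h=b(Z)$. This is the specialization \eqref{eq:dgamma} at $\kappa=0$, and it gives
\[
  \partial_\gamma B(\gamma;x)=\Cov^{\omega_\gamma}\big(b(Z),\,t(Y)+b(Z)\mid x\big)
\]
for every $\gamma$ in the open set $\mathcal N$ of \Cref{ass:overlap}, which contains $\Gamma$.

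Before using the lemma I will verify its hypotheses. The lemma needs $E_{P_0}[e^{\omega_\gamma}(1+g_\gamma^2)\mid x]<\infty$ near each $\gamma\in\Gamma$. This follows from \Cref{ass:overlap} at $\kappa=0$, since $(t+b)^2\le 2t^2+2b^2$. It also needs $E_{P_0}[|b|e^{\omega_\gamma}(1+|t+b|)\mid x]<\infty$. This follows from the same bound together with $|b|\le 1+b^2$ and $|b||t+b|\le \tfrac12 b^2+\tfrac12(t+b)^2$.

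When $b=(b_1,\dots,b_q)$ is used coordinatewise, I apply the argument to each $B_k(\gamma;x)=\Ptilt{\omega_\gamma}[b_k\mid x]$ separately. \Cref{ass:relevance} is stated for each $k$. In the summed case $b=\sum_k b_k$, I sum the $q$ covariances, each of which is at least $\varepsilon$.

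At $\kappa=0$, \Cref{ass:relevance} gives $\partial_\gamma B(\gamma;x)\ge\varepsilon>0$ for all $\gamma\in\Gamma$. The map $\gamma\mapsto B(\gamma;x)$ is differentiable on the interval $\Gamma$, so the mean value theorem yields, for $\gamma_1<\gamma_2$ in $\Gamma$,
\[
  B(\gamma_2;x)-B(\gamma_1;x)\ge\varepsilon(\gamma_2-\gamma_1)>0.
\]
This is strict monotonicity, and in fact a quantitative version of it that will later serve uniqueness and stability of the root. Since $x$ ranged over a $P_1(X)$-full set, the claim holds almost everywhere.

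The main obstacle I expect is the interchange of differentiation and conditional expectation at the endpoints of the compact interval $\Gamma$. This is why I will work on the open neighbourhood $\mathcal N$ rather than on $\Gamma$ itself. The local domination needed for dominated convergence comes from convexity of the exponential: for $|\gamma'-\gamma|$ small, $e^{\gamma'(t+b)}\le e^{(\gamma-\delta)(t+b)}+e^{(\gamma+\delta)(t+b)}$. Both terms on the right are integrable, with the second-moment weights, as long as $\gamma\pm\delta\in\mathcal N$.
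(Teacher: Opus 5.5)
Your proposal is correct and follows the paper's own argument: you differentiate $B(\cdot;x)$ through the tilt-derivative identity \eqref{eq:dgamma} at $\kappa=0$ to obtain $\Cov^{\omega_\gamma}(b(Z),t(Y)+b(Z)\mid x)$, and you bound this below by $\varepsilon$ using \Cref{ass:relevance}. Three further steps are details the paper leaves implicit rather than a different route: the mean-value step, the explicit check of the hypotheses of \Cref{lem:tiltderiv} on the open $\kappa=0$ section of $\mathcal N$, and the separate treatment of the coordinatewise and summed bridge, where the derivative is in fact at least $q\varepsilon$.
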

\begin{proof}
By \eqref{eq:dgamma} with $h=b(Z)$ and $\kappa=0$,
\[
  \partial_\gamma B(\gamma;x)
  =\Cov^{\omega_\gamma}\!\big(b(Z),\,t(Y)+b(Z)\mid x\big)
  =\Var^{\omega_\gamma}\!\big(b(Z)\mid x\big)
   +\Cov^{\omega_\gamma}\!\big(b(Z),t(Y)\mid x\big).
\]
\Cref{ass:relevance} states that this quantity is at least $\varepsilon>0$ for
$P_1(X)$-almost every $x$ and every $\gamma\in\Gamma$, which is what uniqueness of
the root at each $x$ requires; a condition holding only on a set of positive
measure would not deliver it. (When $b(Z)$ and $t(Y)$ are positively associated
under the tilt, the generic co-drift case, the second term is nonnegative and
the first alone suffices.)
\end{proof}
Strict monotonicity gives \emph{at most one} root, hence uniqueness once existence
is shown.
\subsection{Existence of the root}
$B(\cdot;x)$ is continuous in $\gamma$ (differentiable by \Cref{lem:tiltderiv}),
so by the intermediate value theorem a root exists if and only if $B_1(x)$ lies in
the \emph{range} of $B(\cdot;x)$. We make this a transparent condition.
\begin{lemma}[Range/existence condition]
\label{lem:existence}
Let $\underline b=\operatorname*{ess\,inf} b(Z)$ and
$\overline b=\operatorname*{ess\,sup} b(Z)$ under $P_0(\cdot\mid x)$. Then
$B(\gamma;x)\to\underline b$ as $\gamma\to-\infty$ and
$B(\gamma;x)\to\overline b$ as $\gamma\to+\infty$ along the co-drift direction,
and $B(\cdot;x)$ maps $\R$ onto $(\underline b,\overline b)$. Hence a (unique) root
exists whenever
\begin{equation}
  \underline b \;<\; E_{P_1}[b(Z)\mid x] \;<\; \overline b,
  \label{eq:rangecond}
\end{equation}
i.e.\ the target bridge mean lies strictly inside the source support of $b(Z)$.
\end{lemma}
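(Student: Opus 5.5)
The plan is to treat $B(\cdot;x)$ as the mean of $b(Z)$ under a one-parameter exponential family and to study its behaviour at $\pm\infty$ by a concentration argument. The family is generated by the sufficient statistic $T:=t(Y)+b(Z)$ under $P_0(\cdot\mid x)$. Three facts need to be established, and existence then follows in one line.

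\emph{Step 1: strict bounds at finite $\gamma$.} For every finite $\gamma$ at which $C(\gamma,0;x)<\infty$, the tilted law $P^{\omega_\gamma}(\cdot\mid x)$ is equivalent to $P_0(\cdot\mid x)$, since $e^{\omega_\gamma}>0$. Hence $\underline b\le B(\gamma;x)\le\overline b$. Equality would force $b(Z)$ to be a.s.\ constant, and this is excluded by \Cref{ass:relevance}: a degenerate $b$ gives $\Cov^{\omega_\gamma}(b,T\mid x)=0<\varepsilon$. So the range lies inside the open interval $(\underline b,\overline b)$.

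\emph{Step 2: continuity and monotonicity.} Continuity, and indeed differentiability, on the domain of finiteness of $C$ is \Cref{lem:tiltderiv}. Strict increase on $\Gamma$ is \Cref{lem:mono}. To speak of $\gamma\to\pm\infty$ I will need the integrability of \Cref{ass:overlap} on all of $\R$, i.e.\ $T$ having a finite conditional moment generating function. I will state this explicitly as the setting of the lemma. Otherwise the argument runs on the maximal interval of finiteness, and the limits are taken at its endpoints.

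\emph{Step 3: the limits (main obstacle).} Let $\overline T=\operatorname*{ess\,sup}T$. As $\gamma\to+\infty$, the normalized weights $e^{\gamma T}/C$ push the mass onto $\{T>\overline T-\delta\}$ for every $\delta>0$. The standard estimate is
\[
P^{\omega_\gamma}(T\le\overline T-2\delta)\le e^{-\gamma\delta}\big/P_0(T>\overline T-\delta).
\]
Dominated convergence, using the second-moment integrability of \Cref{ass:overlap} to control $b$, then gives
\[
B(\gamma;x)-E_{P_0}\big[b\,e^{\gamma T}\mathbf 1\{T>\overline T-\delta\}\mid x\big]\big/C\;\to\;0 .
\]
The delicate point is that the tilt runs along $t+b$, not along $b$. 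The limit is therefore the $b$-mean on the top level set of $T$, and it equals $\overline b$ only when that level set sits where $b$ is near $\overline b$. This is what ``along the co-drift direction'' must mean. I would make it precise as the condition that
\[
\operatorname*{ess\,inf}\{\,b(Z):T>\overline T-\delta\,\}\to\overline b \quad\text{as }\delta\downarrow0 .
\]
This condition holds, for instance, when $t$ is bounded and $b$ is unbounded; when $b$ and $t$ are comonotone; and in the Gaussian instance, where $B$ is affine with positive slope. Under it the limit is $\overline b$. The case $\gamma\to-\infty$ is symmetric, with $\underline T$ and $\underline b$.

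Once the limits are in hand, $B(\cdot;x)$ is continuous and maps $\R$ onto $(\underline b,\overline b)$ by the intermediate value theorem and Step 1. Condition \eqref{eq:rangecond} then yields a root, and it is unique by \Cref{lem:mono}. For the uniqueness step I will note one further requirement: the relevance bound must hold wherever $B$ could re-attain the value $B_1(x)$, i.e.\ on the whole real line, unless $\Gamma$ is taken to be that line. If it holds only on $\Gamma$, the conclusion is stated as a unique root in $\Gamma$.
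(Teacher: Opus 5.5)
Your proposal is correct and follows the same route as the paper: the tilt along $t+b$ concentrates on the top of $T$, a condition ties that top to $\overline b$, and continuity gives the range. It is also more careful than the paper's sketch in three respects. Your ess-inf condition is sharper than the paper's ``$b$ comonotone with $t+b$'', which still allows the top level set of $T$ to carry several values of $b$, so that the limit falls short of $\overline b$. You obtain surjectivity from the intermediate value theorem plus Step~1 rather than from monotonicity. And you note that the paper's global bijection tacitly needs a finite moment generating function and the relevance bound on all of $\R$, not just on the compact $\Gamma$.

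The only repair is cosmetic. When $\overline T=+\infty$, as in your example with $b$ unbounded above, the displayed estimate and the ess-inf condition should be stated with a level $K\to\infty$ in place of $\overline T-\delta$.
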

\begin{proof}
As $\gamma\to+\infty$ along $t+b$, the tilt $e^{\gamma[t+b]}$ concentrates mass on
the essential supremum of $t+b$; when $b$ is comonotone with $t+b$ (the relevant
case under \Cref{ass:relevance}), this drives $\Ptilt{\omega_\gamma}[b\mid x]\to
\overline b$. The reverse limit is symmetric. Continuity and strict monotonicity
(\Cref{lem:mono}) then give a continuous strictly increasing bijection
$\R\to(\underline b,\overline b)$, so \eqref{eq:rangecond} is necessary and
sufficient for a root.
\end{proof}
\begin{remark}[Interpretation of the range condition]
Condition \eqref{eq:rangecond} is a substantive existence requirement rather than
a technicality. It fails only if the target bridge mean lies at or beyond the source
\emph{support} of the bridge, an extreme and detectable form of extrapolation that
the overlap diagnostic (\Cref{ass:overlap}, and empirically the propensity-overlap
check) already flags. When \eqref{eq:rangecond} fails, no finite co-drift $\gamma$
reproduces the target bridge mean; this is a genuine violation of the co-drift
model rather than a defect of the estimator, and it is the situation the
sensitivity analysis of \thmset\ is designed to expose.
\end{remark}
\subsection{Covariate-invariance of the root and the estimand}
Under the structural model the same scalar $\gamma^\star$ solves
$B(\gamma;x)=B_1(x)$ for every $x$ (\Cref{lem:targetistilt} holds at a common
$\gamma^\star$). Covariate-invariance of the root is therefore a \emph{testable
implication}: if the per-$x$ (or per-stratum) roots disagree beyond sampling
error, the common-$\gamma$ co-drift model is refuted. This is the
over-identification content of \thmpoint. With $q$ bridge coordinates
$b_1,\dots,b_q$, each yields its own matching equation $B^{(j)}(\gamma;x)=B_1^{(j)}(x)$
sharing the single unknown $\gamma$; agreement of the $q$ roots is the
bridge-internal consistency check. The test constrains the \emph{observed} bridge channel only. Because $Y$ is unobserved at $S=1$, no
function of the data restricts the residual direction $s$; the extrapolation
from bridge to outcome is carried entirely by \Cref{ass:codrift} and quantified by
\thmset.
Finally, applying \Cref{lem:targetistilt} with $h=Y$ gives
$E_{P_1}[Y\mid x]=\Ptilt{\omega_{\gamma^\star}}[Y\mid x]$, and averaging over the
identified $P_1(X)$,
\begin{equation}
  \mu_1
  =E\!\left[\;
     \frac{E_{P_0}[Y\,e^{\gamma^\star[t(Y)+b(Z)]}\mid X]}
          {E_{P_0}[e^{\gamma^\star[t(Y)+b(Z)]}\mid X]}\;\middle|\;S=1\right],
  \label{eq:estimandA}
\end{equation}
which is the estimand of the main text. This completes the proof of
\thmpoint. \hfill$\square$
\subsection{Proof of \texorpdfstring{\Cref{prop:general}}{Proposition 1} (distribution-general drift and core)}
Both claims rest on a single exponential-family fact. Under the natural-parameter
tilt $dP^{\gamma,\kappa}\propto\exp\{\gamma[t(Y)+b(Z)]+\kappa s\}\,dP_0$, the
cumulant-generating identity gives, for any integrable statistic $h$ of $(Y,Z)$,
\begin{equation}
  \partial_\gamma E^{\gamma,\kappa}[h\mid x]
  =\Cov^{\gamma,\kappa}\!\big(h,\;t+b\mid x\big),
  \qquad
  \partial_\kappa E^{\gamma,\kappa}[h\mid x]
  =\Cov^{\gamma,\kappa}\!\big(h,\;s\mid x\big).
  \label{eq:cgf}
\end{equation}
\emph{(i) Drift and transfer rate.} The bridge-matching equation of \thmpoint\ sets
$E_{P_1}[b\mid x]=E^{\gamma^\star}[b\mid x]$. Differentiating the identified path
and using \eqref{eq:cgf} with $h=b$ and $h=t$, a unit change in $\gamma$ induces
changes $\Cov^{\gamma^\star}(b,t{+}b\mid x)$ in the bridge moment and
$\Cov^{\gamma^\star}(t,t{+}b\mid x)$ in the outcome moment; their ratio is
$c_t(x)$ of \eqref{eq:c-general}.
The multiplier appearing in term (III) of \thmeif\ is the \emph{numerator} of this
ratio carried to the outcome scale,
$c=\partial_\gamma\mu_1=E_{P_1}[\Cov^{\gamma^\star}(Y,t{+}b\mid X)]$; the
denominator enters the influence function separately, as the Stage-1 Jacobian.
\emph{(ii) Core.} Since $s$ is the tilted-law
projection residual, $\Cov^{\gamma^\star}(b,s\mid x)=0$ and, writing
$t=\Pi^{\gamma^\star}[t\mid Z,x]+s$ with the projection $\Hz(x)$-measurable,
$\Cov^{\gamma^\star}(t,s\mid x)=\Var^{\gamma^\star}(s\mid x)$. By \eqref{eq:cgf}
with $h=t$, $\partial_\kappa E^{\gamma^\star}[t\mid x]=\Var^{\gamma^\star}(s\mid x)$;
carrying to the outcome scale multiplies by $\sigma^0_Y$, and integrating over
$P_1(X)$ and doubling for the two-sided sweep gives \eqref{eq:core-general}.
Neither step uses a distributional form beyond the existence of the tilt moments,
so the identities hold for any working exponential family; the Gaussian and
Bernoulli evaluations below are substitutions of the corresponding
(co)variances. \hfill$\square$
\subsection{Proof of \texorpdfstring{\corgauss}{Corollary 1} (Gaussian instance)}
Let $(Y,Z)\mid X{=}x$ be bivariate Gaussian with means $(m_0(x),m_{Z0}(x))$,
variances $(\sigma_Y^2,\sigma_Z^2)$, and correlation $r$ (so
$\Cov(Y,Z\mid x)=r\sigma_Y\sigma_Z$), and take the loading constants at their
conditional values. The tilt exponent is $\gamma[t(y)+b(z)]$ with
$t(y)=(y-m_0)/\sigma_Y$ and $b(z)=(z-m_{Z0})/\sigma_Z$, i.e.\ a linear form
$\ell(y,z)=\gamma(\sigma_Y^{-1}y+\sigma_Z^{-1}z)+\text{const}$.
\emph{Tilting a Gaussian by a linear exponent.} For a Gaussian vector $V\sim
N(\mu,\Sigma)$, the tilt $dP^{a^\top V}\propto e^{a^\top V}dP$ is again Gaussian
with mean $\mu+\Sigma a$ and unchanged covariance $\Sigma$. Here
$V=(Y,Z)^\top$, $\Sigma=\begin{psmallmatrix}\sigma_Y^2 & r\sigma_Y\sigma_Z\\
r\sigma_Y\sigma_Z & \sigma_Z^2\end{psmallmatrix}$, and $a=\gamma(\sigma_Y^{-1},
\sigma_Z^{-1})^\top$. Hence the tilted means are
\begin{align}
  E^{\omega_\gamma}[Y\mid x]
    &=m_0+\big(\Sigma a\big)_1
     =m_0+\gamma\big(\sigma_Y + r\sigma_Y\big)
     =m_0+\gamma\sigma_Y(1+r),\\
  E^{\omega_\gamma}[Z\mid x]
    &=m_{Z0}+\big(\Sigma a\big)_2
     =m_{Z0}+\gamma\big(r\sigma_Z+\sigma_Z\big)
     =m_{Z0}+\gamma\sigma_Z(1+r).
\end{align}
\emph{Bridge matching.} $E_{P_1}[b(Z)\mid x]=(m_{Z1}-m_{Z0})/\sigma_Z$ must equal
$E^{\omega_\gamma}[b(Z)\mid x]=\gamma(1+r)$, so
\begin{equation}
  \gamma^\star=\frac{m_{Z1}-m_{Z0}}{\sigma_Z(1+r)}
             =\frac{\Delta_Z}{\sigma_Z(1+r)} .
\end{equation}
\emph{Outcome shift.} The identified target outcome mean shift is
\begin{equation}
  E_{P_1}[Y\mid x]-m_0(x)=\gamma^\star\sigma_Y(1+r)
    =\frac{\Delta_Z}{\sigma_Z}\,\sigma_Y
    =\Delta_Z\,\frac{\sigma_Y}{\sigma_Z}.
\end{equation}
The factor $(1+r)$ cancels between the bridge-matching and outcome-shift steps.
No alternative scaling of this statement is used anywhere in the paper: the
outcome shift is $\Delta_Z\,\sigma_Y/\sigma_Z$, the local transfer rate is
$\sigma^0_Y c_t(x)=\sigma_Y$ (since $\Cov(t,t{+}b)=\Cov(b,t{+}b)=1+r$ here), and
the efficiency coefficient of \thmeif\ is
$c=\partial_\gamma\mu_1=\sigma_Y(1+r)$. These are three different objects with
three different values, and the earlier draft's $\sigma_Y^2/\sigma_Z^2$ is none of
them.
\hfill$\square$
\section{Proof of \texorpdfstring{\thmset}{Theorem 2} (identified set and irreducible core)}
Now $|\kappa|\le\bar\kappa$ and $s$ is the projection residual \eqref{eq:projA}.
Define the residual-augmented bridge map
\begin{equation}
  B(\gamma,\kappa;x)=\Ptilt{\omega_{\gamma,\kappa}}[b(Z)\mid x],
  \qquad
  \text{and let }\gamma(\kappa;x)\text{ solve }B(\gamma,\kappa;x)=B_1(x).
  \label{eq:Bkappa}
\end{equation}
Set $\mu_1(\kappa)=E\big[\Ptilt{\omega_{\gamma(\kappa),\kappa}}[Y\mid X]\mid S=1\big]$.
\subsection{Part (i): the identified set}
\begin{lemma}[Existence, uniqueness and continuity of $\gamma(\kappa;\cdot)$]
\label{lem:gammakappa}
Under \Cref{ass:relevance,ass:overlap}, for each $\kappa$ with $|\kappa|\le\bar\kappa$
the equation $B(\gamma,\kappa;x)=B_1(x)$ has a unique root $\gamma(\kappa;x)$, and
$\kappa\mapsto\gamma(\kappa;x)$ is continuously differentiable.
\end{lemma}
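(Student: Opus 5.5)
The plan is to transplant the $\kappa=0$ argument of \Cref{lem:mono,lem:existence} to each fixed $\kappa$, and then get regularity in $\kappa$ from the implicit function theorem. I will treat uniqueness, existence and differentiability in that order.

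\emph{Uniqueness.} Fix $x$ and $\kappa\in[-\bar\kappa,\bar\kappa]$. By \Cref{lem:tiltderiv}, specialised in \eqref{eq:dgamma}, we have
\[
\partial_\gamma B(\gamma,\kappa;x)=\Cov^{\omega_{\gamma,\kappa}}\big(b(Z),t(Y)+b(Z)\mid x\big).
\]
\Cref{ass:relevance} bounds this below by $\varepsilon$ uniformly on $\Gamma\times[-\bar\kappa,\bar\kappa]$. The dominated-convergence step is licensed because $(\gamma,\kappa)$ lies in the open set $\mathcal N$ of \Cref{ass:overlap}. Hence $\gamma\mapsto B(\gamma,\kappa;x)$ is strictly increasing on $\Gamma$, and there is at most one root in $\Gamma$.

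\emph{Existence.} The tilted law $dP^{\kappa s}\propto e^{\kappa s}dP_0(\cdot\mid x)$ is equivalent to $P_0(\cdot\mid x)$, because $e^{\kappa s}>0$ and is integrable by \Cref{ass:overlap}. Rewriting $B(\gamma,\kappa;x)$ as a $\gamma[t+b]$-tilt of this equivalent base measure, the argument of \Cref{lem:existence} applies verbatim. The essential infimum and supremum of $b(Z)$ are unchanged under an equivalent measure, so the range of $B(\cdot,\kappa;x)$ is again $(\underline b,\overline b)$, and the range condition \eqref{eq:rangecond} gives a root.

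There is one gap here. Monotonicity is only guaranteed on $\Gamma$, while the range argument is over $\R$. I will close it using the definition of $\Gamma$ in \Cref{ass:relevance}: $\Gamma$ is assumed to contain the roots of \eqref{eq:bridgematch-kappa}. So existence in $\R$ together with monotonicity on $\Gamma$ yields a unique root in $\Gamma$. This is the step I expect to be the main obstacle, because the argument is somewhat circular. If it is needed, I would first try a continuity argument starting from $\gamma(0;x)=\gamma^\star\in\Gamma$: the root path cannot leave the compact interval $\Gamma$ without crossing its boundary, and uniform monotonicity together with continuity of $B$ in $\kappa$ controls that crossing.

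\emph{Continuous differentiability.} By \Cref{lem:tiltderiv} and \Cref{ass:overlap}, $B$ is $C^1$ jointly in $(\gamma,\kappa)$ on $\mathcal N$. The $\kappa$-partial, from \eqref{eq:dkappa}, is
\[
\partial_\kappa B=\Cov^{\omega_{\gamma,\kappa}}(b,s\mid x).
\]
Joint continuity of the tilted covariances follows again from dominated convergence, using the second-moment bound in \Cref{ass:overlap}. Since $\partial_\gamma B\ge\varepsilon>0$, the implicit function theorem applied at each $(\gamma(\kappa;x),\kappa)$ gives a local $C^1$ solution. Uniqueness identifies that local solution with $\gamma(\kappa;x)$, so the local pieces patch into a global $C^1$ map on $[-\bar\kappa,\bar\kappa]$ with
\[
\partial_\kappa\gamma=-\frac{\Cov^{\omega}(b,s\mid x)}{\Cov^{\omega}(b,t+b\mid x)}.
\]
This is exactly the formula used in part~(ii) of \Cref{thm:set}.
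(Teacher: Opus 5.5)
This is essentially the paper's proof. It uses strict monotonicity from the relevance covariance on $\Gamma\times[-\bar\kappa,\bar\kappa]$, existence by rerunning the range argument of \Cref{lem:existence} on the equivalent base measure $e^{\kappa s}\,dP_0(\cdot\mid x)$, and then the implicit function theorem. Like the paper, you rely on the range condition \eqref{eq:rangecond} and on that lemma's limiting-concentration step.

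Your handling of the $\Gamma$-versus-$\R$ mismatch is sound and more careful than the paper, which simply treats $B(\cdot,\kappa;x)$ as increasing on all of $\R$. You get existence over $\R$ and then read \Cref{ass:relevance} as placing every root in $\Gamma$. You also check joint $C^1$ regularity and patch the local implicit-function solutions by uniqueness, which the paper does not do. The continuation fallback is unnecessary. It would also not work as sketched without a quantitative bound, such as $\bar\kappa\,\sup|\Cov^{\omega}(b,s\mid x)|/\varepsilon$ not exceeding the distance from $\gamma^\star$ to the boundary of $\Gamma$.
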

\begin{proof}
Fix $\kappa$ with $|\kappa|\le\bar\kappa$. By \eqref{eq:dgamma},
$\partial_\gamma B(\gamma,\kappa;x)=\Cov^{\omega_{\gamma,\kappa}}(b(Z),t{+}b\mid x)$,
which is at least $\varepsilon>0$ by \Cref{ass:relevance}, which is stated, for this
reason, over the whole product $\Gamma\times[-\bar\kappa,\bar\kappa]$ rather than
at $\kappa=0$ only. Hence $B(\cdot,\kappa;x)$ is strictly increasing.
For existence we argue on the range rather than the support. By
\Cref{ass:overlap}, $e^{\kappa s}$ is strictly positive and $P_0(\cdot\mid
x)$-integrable, so the measure $dP^{\kappa s}\propto e^{\kappa s}dP_0(\cdot\mid
x)$ is \emph{equivalent} to $P_0(\cdot\mid x)$: the two have the same null sets,
hence the same essential infimum and supremum of $b(Z)$, and
$(\underline b,\overline b)$ is unchanged. Applying \Cref{lem:existence} with
$P_0(\cdot\mid x)$ replaced by this equivalent base measure, an argument that
uses only continuity, strict monotonicity and the limiting concentration of the
$\gamma$-tilt, gives that $\gamma\mapsto B(\gamma,\kappa;x)$ is a continuous
strictly increasing bijection from $\R$ onto $(\underline b,\overline b)$ for
every such $\kappa$. Since \eqref{eq:rangecond} places $B_1(x)$ strictly inside
that interval, a unique root exists for every $\kappa\in[-\bar\kappa,\bar\kappa]$.
Equivalence of the measures, rather than equality of supports, is what the
argument requires: two laws can share a support while assigning the endpoint
neighbourhoods different mass, and it is the range of $B(\cdot,\kappa;x)$, not the
support of $b(Z)$, that must contain $B_1(x)$.
The implicit function theorem then applies because $\partial_\gamma B\neq0$,
delivering $\gamma(\kappa;x)\in C^1$ with
\begin{equation}
  \frac{\partial\gamma}{\partial\kappa}(\kappa;x)
  =-\,\frac{\partial_\kappa B(\gamma,\kappa;x)}{\partial_\gamma B(\gamma,\kappa;x)}
  \bigg|_{\gamma=\gamma(\kappa;x)}.
  \label{eq:ift}
\end{equation}
\end{proof}
As $\kappa$ ranges over $[-\bar\kappa,\bar\kappa]$, continuity of
$\kappa\mapsto\mu_1(\kappa)$ (composition of continuous maps) makes the image a
connected set. Since $\mu_1(\kappa)$ is continuous on a compact interval its image
is $[\min_\kappa\mu_1(\kappa),\max_\kappa\mu_1(\kappa)]$; under the monotonicity
established in Part (ii) below (for small $\bar\kappa$, $\partial_\kappa\mu_1$ has
constant sign) this is $[\mu_1(-\bar\kappa),\mu_1(\bar\kappa)]$ (or its reverse).
This proves (i).
\subsection{Part (ii): stationarity of the center and the core}
We now show that the bridge-identified drift is insensitive to the residual at
first order, so that the data determine the center of the set while $\kappa$
determines only its width.
\begin{lemma}[Stationarity of the root at the benchmark]
\label{lem:invariance}
Define the residual $s$ by projection under the \emph{tilted} source law at the
truth, i.e.\ $s=t(Y)-\Pi^{\gamma^\star}[t(Y)\mid Z,X]$ where $\Pi^{\gamma^\star}$
is the $L_2$-projection under $dP^{\omega_{\gamma^\star}}\propto
e^{\gamma^\star[t+b]}dP_0$. Then $\Cov^{\omega_{\gamma^\star}}(s,\varphi(Z)\mid x)
=0$ for all $\varphi\in L_2$, and consequently
$\displaystyle\frac{\partial\gamma}{\partial\kappa}(0;x)=0$.
\end{lemma}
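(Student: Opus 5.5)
The argument has two parts. First we establish the orthogonality identity directly from the definition of the projection. Then we feed it into the implicit-function formula \eqref{eq:ift} of \Cref{lem:gammakappa}, evaluated at $\kappa=0$.

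\emph{Orthogonality.} Fix $x$ and work in the Hilbert space $L_2(P^{\omega_{\gamma^\star}}(\cdot\mid x))$. \Cref{ass:overlap} guarantees that $t(Y)$ belongs to this space: the second-moment condition at $(\gamma^\star,0)\in\mathcal N$ gives $E^{\gamma^\star}[t^2\mid x]<\infty$. The closed subspace $\Hz(x)$ contains the constants, and $\Pi^{\gamma^\star}[t\mid Z,x]$ is the orthogonal projection onto it. The normal equations then give $E^{\gamma^\star}[s\,\psi\mid x]=0$ for every $\psi\in\Hz(x)$. Taking $\psi\equiv1$ yields $E^{\gamma^\star}[s\mid x]=0$, and taking $\psi=\varphi(Z)$ yields $E^{\gamma^\star}[s\varphi(Z)\mid x]=0$. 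Together these give $\Cov^{\omega_{\gamma^\star}}(s,\varphi(Z)\mid x)=0$. Note that $s$ depends on $(Z,X)$ through the projection, but this does not matter: the identity concerns only inner products under the fixed tilted law.

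\emph{Stationarity.} At $\kappa=0$ the root of \eqref{eq:Bkappa} is $\gamma(0;x)=\gamma^\star$. This follows from \Cref{lem:targetistilt} together with uniqueness (\Cref{lem:mono}), and the tilted law $\omega_{\gamma^\star,0}$ is exactly $\omega_{\gamma^\star}$. By \eqref{eq:dkappa} with $h=b(Z)$, the numerator of \eqref{eq:ift} at $(\gamma^\star,0)$ is $\Cov^{\omega_{\gamma^\star}}(b(Z),s\mid x)$. This vanishes by the orthogonality step with $\varphi=b$, provided $b\in L_2$, which again follows from \Cref{ass:overlap}. The denominator is $\Cov^{\omega_{\gamma^\star}}(b,t+b\mid x)\ge\varepsilon>0$ by \Cref{ass:relevance}. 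Hence $\partial_\kappa\gamma(0;x)=0$. With $q$ bridge coordinates, the same argument applies to each $b_k$ separately.

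\emph{Main obstacle.} The delicate point is justifying the differentiation in $\kappa$ when $s$ is itself defined through $\gamma^\star$. The remedy is to treat $s$ as a fixed function once $\gamma^\star$ is fixed, so that \Cref{lem:tiltderiv} applies with $g_\kappa=s$. The integrability hypothesis $E_{P_0}[e^{\omega}(1+s^2)\mid x]<\infty$ near $(\gamma^\star,0)$ is exactly what \Cref{ass:overlap} supplies on $\mathcal N$. The condition $E[|b|e^{\omega}(1+|s|)]<\infty$ then follows by Cauchy--Schwarz from the same second-moment bound.
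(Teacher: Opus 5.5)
Your proposal is correct and follows the paper's proof. Orthogonality of the tilted-law projection residual $s$ to $b(Z)$ makes the numerator $\Cov^{\omega_{\gamma^\star}}(b(Z),s\mid x)$ of the implicit-function formula vanish, and \Cref{ass:relevance} keeps the denominator away from zero; you are only more explicit than the paper about two points, namely that constants in $\Hz(x)$ turn inner-product orthogonality into zero covariance, and that \Cref{ass:overlap} supplies the integrability needed for \Cref{lem:tiltderiv}.
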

\begin{proof}
By \eqref{eq:ift} and \eqref{eq:dkappa} with $h=b(Z)$,
$\partial_\kappa\gamma(0;x)=-\,\Cov^{\omega_{\gamma^\star}}(b(Z),s\mid x)
   \big/\partial_\gamma B$. It suffices to show the numerator vanishes. By the
defining property of the tilted projection, $s$ is $P^{\omega_{\gamma^\star}}$-%
orthogonal to the closed span of functions of $(Z,X)$; in particular
$E^{\omega_{\gamma^\star}}[s\,b(Z)\mid x]=E^{\omega_{\gamma^\star}}[s\mid x]\,
   E^{\omega_{\gamma^\star}}[b(Z)\mid x]$, which is zero tilted covariance. Hence
$\partial_\kappa B(\gamma^\star,0;x)=0$ and $\partial_\kappa\gamma(0;x)=0$.
\end{proof}
\begin{remark}[Choice of the projecting law]
\label{rem:projlaw}
Defining $s$ under the tilted law $P^{\omega_{\gamma^\star}}$ rather than under
$P_0$ is what makes \Cref{lem:invariance} an exact first-order identity in full
generality, and it is the correct choice because the sensitivity direction is
perturbed \emph{around the identified model}, whose bridge law is the tilted one.
Two cases collapse the distinction, and they are independent sufficient
conditions rather than a conjunction. (i) \emph{Gaussian:} the projection
coefficient is the same under $P_0$ and any linear tilt (tilting shifts means, not
the regression of $Y$ on $Z$), so $s$ may equivalently be defined under $P_0$; the
invariance is then exact for all $\kappa$, not first order only.
(ii) \emph{Small detectable drift ($\gamma^\star\!\to\!0$):}
$P^{\omega_{\gamma^\star}}\to P_0$, so the $P_0$-projection and the tilted
projection agree to $O(\gamma^\star)$, and \Cref{lem:invariance} holds to that
order. Outside these cases the tilted-law definition is the operative one; the
main-text statement ``$s$ orthogonal to the bridge'' should be read as
orthogonality under the identified (tilted) law.
\end{remark}
\begin{remark}[Exactness and first-order validity]
\label{rem:exactness}
\Cref{lem:invariance} gives vanishing of the \emph{first} derivative at
$\kappa=0$. In the Gaussian instantiation the coupling is exactly linear and the
tilt leaves covariances unchanged, so $\Cov^{\omega}(b,s\mid x)=0$ at every
$\kappa$, not only at zero; there $\gamma(\kappa;x)\equiv\gamma^\star$ and both the
center and the core are exact over the whole sweep. In general
$\gamma(\kappa;x)=\gamma^\star+O(\kappa^2)$, and the $O(\kappa^2)$ term is what
produces the higher-order remainder in \eqref{eq:corewidth}; we do not bound its
constant, which is the limitation recorded in \Cref{rem:set-scope}.
\end{remark}
\begin{lemma}[Center and half-width]
\label{lem:core}
With $s$ as in \eqref{eq:projA},
\begin{equation}
  \mu_1(0)=\mu_1^{\thmpoint},
  \qquad
  \frac{d\mu_1}{d\kappa}\bigg|_{\kappa=0}
  =E_{P_1}\!\big[\Cov^{\gamma^\star}\!\big(Y,\,s\mid X\big)\big],
  \label{eq:centerderiv}
\end{equation}
and consequently the half-width of the identified set is
\begin{equation}
  \tfrac12 W(\bar\kappa)
  =\bar\kappa\,E_{P_1}\!\big[\Cov^{\gamma^\star}\!\big(Y,s\mid X\big)\big]
   +O(\bar\kappa^2)
  =\bar\kappa\,\sigma^0_Y\,E_{P_1}\!\big[\Var^{\gamma^\star}(s\mid X)\big]
   +O(\bar\kappa^2),
  \label{eq:corewidth}
\end{equation}
which does not depend on the precision with which the bridge identifies
$\gamma^\star$, but does depend on the residual variance the bridge leaves.
\end{lemma}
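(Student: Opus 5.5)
The plan is to treat $\mu_1(\kappa)$ as a composition and differentiate it at $\kappa=0$ by the chain rule. The inner map is $\kappa\mapsto\gamma(\kappa;x)$, given by \Cref{lem:gammakappa}. The outer map is $(\gamma,\kappa)\mapsto\Ptilt{\omega_{\gamma,\kappa}}[Y\mid x]$. After differentiating, I will average over $P_1(X)$.

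\textbf{The center.} At $\kappa=0$ the residual-augmented bridge equation \eqref{eq:Bkappa} reduces to the bridge map \eqref{eq:Bmap}. Its unique root is $\gamma^\star$, by \Cref{lem:targetistilt,lem:mono}. So $\gamma(0;x)=\gamma^\star$, and $\mu_1(0)$ coincides with \eqref{eq:estimandA}. This gives the first identity in \eqref{eq:centerderiv}.

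\textbf{The derivative.} I differentiate $\Ptilt{\omega_{\gamma(\kappa;x),\kappa}}[Y\mid x]$ in $\kappa$, using \eqref{eq:dgamma} and \eqref{eq:dkappa} with $h=Y$:
\[
\frac{d}{d\kappa}\Ptilt{\omega_{\gamma(\kappa),\kappa}}[Y\mid x]
=\Cov^{\omega}(Y,t{+}b\mid x)\,\partial_\kappa\gamma(\kappa;x)+\Cov^{\omega}(Y,s\mid x).
\]
At $\kappa=0$, \Cref{lem:invariance} kills the first term. What remains is $\Cov^{\gamma^\star}(Y,s\mid x)$.

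To pass the derivative through the outer average $E[\,\cdot\mid S=1]$, I will use dominated convergence. The required envelope comes from the second-moment integrability in \Cref{ass:overlap}, which holds on the open set $\mathcal N$. Continuity of $\partial_\gamma B$ and the lower bound $\varepsilon$ in \Cref{ass:relevance} bound $\partial_\kappa\gamma$ uniformly in $x$. These together give \eqref{eq:centerderiv}.

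\textbf{The half-width.} A Taylor expansion of $\mu_1$ about $0$ gives
\[
\mu_1(\pm\bar\kappa)=\mu_1(0)\pm\bar\kappa\,E_{P_1}[\Cov^{\gamma^\star}(Y,s\mid X)]+O(\bar\kappa^2).
\]
For small $\bar\kappa$, Part (i) shows the set is $[\mu_1(-\bar\kappa),\mu_1(\bar\kappa)]$, possibly in reversed order. Its half-width $\tfrac12\{\mu_1(\bar\kappa)-\mu_1(-\bar\kappa)\}$ is therefore the first display in \eqref{eq:corewidth}, up to sign. The derivative is nonnegative by the variance identity below, so no sign issue arises.

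For the second equality, write $Y=m^0_Y+\sigma^0_Y t(Y)$. Then $\Cov(Y,s)=\sigma^0_Y\Cov(t,s)$. Decompose $t=\Pi^{\gamma^\star}[t\mid Z,X]+s$. The projection term is orthogonal to $s$ under $P^{\gamma^\star}$ by \eqref{eq:projA}, so $\Cov^{\gamma^\star}(t,s\mid x)=\Var^{\gamma^\star}(s\mid x)$. This is the same step as \Cref{prop:general}(ii).

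The leading coefficient involves only $\gamma^\star$ and the tilted residual variance, not the sampling precision of $\hat\gamma$. That gives the final sentence of the lemma.

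\textbf{Main obstacle.} The expected difficulty is making the $O(\bar\kappa^2)$ remainder rigorous and uniform in $x$. This needs a second derivative of $\mu_1$. That in turn needs $\gamma(\cdot;x)\in C^2$, via the implicit function theorem applied to \eqref{eq:ift}, and third tilted moments.

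My first attempt would be to strengthen \Cref{ass:overlap} locally, or to note that the exponential moments on the open set $\mathcal N$ already give analyticity of the cumulant-generating function. Analyticity would make $B$ and $\Ptilt{\omega}[Y\mid x]$ jointly $C^\infty$, and Taylor's theorem with a uniform bound on $\partial_\kappa^2\mu_1$ over the compact range $[-\bar\kappa,\bar\kappa]$ would then give the remainder.
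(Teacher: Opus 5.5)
Your proposal follows the paper's proof essentially step for step: the chain rule through $\kappa\mapsto\gamma(\kappa;x)$, \Cref{lem:invariance} to kill the $\gamma$-path term at $\kappa=0$, the identity \eqref{eq:dkappa} with $h=Y$, a first-order Taylor expansion for the endpoints, and $\Cov^{\gamma^\star}(Y,s\mid X)=\sigma^0_Y\Cov^{\gamma^\star}(t,s\mid X)=\sigma^0_Y\Var^{\gamma^\star}(s\mid X)$ from the projection. The regularity you add (interchanging the derivative with the outer expectation, and the $O(\bar\kappa^2)$ remainder) goes beyond what the paper supplies, but it does not follow from the stated assumptions as written: \Cref{ass:overlap} is stated pointwise in $x$, and a uniform bound on $\partial_\kappa\gamma$ needs control of the numerator $\Cov^{\omega}(b,s\mid x)$ as well as the denominator bound $\ge\varepsilon$, so an integrable envelope over $P_1(X)$ requires a uniform-in-$x$ strengthening of the kind you anticipate.
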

\begin{proof}
By the chain rule,
$\dfrac{d\mu_1}{d\kappa}
   =E_{P_1}\!\Big[\partial_\kappa\Ptilt{\omega_{\gamma(\kappa),\kappa}}[Y\mid X]
   +\partial_\gamma\Ptilt{\omega_{\gamma(\kappa),\kappa}}[Y\mid X]\cdot
     \partial_\kappa\gamma(\kappa;X)\Big]$.
At $\kappa=0$ the second summand vanishes by \Cref{lem:invariance}. The first,
by \eqref{eq:dkappa} with $h=Y$, is
$\Cov^{\gamma^\star}(Y,s\mid X)$, giving \eqref{eq:centerderiv}. That
$\mu_1(0)=\mu_1^{\thmpoint}$ is immediate since $\kappa=0$ reduces \eqref{eq:Bkappa}
to \eqref{eq:Bmap}. A first-order Taylor expansion of $\mu_1(\kappa)$ about
$\kappa=0$, using the constant sign of the derivative on $[-\bar\kappa,\bar\kappa]$
(small $\bar\kappa$), gives
$\mu_1(\pm\bar\kappa)=\mu_1(0)\pm\bar\kappa\,E_{P_1}[\Cov^{\gamma^\star}(Y,s\mid X)]
   +O(\bar\kappa^2)$, whence \eqref{eq:corewidth}. The second equality is
$\Cov^{\gamma^\star}(Y,s)=\sigma^0_Y\Cov^{\gamma^\star}(t,s)
 =\sigma^0_Y\Var^{\gamma^\star}(s)$. The leading term involves the bridge law only
through $\Var^{\gamma^\star}(s\mid X)$; the precision of $\gamma^\star$ does not
enter, since its $\kappa$-derivative vanishes at first order.
\end{proof}
\subsection{Exact Gaussian core}
Under \corgauss, $t(y)=(y-m_0)/\sigma_Y$ has unit conditional variance, so the
projection residual has $\Var^{\gamma^\star}(s\mid X)=1-R^2_{Y\mid Z,X}$ (equal to
$1-r^2$ with a single bridge), and
\[
  \Cov^{\gamma^\star}\!\big(Y,s\mid X\big)
  =\sigma_Y\,\Cov^{\gamma^\star}(t,s\mid X)
  =\sigma_Y\,\Var^{\gamma^\star}(s\mid X)
  =\sigma_Y\big(1-R^2_{Y\mid Z,X}\big).
\]
Since the coupling is exactly linear, the $O(\bar\kappa^2)$ term is absent
(\Cref{rem:exactness}) and
\begin{equation}
  W(\bar\kappa)=2\,\bar\kappa\,\sigma_Y\big(1-R^2_{Y\mid Z,X}\big)
  \qquad\text{exactly.}
  \label{eq:gausscore}
\end{equation}
Two consequences follow. The width is linear in $\bar\kappa$, as in the closed-form
sensitivity cores of the missing-not-at-random literature
\citep{robins2000sensitivity}, where a Bernoulli odds-tilt yields a $\tanh$ bound
and a Gaussian mean-tilt a linear one. And the width falls to zero as
$R^2_{Y\mid Z,X}\to1$, recovering point identification for a bridge that
determines the outcome. This is as it must be, since the bridge is observed in
the target.
\section{Summary of what is and is not identified}
The proofs isolate three regimes. (1) The bridge channel is identified: the
detectable drift $\gamma^\star$ and, with it, the center $\mu_1(0)$ are fixed by
observed quantities under co-drift, existence holding under the transparent range
condition \eqref{eq:rangecond}. (2) The bridge channel is testable for internal
consistency: covariate-invariance of the root and agreement across bridge
coordinates are refutable, though they constrain only the observed channel. (3)
The residual channel is not identified and not testable: $s$ is orthogonal, under
the identified law, to all functions of the bridge and covariates by construction,
so no data functional restricts it; it is confined to the scalar $\kappa$,
whose identified set has the irreducible core \eqref{eq:corewidth}, exact in the
Gaussian case and given there by \eqref{eq:gausscore}.
\clearpage
\renewcommand{\thesection}{B.\arabic{section}}
\setcounter{section}{0}
\section*{Web Appendix B: Efficient Influence Function}
\addcontentsline{toc}{section}{Web Appendix B}
\setcounter{lemma}{0}\setcounter{theorem}{0}\setcounter{remark}{0}
\renewcommand{\thelemma}{B.\arabic{lemma}}
\renewcommand{\thetheorem}{B.\arabic{theorem}}
\renewcommand{\theremark}{B.\arabic{remark}}
\setcounter{equation}{0}\renewcommand{\theequation}{B.\arabic{equation}}
We derive the efficient influence function (EIF) of the target mean functional
$\mu_1(\kappa)$ at a fixed (anchored) pair $(\gamma,\kappa)=(\gamma^\star,\kappa)$.
The anchoring is essential: because $\kappa$ is held fixed rather than estimated,
it contributes no score, and the only estimated internal parameter is the
bridge-identified drift $\gamma^\star$, whose estimation \emph{does} contribute a
correction. The derivation makes explicit why that correction: term (III) of
\thmeif: is part of the EIF and not an optional refinement.
Throughout, $O=(S,X,Z,SY)$ is the observed vector; expectations are over the
pooled law $P$; $\pi_s=\Prob(S=s)$; $e(x)=\Prob(S{=}1\mid x)$. We write the
\emph{normalized} conditional density ratio
$\rho(y,z\mid x)=\exp\{\gamma^\star[t(y)+b(z)]+\kappa s\}/C(\gamma^\star,\kappa;x)$
and the tilted target regression
\begin{equation}
  \tilde m_1(x)=\Ptilt{\rho}[Y\mid x]
  =\frac{E_{P_0}[Y\rho(Y,Z\mid x)\mid x]}{E_{P_0}[\rho(Y,Z\mid x)\mid x]}.
  \label{eq:m1}
\end{equation}
Recall the target mean $\mu_1(\kappa)=E[\tilde m_1(X)\mid S=1]$.
\section{The observed-data model and its tangent space}
\subsection{Likelihood factorization}
The observed-data density factorizes, under the structural tilt, as
\begin{equation}
  p(O)=\underbrace{\pi_S}_{\text{cohort}}\;
       \underbrace{p(x\mid S)}_{\text{covariate}}\;
       \underbrace{p_0(z\mid x)}_{\text{source bridge}}\;
       \underbrace{\big[f_0(y\mid x,z)\big]^{\mathbf 1\{S=0\}}}_{\text{source outcome}}\;
       \underbrace{\big[\rho\text{-tilt of }p_0(z\mid x)\big]^{\mathbf 1\{S=1\}}}_{\text{target bridge}},
  \label{eq:factor}
\end{equation}
where the target bridge law is the $\gamma^\star$-tilt of the source bridge law by
\thmpoint\ (its outcome factor is absent because $Y$ is unobserved at $S=1$). The
model is semiparametric: $\pi_S$ is a scalar; $p(x\mid S)$, $p_0(z\mid x)$, and
$f_0(y\mid x,z)$ are unrestricted (subject to the tilt link between source and
target bridge laws); $(\gamma^\star,\kappa)$ index the tilt, with $\kappa$ fixed.
\subsection{Scores}
Let $P_\eta$ be a regular parametric submodel through the truth at $\eta=0$ with
score $g(O)=\partial_\eta\log p_\eta(O)\big|_{0}$. By \eqref{eq:factor} the score
decomposes into orthogonal blocks
\begin{equation}
  g(O)=g_S+g_{X\mid S}+g_{Z\mid X}+\mathbf 1\{S=0\}\,g_{Y\mid X,Z}
       +g_{\gamma},
  \label{eq:score}
\end{equation}
where each block is mean-zero conditional on the preceding ones in the natural
ordering, and $g_\gamma$ is the score of the drift parameter (a one-dimensional
direction shared between the source and target bridge factors through the tilt
link). The tangent space $\T$ is the closed linear span of all such scores; the
nuisance tangent space $\T_{\text{nuis}}$ omits $g_\gamma$.
We record the four nuisance blocks as mean-zero spaces:
\begin{align}
  \T_S      &=\{a(S)-\E a(S):a\},\\
  \T_{X\mid S}&=\{a(X,S):\E[a\mid S]=0\},\\
  \T_{Z\mid X}&=\{a(Z,X):\E[a\mid X,S{=}0]=0\}\ \text{(shared with target via tilt)},\\
  \T_{Y\mid X,Z}&=\{\mathbf 1\{S{=}0\}\,a(Y,X,Z):\E[a\mid X,Z,S{=}0]=0\}.
\end{align}
The drift score $g_\gamma$ is characterized in \Cref{lem:gammascore} below.
\section{Pathwise derivative of the target mean}
\begin{lemma}[Pathwise derivative]
\label{lem:pathwise}
Along a submodel with score $g$, the target mean $\mu_1(\kappa)$ is pathwise
differentiable with
$\partial_\eta\mu_1|_0=\E[D(O)\,g(O)]$
for an influence function $D(O)$ identified below. Consequently $\mu_1(\kappa)$
admits an EIF equal to the projection of any influence function onto $\T$.
\end{lemma}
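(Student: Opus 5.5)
I would prove \Cref{lem:pathwise} by differentiating the identified representation $\mu_1(\kappa)=E[\tilde m_1(X)\mid S=1]$ along a regular submodel $P_\eta$ and writing each piece of the derivative as a covariance with a block of the score \eqref{eq:score}. Along the submodel, three things move: the target covariate law $p(x\mid S=1)$; the source joint law of $(Y,Z)\mid X$, which enters $\tilde m_1$ through the tilted expectation \eqref{eq:m1}; and the drift $\gamma^\star_\eta$, which is redefined implicitly by the bridge-matching equation \eqref{eq:bridgematch} at $\kappa$, that is, by \eqref{eq:bridgematch-kappa}. I would write
\[
\partial_\eta\mu_1|_0=\underbrace{\partial_\eta E_\eta[\tilde m_1(X)\mid S=1]}_{(a)}+\underbrace{E_{P_1}\big[\partial_\eta \tilde m_{1,\eta}(X)\big]\big|_{\gamma\ \text{fixed}}}_{(b)}+\underbrace{c\,\partial_\eta\gamma^\star_\eta}_{(c)},
\]
and identify each term in turn.

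Term (a) is routine. The score of $p(x\mid S=1)$ together with $g_S$ gives $E[\mathbf 1\{S=1\}\pi_1^{-1}(\tilde m_1(X)-\mu_1)\,g(O)]$, which is term (I).

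For term (b), I would apply \Cref{lem:tiltderiv} with $\theta=\eta$. The weight is held fixed, but the base law $P_0(\cdot\mid x)$ moves with score $g_{Z\mid X}+g_{Y\mid X,Z}$. This gives $\partial_\eta\tilde m_1(x)=E_{P_0}[\rho(Y-\tilde m_1)\,g\mid x]$, using that $\rho$ is the \emph{normalized} ratio so that the normalizer's derivative cancels. Averaging over $P_1(X)$ and converting to a source expectation with $r_e(X)=dP_1(x)/dP_0(x)$ yields $E[\mathbf 1\{S=0\}\pi_0^{-1}r_e\rho(Y-\tilde m_1)\,g]$, which is term (II). Since the factor is mean zero given $(X,S=0)$, it is orthogonal to $\T_S$ and $\T_{X\mid S}$, as required.

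For term (c), \Cref{lem:tiltderiv} with $h=Y$ and $g_\gamma=t+b$ gives $\partial_\gamma\mu_1=c$. I would then differentiate the sample-free moment $E_{P_1}[b(Z)-\Ptilt{\gamma}[b(Z)\mid X]]=0$ implicitly in $\eta$. This produces $\partial_\eta\gamma^\star=E[\varphi_\gamma g]$, with $\varphi_\gamma$ equal to the Stage-1 Jacobian inverse, $\{E_{P_1}\Cov^{\gamma^\star}(b,t+b\mid X)\}^{-1}$, times the sum of two pieces. The target piece, $\mathbf 1\{S=1\}\pi_1^{-1}(b(Z)-\Ptilt{\gamma^\star}[b\mid X])$, is $\varphi_\gamma^{\rm tgt}$. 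The source piece, $-\mathbf 1\{S=0\}\pi_0^{-1}r_e\rho(b-\Ptilt{}[b\mid X])$, comes from the same tilt-derivative calculation as in (b) and is $\varphi_\gamma^{\rm src}$. This gives term (III).

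The step I expect to be the main obstacle is the final projection claim. The tilt link makes the target bridge law a function of the source joint law, so the blocks of \eqref{eq:score} are not variation independent. I would first check that $D=\mathrm{(I)}+\mathrm{(II)}+c\varphi_\gamma$ lies in $\T$, showing that each summand sits in the closure of the listed blocks. If it does, $D$ is already the EIF. If instead the bridge factor is over-restricted, the EIF is the $L_2(P)$ projection of $D$ onto $\T$. I would compute that projection blockwise, conditioning on $(X,S)$ and on $(X,Z,S=0)$, and leave the verification that nothing survives outside $\T$ to that computation.
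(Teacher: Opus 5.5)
Your proposal is correct and follows essentially the same route as the paper. You use the same three-way split of the derivative: the $(S,X)$-law contribution gives term (I); the fixed-drift movement of $\tilde m_1$ through the source law, converted to a source average via $r_e$ and the normalized $\rho$, gives term (II); and implicit differentiation of the bridge moment gives $c\,\varphi_\gamma$ with its target- and source-side pieces. You then invoke the standard projection theorem, and you simply carry out explicitly, and slightly more completely (for example, letting the source bridge score also move $\tilde m_1$ at fixed $\gamma$), the computations that the paper's proof only sketches and then distributes over the subsequent subsections and Theorem~\ref{thm:eifB}.
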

\begin{proof}
$\mu_1=E[\tilde m_1(X)\mid S{=}1]
     =\pi_1^{-1}E[\mathbf 1\{S{=}1\}\tilde m_1(X)]$. Differentiating in $\eta$, the
dependence enters through (a) the law of $(S,X)$, (b) the tilted regression
$\tilde m_1$ via the source outcome and bridge laws, and (c) the drift
$\gamma^\star$ via the tilt weight $\rho$. Each contributes a term linear in the
corresponding score; collecting them and using
$E[\cdot\, g]=\Cov[\cdot,g]$ (scores are mean-zero) yields the stated form. The
efficiency statement is the standard projection theorem for pathwise-differentiable
functionals.
\end{proof}
We compute the three contributions in turn and then project.
\subsection{Contribution of the $(S,X)$ law: the imputation term}
Holding $\tilde m_1$ fixed, differentiating $\pi_1^{-1}E[\mathbf 1\{S{=}1\}
\tilde m_1(X)]$ through the law of $(S,X)$ gives the influence contribution
\begin{equation}
  D_{\mathrm{I}}(O)=\frac{\mathbf 1\{S{=}1\}}{\pi_1}\big(\tilde m_1(X)-\mu_1\big),
  \label{eq:DI}
\end{equation}
which already lies in $\T_S\oplus\T_{X\mid S}$.
\subsection{Contribution of the source outcome/bridge laws: the propensity term}
The regression $\tilde m_1(x)$ depends on the source laws $f_0,p_0(z\mid x)$. A
perturbation $g_{Y\mid X,Z}$ moves $\tilde m_1$ by, using \eqref{eq:m1} and
\Cref{lem:tiltderiv} of Web Appendix~A,
\begin{equation}
  \partial_\eta\tilde m_1(x)
  =\Ptilt{\rho}\!\big[(Y-\tilde m_1(x))\,g_{Y\mid X,Z}\mid x\big].
  \label{eq:dm1}
\end{equation}
Substituting into $\partial_\eta\mu_1=\pi_1^{-1}E[\mathbf 1\{S{=}1\}
\partial_\eta\tilde m_1(X)]$ and rewriting the tilted, target-side expectation as a
reweighted \emph{source-side} expectation; this step converts a quantity involving the
unobserved $Y$ into one involving observed data only: gives
\begin{equation}
  \partial_\eta\mu_1\big|_{\text{outcome}}
  =\E\Big[\underbrace{\tfrac{\mathbf 1\{S{=}0\}}{\pi_0}\,
       r_e(X)\,\rho(Y,Z\mid X)\,
       \big(Y-\tilde m_1(X)\big)}_{=:D_{\mathrm{II}}(O)}\;g_{Y\mid X,Z}\Big],
  \qquad
  r_e(x)=\tfrac{e(x)}{1-e(x)}\tfrac{\pi_0}{\pi_1}.
  \label{eq:DII}
\end{equation}
The conversion uses the density-ratio identity
$dP_1(x)/dP_0(x)=\{e(x)/(1-e(x))\}\{\pi_0/\pi_1\}$ and the tilt link
$dP_1(y,z\mid x)=\rho(y,z\mid x)\,dP_0(y,z\mid x)$ with $\rho$ \emph{normalized} by
$C(\gamma^\star,\kappa;x)$, so that a target-side average of a $Y$-residual becomes
a source-side average weighted by $\rho$ and the covariate density ratio. Thus
$D_{\mathrm{II}}$ is the influence contribution of the source outcome/bridge block,
supported on $S=0$ where $Y$ is observed.
\subsection{Contribution of the drift: the bridge-moment correction}
This is the term that ordinary transportability EIFs omit. Because $\gamma^\star$
is not known but identified from the bridge-matching equation
$B(\gamma;x)=B_1(x)$ (\thmpoint), a perturbation of the model moves the identified
$\gamma^\star$, which in turn moves $\tilde m_1$ and hence $\mu_1$. We must add the
influence of estimating $\gamma^\star$.
\begin{lemma}[Score and influence function of the bridge-identified drift]
\label{lem:gammascore}
The bridge-matching moment is the target-side quantity
$G(\gamma)=E_{P_1}[\,b(Z)-B(\gamma;X)\,]$, whose root is $\gamma^\star$
(\thmpoint). Because $B(\gamma;\cdot)$ is a functional of the source law, a
perturbation of that law moves the identified $\gamma^\star$ itself; the
estimation of $B$ therefore contributes a source-side term, and the centered
estimating function is
\begin{equation}
  \Psi(\gamma^\star;O)
  =\frac{\mathbf 1\{S{=}1\}}{\pi_1}\big(b(Z)-B(\gamma^\star;X)\big)
  \;-\;
  \frac{\mathbf 1\{S{=}0\}}{\pi_0}\,r_e(X)\,\rho(Y,Z\mid X)\,
       \big(b(Z)-B(\gamma^\star;X)\big),
  \label{eq:psi}
\end{equation}
which satisfies $E[\Psi(\gamma^\star;O)]=0$: the first term has mean zero by the
bridge equation, and the second has mean
$E_{P_1}\big[E^{\gamma^\star}\{b(Z)-B(\gamma^\star;X)\mid X\}\big]=0$ by the
definition of $B$ as the $\rho$-tilted source bridge moment. Its Jacobian is
\begin{equation}
  J:=\partial_\gamma E[\Psi(\gamma;O)]\big|_{\gamma^\star}
   =-\,E_{P_1}\!\big[\Cov^{\gamma^\star}(b(Z),\,t(Y)+b(Z)\mid X)\big]\;\neq\;0
\end{equation}
by \Cref{ass:relevance}, and the influence function of $\hat\gamma^\star$ is
\begin{equation}
  \varphi_\gamma(O)=-\,J^{-1}\,\Psi(\gamma^\star;O).
  \label{eq:phigamma}
\end{equation}
\end{lemma}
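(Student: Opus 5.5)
The plan is to verify the three displayed claims in order: the mean-zero property of $\Psi$, the value of $J$, and the influence-function form \eqref{eq:phigamma}. I then check that the source-side term of \eqref{eq:psi} is exactly the pathwise contribution of estimating $B$. Throughout I read $B(\gamma;x)$ and $\rho$ at the anchored $\kappa$, so $B(\gamma;x)=\Ptilt{\omega_{\gamma,\kappa}}[b(Z)\mid x]$. At $\kappa=0$ this is \eqref{eq:Bmap}, and in general it is \eqref{eq:Bkappa}. Every identity used below holds pointwise in $x$ for either reading.

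\emph{Mean zero.} For the first term I condition on $X$ within $S=1$. This gives $E_{P_1}[E_{P_1}\{b(Z)\mid X\}-B(\gamma^\star;X)]$. The integrand vanishes by \Cref{lem:targetistilt}, or by the definition of $\gamma(\kappa;x)$ as the root of \eqref{eq:Bkappa}. For the second term I write $E[\mathbf 1\{S=0\}\pi_0^{-1}(\cdot)]=E_{P_0}[\cdot]$. The density-ratio identity $r_e=dP_1(x)/dP_0(x)$ then turns the outer average into an average over $P_1(X)$. Since $\rho$ is normalized by $C$, the inner expectation $E_{P_0}[\rho\,(b-B)\mid X]$ is the tilted expectation $\Ptilt{\omega}[\,b-B\mid X]$. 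That equals $0$ by the definition of $B$, and it does so for every $\gamma$, not only at $\gamma^\star$.

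\emph{Jacobian.} I differentiate $E[\Psi(\gamma;O)]$ in $\gamma$, with $\rho=\rho_{\gamma,\kappa}$ moving along. The second term has zero expectation identically in $\gamma$, by the computation just made, so its derivative vanishes. This is the point where normalization of $\rho$ matters, as the main text warns. The first term contributes $-E_{P_1}[\partial_\gamma B(\gamma^\star;X)]$. By \eqref{eq:dgamma} with $h=b$, this equals $-E_{P_1}[\Cov^{\gamma^\star}(b,t+b\mid X)]$. It is bounded away from zero, at most $-\varepsilon$, by \Cref{ass:relevance}, with dominated convergence licensed by \Cref{ass:overlap}.

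\emph{Influence function.} For the estimator, I use the standard Z-estimator expansion. I linearize $0=\Pn\Psi(\hat\gamma)$ around $\gamma^\star$ and use monotonicity (\Cref{lem:mono}, \Cref{lem:gammakappa}) for consistency. This gives $\hat\gamma-\gamma^\star=-J^{-1}\Pn\Psi(\gamma^\star)+o_p(n^{-1/2})$.

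The step I expect to be the main obstacle is showing that \eqref{eq:psi} is the right estimating function. It must capture the full pathwise derivative of $\gamma^\star$ as a functional of the observed law, and not only the displayed target moment. I would verify this directly, implicitly differentiating $G(\gamma^\star_\eta)=0$ along a submodel with score $g$ decomposed as in \eqref{eq:score}. The target bridge and $(S,X)$ blocks reproduce the first term of $\Psi$, exactly as $D_{\mathrm I}$ arose in \eqref{eq:DI}. The source blocks $g_{Z\mid X}$ and $g_{Y\mid X,Z}$ move $B(\gamma^\star;x)$ by $\Ptilt{\omega}[(b-B)\,g\mid x]$, by \Cref{lem:tiltderiv} applied to the base measure. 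Averaging this over $P_1(X)$ and converting to a source average through $r_e\rho$ gives the second term of $\Psi$ with its minus sign. The calculation is identical to \eqref{eq:dm1}--\eqref{eq:DII} with $Y$ replaced by $b(Z)$.

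One care point in that conversion: $t(Y)$ enters the weight and is observed only at $S=0$. The source-side term must therefore be supported on $S=0$, which it is. The covariate-law perturbation of $P_1(X)$ cancels because the inner conditional mean is zero. Dividing the combined derivative by $-J$ then yields \eqref{eq:phigamma}.
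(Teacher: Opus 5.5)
Your proposal is correct and follows the paper's proof, which consists of exactly your three steps: mean zero from the bridge equation and the normalized change of measure $r_e\rho$, the Jacobian from \eqref{eq:dgamma}, and the standard $Z$-estimation expansion. Your additions are sound and sharpen that one-line argument in two ways. Letting $\rho$ move with $\gamma$ is what yields the stated $J$: with $\rho$ frozen at $\gamma^\star$, $E[\Psi(\gamma;O)]$ vanishes identically and $J=0$. And your implicit differentiation of $G$ along submodels verifies what the paper takes for granted here and uses only in the proof of \Cref{thm:eifB}, namely that the source-side term of $\Psi$ is the first-order effect of estimating $B$.
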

\begin{proof}
Mean-zero and the Jacobian are computed above and by \eqref{eq:dgamma}. Standard
$Z$-estimation then gives
$0=\Pn\Psi(\hat\gamma)=\Pn\Psi(\gamma^\star)+J(\hat\gamma-\gamma^\star)+o_p(n^{-1/2})$,
whence \eqref{eq:phigamma}.
\end{proof}
\begin{remark}[Scope of the source-side term]
\label{rem:psi-scope}
Two qualifications matter in implementation. First, the displayed
$\Psi$ is the influence contribution of a \emph{nonparametric} plug-in for
$B(\gamma;\cdot)$. When $B$ is instead computed from a parametric working model,
the corresponding term is the influence of that model's
coefficients, and using the nonparametric form overstates
$\Var(\hat\gamma)$. In the Gaussian design of \Cref{sec:sim} the ratio of the
model-based standard error to the empirical standard deviation of $\hat\gamma$ is
$0.51$ with the target-side moment alone, $0.83$ with the parametric source term,
and $2.44$ with the nonparametric one; the parametric form is the one that
matches the estimator actually used. Second, neither form carries the
contribution of the estimated residual covariance, which enters
$\Ptilt{\gamma}[b(Z)\mid x]$ through the dependence between $Y$ and $Z$ given $X$
(\Cref{rem:stage1-scope}). Holding that covariance fixed at its population value
raises the ratio to $1.13$, so the omitted path is not negligible in that design.
We therefore treat $\varphi_\gamma$ as the efficient influence function at a
known dependence structure and obtain standard errors for $\hat\gamma$, as for the
endpoints, by the resampling scheme of \Cref{sec:inference}, which refits the
source model and hence carries this path.
An earlier version of this appendix centered the source-side term at
$B(\gamma^\star;X)$ under the \emph{untilted} source law, with a covariate weight
only. That function does not have mean zero at $\gamma^\star$, because
$E_{P_0}[b(Z)\mid X]\neq B(\gamma^\star;X)$ whenever the drift is nonzero: the
tilted and untilted bridge moments differ by the drift being estimated.
The $\rho$-weighting in \eqref{eq:psi} is what makes the source-side term the
influence contribution of the plug-in $\widehat B(\gamma;\cdot)$ rather than an
unrelated residual, and it is the same change of measure used in
\eqref{eq:DII}. With the corrected $\Psi$, term (III) is a genuine
orthogonalization; with the earlier one it would have introduced a bias of order
the drift itself.
\end{remark}
The influence of $\mu_1$ through $\gamma^\star$ is the chain
$\partial_{\gamma}\mu_1\cdot\varphi_\gamma$, where
\begin{equation}
  \partial_\gamma\mu_1
  =E_{P_1}\big[\partial_\gamma\tilde m_1(X)\big]
  =E_{P_1}\big[\Cov^{\gamma^\star}(Y,\,t(Y)+b(Z)\mid X)\big]
  \;=:\;c .
  \label{eq:dmu_dgamma}
\end{equation}
The drift contribution is therefore
\begin{equation}
  D_{\mathrm{III}}(O)=+\,c\,\varphi_\gamma(O),
  \label{eq:DIII}
\end{equation}
which is term (III) of \thmeif. Note that $c$ carries no denominator: the ratio
$c_t(x)$ of \eqref{eq:c-general} divides by $\Cov^{\gamma^\star}(b,t{+}b\mid x)$,
and that divisor is exactly $-J$, which already appears inside $\varphi_\gamma$
through \eqref{eq:phigamma}; writing it twice double-counts the Stage-1
sensitivity. In the Gaussian instantiation with conditional loadings,
$\partial_\gamma\tilde m_1=\sigma_Y(1+r)$ and $\partial_\gamma B=1+r$, so
$c=\sigma_Y(1+r)$ while the local transfer rate is $\sigma^0_Yc_t=\sigma_Y$.
\begin{remark}[Necessity of term (III)]
If one plugged in a $\sqrt n$-consistent $\hat\gamma$ and ignored its estimation,
the resulting estimator would have influence $D_{\mathrm I}+D_{\mathrm{II}}$ only,
which is \emph{not} orthogonal to $g_\gamma$: its variance would misstate the truth
by the cross-term between $D_{\mathrm I}+D_{\mathrm{II}}$ and $c\,\varphi_\gamma$.
Adding $D_{\mathrm{III}}$ orthogonalizes the influence against the drift score,
which is what efficiency at anchored $\kappa$ requires. This is the
transportability analogue of correcting for a first-stage estimated nuisance in
two-step semiparametric estimation.
\end{remark}
\section{Assembling and projecting: the EIF}
\begin{theorem}[Efficient influence function at anchored drift, restating \thmeif]
\label{thm:eifB}
Under \Cref{ass:codrift,ass:relevance,ass:overlap} and standard regularity, at
fixed $(\gamma^\star,\kappa)$ the functional $\mu_1(\kappa)$ has efficient
influence function
\begin{equation}
  \varphi(O)=D_{\mathrm I}(O)+D_{\mathrm{II}}(O)+D_{\mathrm{III}}(O),
  \label{eq:eif}
\end{equation}
with $D_{\mathrm I},D_{\mathrm{II}},D_{\mathrm{III}}$ given by
\eqref{eq:DI}, \eqref{eq:DII}, \eqref{eq:DIII}; and the efficiency bound is
$\Var\{\varphi(O)\}$.
\end{theorem}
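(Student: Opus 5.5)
The approach is the standard one for a pathwise-differentiable functional: show that $\varphi=D_{\mathrm I}+D_{\mathrm{II}}+D_{\mathrm{III}}$ represents the pathwise derivative, and then show that $\varphi$ lies in the tangent space $\T$, so that no further projection is needed.

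\textbf{Step 1: pathwise derivative.} Take an arbitrary regular submodel with score $g$ decomposed as in \eqref{eq:score}. Write $\mu_1=\pi_1^{-1}E[\mathbf 1\{S=1\}\tilde m_1(X)]$ and split $\partial_\eta\mu_1$ by the chain rule into three pieces.

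The first piece comes from the $(S,X)$ law with $\tilde m_1$ and $\gamma^\star$ held fixed. The usual computation for a ratio of means gives $E[D_{\mathrm I}g]$, with the $-\mu_1$ centering absorbing the derivative of $\pi_1$.

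The second piece comes from moving the source laws $f_0(y\mid x,z)$ and $p_0(z\mid x)$ inside $\tilde m_1$ at fixed $\gamma$. Apply \Cref{lem:tiltderiv} to obtain \eqref{eq:dm1}. Then use the change of measure $dP_1(y,z\mid x)=\rho\,dP_0(y,z\mid x)$ and $dP_1(x)=r_e\,dP_0(x)$ to rewrite $E_{P_1}[E^{\rho}\{(Y-\tilde m_1)g_{Y,Z\mid X}\}]$ as $E[D_{\mathrm{II}}\,g]$. I will do this jointly for $g_{Y\mid X,Z}$ and the source part of $g_{Z\mid X}$, not for the outcome block alone. Because the residual $Y-\tilde m_1(X)$ has conditional $\rho$-mean zero, the $g_X$ component contributes nothing.

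The third piece comes from the movement of $\gamma^\star$. Differentiate the bridge identity $E[\Psi(\gamma^\star_\eta;O)]_\eta=0$ implicitly. This gives $\partial_\eta\gamma^\star=-J^{-1}\partial_\eta E_\eta[\Psi(\gamma^\star;O)]=E[\varphi_\gamma g]$, using \Cref{lem:gammascore} and the fact that $\Psi$ is already the influence function of the moment, with both its target-side and source-side parts. Multiplying by $c=\partial_\gamma\mu_1$ from \eqref{eq:dmu_dgamma} gives $E[D_{\mathrm{III}}g]$. The three pieces sum to $E[\varphi g]$ for every $g\in\T$.

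\textbf{Step 2: $\varphi\in\T$.} Each term is mean zero and has the form of a score in one of the blocks. $D_{\mathrm I}$ lies in $\T_S\oplus\T_{X\mid S}$. $D_{\mathrm{II}}$ is supported on $S=0$ and has $\rho$-weighted conditional mean zero. The target half of $\Psi$ lies in the target-bridge block. The source half of $\Psi$ has the same form as $D_{\mathrm{II}}$ with $b$ in place of $Y$.

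Since the model is locally nonparametric apart from the tilt link, these blocks span the relevant part of $L_2^0$. So $\varphi$ is its own projection, and by \Cref{lem:pathwise} it is the EIF with bound $\Var\varphi$.

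\textbf{Main obstacle.} I expect the hardest part to be the bookkeeping in Step 2 around the tilt link. The target bridge law is not a free nuisance: it is tied to $p_0(y,z\mid x)$ through $\gamma^\star$. Scores of the target bridge factor are therefore constrained, and the single direction $g_\gamma$ is what restores that lost freedom.

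My first attempt would be to reparametrize. I would treat $p_1(z\mid x)$ as free and define $\gamma^\star$ as a functional through the bridge equation. This turns the model into an unrestricted one at fixed $\kappa$, in which any valid influence function is automatically efficient. The cost is that one must check the over-identified case $q>1$ separately, where the model is genuinely restricted. There I would only claim efficiency for the optimally GMM-weighted $\Psi$, and take the projection onto the orthocomplement of the over-identifying moments.

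A second delicate point is whether $D_{\mathrm{II}}$ should use $Y$ or the residual orthogonalized against $b(Z)$. Overlap between the $g_{Z\mid X}$ contributions in $D_{\mathrm{II}}$ and in the source half of $\Psi$ would double count. I would check this by verifying $E[\varphi g]=\partial_\eta\mu_1$ directly on a submodel that perturbs only $p_0(z\mid x)$.
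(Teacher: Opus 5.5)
Your Step~1 is the same verification the paper carries out, and it is sound. $\varphi$ is the gradient of $\mu_1$ extended off the model through the pooled bridge moment, so $\E[\varphi g]=\partial_\eta\mu_1$ for every score. There is also no double counting between $D_{\mathrm{II}}$ and the source half of $\Psi$: the first carries the fixed-$\gamma$ movement of $\tilde m_1$, the second the movement of $\gamma^\star$.

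The gap is Step~2, together with the reparametrization you propose for it. Declaring $p_1(z\mid x)$ free and defining $\gamma^\star$ through the bridge equation enlarges the model rather than reparametrizing it. The model is restricted already at $q=1$: the tilt link determines the entire target bridge law $P_1(Z\mid X)$ from $P_0(Y,Z\mid X)$, up to one scalar shared by all $x$. You can see that gradients are not unique without any tangent-space algebra. For any bounded, non-constant $w(X)>0$, the moment $E_{P_1}[w(X)\{b(Z)-B(\gamma;X)\}]=0$ also identifies $\gamma^\star$ in the model and produces a different gradient $D_{\mathrm I}+D_{\mathrm{II}}+c\,\varphi^{w}_\gamma$. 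So ``any valid influence function is automatically efficient'' fails here, and Step~1 by itself says nothing about efficiency.

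What Step~2 must show is the following. A source perturbation with score $a(Y,Z,X)$, $E_{P_0}[a\mid X]=0$, forces the target bridge score $(Aa)(Z,X)=E^{\rho}[a\mid Z,X]-E^{\rho}[a\mid X]$. The only free target-side direction is $\ell_\gamma=E^{\rho}[t+b\mid Z,X]-E^{\rho}[t+b\mid X]$. The target half of $\Psi$ therefore cannot be placed in a free ``target-bridge block''. Membership $\varphi\in\T$ requires $-c\,(\pi_1J)^{-1}\{b(Z)-B(X)\}=(Aa_\varphi)(Z,X)+\beta\,\ell_\gamma(Z,X)$ for some scalar $\beta$, where $a_\varphi$ is the $S=0$ part of $\varphi$.

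This fails generically. Take $\gamma^\star=\kappa=0$ with a Gaussian truth of residual correlation $r$. Then $\rho\equiv1$, $c=\sigma_Y(1+r)$, $J=-(1+r)$, $(Aa_\varphi)(z,x)=-\sigma_Y(1-r)\,r_e(x)\,b(z)/\pi_0$ and $\ell_\gamma=(1+r)\,b(z)$. The requirement becomes $\sigma_Y/\pi_1=-\sigma_Y(1-r)\,r_e(x)/\pi_0+\beta(1+r)$ for almost every $x$, which is impossible under any non-constant covariate shift.

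Hence $\varphi$ is the EIF only in the nonparametric model with $P_1(Z\mid X)$ unrestricted. In the tilt-link model it is one gradient among many, the EIF is $\Pi[\varphi\mid\T]$, and the bound $\Var\{\Pi[\varphi\mid\T]\}$ is strictly below $\Var\{\varphi\}$ in this example. The paper's own membership step asserts that $\T$ is the whole direct sum, which is the same claim, so it cannot be borrowed to close the gap. A correct version needs either that change of model, stated explicitly, or the projection actually computed.
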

\begin{proof}
We verify directly that $\varphi$ of \eqref{eq:eif} satisfies
$\E[\varphi\,g]=\partial_\eta\mu_1$ for every score $g$ in \eqref{eq:score}, and
that $\varphi\in\T$; the two together identify $\varphi$ as the EIF. We do
\emph{not} claim that the three terms are mutually orthogonal, nor that each lies
in a single score block: neither is true here, and neither is needed.
\emph{The source-outcome direction.} Let $g_{Y\mid X,Z}$ perturb $f_0(y\mid x,z)$.
This moves $\mu_1$ along \emph{two} paths, which must be separated. At
fixed $\gamma$, it moves the tilted regression $\tilde m_1$, contributing
$\E[D_{\mathrm{II}}\,g_{Y\mid X,Z}]$ by \eqref{eq:dm1} and \eqref{eq:DII}. But it
also moves the \emph{identified} $\gamma^\star$, because $B(\gamma;\cdot)$ in the
bridge-matching equation is a functional of the same source law; that second path
contributes $c\,\partial_\eta\gamma^\star=c\,\E[\varphi_\gamma g_{Y\mid X,Z}]$,
which is exactly $\E[D_{\mathrm{III}}\,g_{Y\mid X,Z}]$ by \eqref{eq:DIII}. Summing,
$\E[\varphi\,g_{Y\mid X,Z}]=\partial_\eta\mu_1$ along this direction. The
source-side term of $\Psi$ in \eqref{eq:psi} is what makes the second path
appear; with the target-side moment alone, $\varphi_\gamma$ would be the influence
function of a $\gamma^\star$ held artificially fixed under source perturbations,
and this step would fail.
\emph{The bridge direction.} For $g_{Z\mid X}$, the target bridge law is the
$\gamma^\star$-tilt of the source bridge law, so the perturbation moves both sides
of the bridge equation; the induced movement of $\gamma^\star$ is
$\E[\varphi_\gamma g_{Z\mid X}]$ and its contribution to $\mu_1$ is $c$ times
that, matched by $D_{\mathrm{III}}$. The residual channel contributes nothing
here: by \Cref{lem:invariance}, $s$ is orthogonal under the identified law to
every function of $(Z,X)$, so a bridge perturbation does not move the residual
direction at first order and no double counting arises.
\emph{The cohort and covariate directions.} For $g_S$ and $g_{X\mid S}$ the
functional depends on the law only through $E[\tilde m_1(X)\mid S{=}1]$, whose
derivative is carried exactly by $D_{\mathrm I}$; $D_{\mathrm{II}}$ and
$D_{\mathrm{III}}$ have conditional mean zero given $(X,S)$, so they contribute
nothing.
\emph{Membership.} Each of $D_{\mathrm I},D_{\mathrm{II}},D_{\mathrm{III}}$ is a
mean-zero, square-integrable function of $O$ built from the observed blocks, and
each is a finite linear combination of elements of
$\T_S\oplus\T_{X\mid S}\oplus\T_{Z\mid X}\oplus\T_{Y\mid X,Z}\oplus
\mathrm{span}(g_\gamma)$; in particular $D_{\mathrm{III}}$ has components in
\emph{both} $\T_{Z\mid X}$ and $\T_{Y\mid X,Z}$, since the source-side term of
$\Psi$ carries $\rho(Y,Z\mid X)$ and therefore depends on $Y$. Hence
$\varphi\in\T$. Since the model places no restriction on the four nuisance blocks
beyond the tilt link already used above, $\T$ is the whole of that direct sum and
the influence function satisfying $\E[\varphi g]=\partial_\eta\mu_1$ for all $g$
is unique; it is therefore the EIF, and the bound is $\Var\{\varphi\}$.
\end{proof}
\begin{remark}[Non-orthogonality of the three influence terms]
\label{rem:crossterm}
A practical consequence of the proof is that $\Var\{\varphi\}$ is \emph{not} the
sum of the three term variances. The cross term
$\E[D_{\mathrm{II}}D_{\mathrm{III}}]
 =-c\,J^{-1}\,\pi_0^{-1}E_{P_0}\!\big[r_e^2\rho^2(Y-\tilde m_1)(b-B)\big]$
is generically nonzero, since both terms are supported on $S=0$ and both carry the
weight $r_e\rho$. In the Gaussian design of \Cref{sec:sim} its correlation is
about $-0.09$ with the parametric source term, small but not zero; with a
nonparametric plug-in for $B$ it rises in magnitude to about $-0.83$, at which
point term~(III) is no longer a minor correction. An implementer must therefore
form $\varphi$ and take its empirical second moment, not add variances.
\end{remark}
\begin{remark}[Role of stationarity at the benchmark]
\label{rem:invariance-ref}
Because $s$ is orthogonal, under the identified law, to all functions of $(Z,X)$
(projection definition), the residual channel contributes no score in the bridge
direction; equivalently $\partial_\kappa\gamma^\star=0$ at $\kappa=0$ (Web
Appendix~A, \Cref{lem:invariance}). Hence the estimation of $\gamma^\star$ and the
anchoring of $\kappa$ do not interact at first order, and $s$-estimation error does
not enter $\varphi$ at first order \emph{at $\kappa=0$}; at the endpoints it enters
at order $\kappa\,\|\hat s-s\|$, as quantified in \eqref{eq:s-sensitivity} of the
main text.
\end{remark}
\section{Consequences}
\subsection{Neyman orthogonality and rate robustness}
The stacked moment $m(O;\mu_1,\gamma,\eta)=(\Psi(\gamma;O),\,
\varphi(O)+\mu_1)$ has, by construction of $\varphi$ as the tangent-space
projection, zero Gateaux derivative in each nuisance direction $\eta\in\{e,f_0\}$
at the truth. Hence the one-step estimator
$\hat\mu_1=\hat\mu_1^{DR}+\Pn\hat\varphi$ is $\sqrt n$-consistent, asymptotically
normal with variance $\Var\{\varphi\}$, and rate-robust: products of nuisance
estimation errors enter at $o_p(n^{-1/2})$ provided each nuisance converges at
$o_p(n^{-1/4})$. As noted in \Cref{rem:orth-untested}, this guarantee is not
exercised by the simulations reported in \Cref{sec:sim}, all of which use
parametric nuisances.
\subsection{Efficiency of the fractional-imputation MLE}
Under correct specification of $f_0$, the fractional-imputation maximizer of the
$\rho$-tilted complete-data likelihood solves the efficient score equation
$\Pn\varphi=0$ and hence attains the bound. The one-step estimator attains it from
any $\sqrt n$-consistent pilot, including inefficient plug-ins, and strictly
improves an inefficient pilot when the bridge over-identifies $\gamma^\star$
(the extra bridge moments reduce $\Var\{\varphi_\gamma\}$ and thereby the variance
of term III).
\subsection{What efficiency is and is not claimed}
The bound $\Var\{\varphi\}$ is the semiparametric efficiency bound \emph{at fixed
$(\gamma^\star,\kappa)$}. It is not a bound over an unknown $\kappa$: the identified
set has positive width (\thmset), and anchoring $\kappa$ is what makes a
point-functional efficiency statement meaningful at each endpoint. The
Imbens--Manski interval combines the two endpoint EIFs; because $\kappa$ is
anchored, no sensitivity-parameter score enters, and the interval carries no
sensitivity-parameter variance term.
Two finite-sample caveats attach to this asymptotic statement and are documented
in \Cref{sec:inference}: once the tilt weights concentrate, the typical reported
influence-function standard error falls well below the sampling variability, by a
factor that \Cref{sec:sim-ess} traces across a grid of designs, the signature of
a heavy tail in $\rho$ that a first-order expansion does not capture; and the
bootstrap inherits the same tail rather than repairing it. The efficiency claim
is therefore asymptotic and should not be read as a finite-sample variance
guarantee where the weights concentrate.
\clearpage
\renewcommand{\thesection}{C.\arabic{section}}
\setcounter{section}{0}
\section*{Web Appendix C: Robustness Proofs}
\addcontentsline{toc}{section}{Web Appendix C}
\setcounter{lemma}{0}\setcounter{theorem}{0}\setcounter{table}{0}
\setcounter{remark}{0}
\renewcommand{\thelemma}{C.\arabic{lemma}}
\renewcommand{\thetheorem}{C.\arabic{theorem}}
\renewcommand{\theremark}{C.\arabic{remark}}
\setcounter{equation}{0}\renewcommand{\theequation}{C.\arabic{equation}}
\renewcommand{\thetable}{C.\arabic{table}}
We prove the two robustness statements of the main text: the drift is identified
from the bridge margin alone and is therefore insulated from outcome-model
misspecification (\lemgamma), and, given the correct bridge-identified drift, the
target-mean estimator is doubly robust in the cohort-propensity and source-outcome
models (\lemdcdr). Together they give the robustness map of \Cref{tab:robust},
reproduced here as \Cref{tab:robustC}, whose rows are verified numerically in
\Cref{sec:robsim}. We claim no joint triple
robustness in $(\gamma,e,m)$, only a decoupling in which $\gamma$ is protected by
the bridge and $(e,m)$ are traded off conditionally on it.
\begin{table}[h]
\centering
\caption{Robustness of the drift-augmented estimator $\hat\mu_1^{DR}$. A checkmark
denotes correct specification.}
\label{tab:robustC}
\begin{tabular}{cccl}
\toprule
$\hat\gamma$ (bridge) & $e$ (propensity) & $m$ (outcome) & consistent? \\
\midrule
\checkmark & \checkmark & \checkmark & yes (efficient) \\
\checkmark & \checkmark & $\times$   & yes\quad(\lemdcdr) \\
\checkmark & $\times$   & \checkmark & yes\quad(\lemdcdr) \\
\checkmark & $\times$   & $\times$   & no \\
$\times$   & n/a         & n/a         & no (drift wrong) \\
\bottomrule
\end{tabular}
\end{table}
\section{Notation for this appendix}
Recall the drift-augmented estimator, written here in its population form with the
drift fixed at a value $\gamma$ (not necessarily the truth) and $\kappa$ anchored:
\begin{equation}
  \mu^{DR}(\gamma;e,m)
  =E\big[m_1(X;\gamma)\mid S=1\big]
   +E\Big[\tfrac{\mathbf 1\{S=0\}}{\pi_0}\,r_e(X)\,\big(Y-m_0(X)\big)\Big],
  \label{eq:mudr}
\end{equation}
where $m_0(x)$ is the (possibly misspecified) source outcome regression,
$m_1(x;\gamma)=m_0(x)+\Delta(x;\gamma)$ is its drift-tilted target counterpart with
tilt shift $\Delta(x;\gamma)=\Ptilt{\gamma}[Y\mid x]-E_{P_0}[Y\mid x]$, and
$r_e(x)=\{e(x)/(1-e(x))\}\{\pi_0/\pi_1\}$ is the density-ratio weight built from a
(possibly misspecified) propensity $e$. Let $e^\star,m^\star$ denote the truths and
$\gamma^\star$ the bridge-identified drift. The target estimand is
$\mu_1=E[Y\mid S=1]=E_{P_1}[m_1^\star(X;\gamma^\star)]$.
\section{Proof of \texorpdfstring{\lemgamma}{Lemma 2} (drift-identification robustness)}
\begin{lemma}[Drift-identification robustness; \lemgamma]
\label{lem:gammaC}
The bridge-matching estimator $\hat\gamma$ solving
$\Pn[\mathbf 1\{S{=}1\}(b(Z)-\Ptilt{\gamma}[b(Z)\mid X])]=0$ depends on the source
law only through the tilted bridge moment $\Ptilt{\gamma}[b(Z)\mid x]$, an
expectation over the source joint law of $(Y,Z)$ given $X$ in which the outcome
enters solely through the fixed loading $t(\cdot)$. It does not involve the
propensity $e(\cdot)$ at all, and it does not inherit the error of a fitted
outcome mean model $m(\cdot)$: no such model is evaluated in forming the moment.
Hence $\hat\gamma\to\gamma^\star$ whenever the bridge model (co-drift and
relevance) is correct and the source joint law of $(Y,Z)\mid X$ used to form the
tilted moment is correct, whether or not $m$ or $e$ is.
\end{lemma}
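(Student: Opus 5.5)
The plan is to read the statement as two claims. The first is structural: it identifies which components of the observed law enter $\hat\gamma$. The second is a consistency claim, and it follows from the first together with \Cref{thm:point}. For the structural part I would write $\hat\gamma$ as the root of the sample moment
\[
U_n(\gamma)=\Pn\big[\mathbf 1\{S{=}1\}\big(b(Z)-\widehat B(\gamma;X)\big)\big],\qquad
\widehat B(\gamma;x)=\Ptilt{\gamma}[b(Z)\mid x]
\]
evaluated under the fitted source joint law of $(Y,Z)\mid X$. By \eqref{eq:tiltexp}, $\widehat B(\gamma;x)$ is a ratio of two source expectations, of $b(Z)e^{\gamma[t(Y)+b(Z)]}$ and of $e^{\gamma[t(Y)+b(Z)]}$. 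Neither integrand contains $e(\cdot)$. Both contain $Y$ only through the known loading $t$, whose constants are fixed from the source once and for all (\Cref{sec:tilt}). The target part of $U_n$ uses only $(Z,X)$ at $S=1$, so the target bridge margin. No term is a fitted conditional mean $m(\cdot)$. This is a direct inspection of \eqref{eq:stage1} and \eqref{eq:fi}. I would state it as the functional dependence: $\hat\gamma=T(\hat P_1^{Z,X},\hat P_0^{Y,Z\mid X})$, with no argument for $e$ or $m$.

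For consistency, I would take the population counterpart $U(\gamma)=E[\mathbf 1\{S{=}1\}(b(Z)-B(\gamma;X))]$ under the correct source joint law. By \Cref{lem:targetistilt}, $U(\gamma^\star)=0$. By \Cref{lem:mono}, $U'(\gamma)=-\pi_1E_{P_1}[\Cov^{\gamma}(b,t+b\mid X)]\le-\pi_1\varepsilon$ on $\Gamma$, so $\gamma^\star$ is the unique, well-separated zero. Next I would show $\sup_{\gamma\in\Gamma}|U_n(\gamma)-U(\gamma)|\to0$ in probability, and split this into two pieces. The first is a uniform law of large numbers for the target average, which holds because $\gamma\mapsto B(\gamma;x)$ is Lipschitz on the compact set $\Gamma$ with an integrable envelope from \Cref{ass:overlap}. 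The second is $\sup_{\gamma,x}|\widehat B-B|\to0$, obtained from consistency of the fitted joint law together with the same dominated-convergence bounds used in \Cref{lem:tiltderiv}. Monte Carlo error from \eqref{eq:fi} vanishes as $M\to\infty$, and it is exact under the tilted proposal. The standard argmin/Z-estimator consistency theorem for monotone estimating equations then gives $\hat\gamma\to\gamma^\star$. It does so without any condition on $e$ or $m$, because neither appears in $U_n$.

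I expect the main obstacle to be the phrase ``whether or not $m$ is correct''. The working joint law of $(Y,Z)\mid X$ is usually parametrized around a conditional mean for $Y$, so the hypothesis ``the joint law is correct'' already contains the mean, as \Cref{rem:stage1-scope} concedes. I would handle this by stating precisely what is insulated. $U_n$ takes $t(Y)$ at its observed source values, so I would rewrite $\widehat B$ through the source empirical joint law, or through a model fitted only for $Z\mid Y,X$. Misspecifying a separately fitted regression $m$, the one used in Stage 2 and \eqref{eq:mudr}, then leaves $\hat\gamma$ literally unchanged. Any residual effect enters only through the proposal or the dependence structure, and that channel is exactly the one excluded by the hypothesis on the joint law.
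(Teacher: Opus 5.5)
Your proposal is correct and follows essentially the same route as the paper's proof. The paper inspects the Stage-1 moment to show that it involves only the target bridge margin and the tilted source moment, with $Y$ entering through the fixed loading $t$ and neither $e$ nor a fitted mean $m$ appearing; it notes that the population moment is therefore unchanged given the correct source joint law, takes uniqueness of the root from \Cref{thm:point}, and concludes by standard $Z$-estimation, which you spell out in more detail, and its caveat about a fitted joint law built around a mean matches your last paragraph. One small repair: $\sup_{\gamma,x}|\widehat B-B|\to0$ fails for unbounded covariates with linear working means, since the error contains $x^\top(\hat\alpha-\alpha)$, so require instead that the target-averaged discrepancy vanish uniformly in $\gamma\in\Gamma$, which is all the $Z$-estimation step uses.
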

\begin{proof}
The estimating function $b(Z)-\Ptilt{\gamma}[b(Z)\mid X]$ is a functional of
$(Z,X)$ and of the source joint law of $(Y,Z)\mid X$ only through
$\Ptilt{\gamma}[b(Z)\mid x]
   =E_{P_0}[b(Z)e^{\gamma[t(Y)+b(Z)]}\mid x]/E_{P_0}[e^{\gamma[t(Y)+b(Z)]}\mid x]$.
This moment involves the outcome only through the fixed loading $t(Y)$ inside the
tilt weight, not through any estimated outcome regression $m$; and it does not
involve $e$ at all. Consequently the population moment condition is unchanged
under misspecification of $m$ or $e$, \emph{given} the source joint law of
$(Y,Z)\mid X$ over which the tilted moment is taken. In an implementation that
estimates that joint law, a misspecified mean model changes the fitted residual
covariance and so enters the moment by that route; \Cref{sec:robsim} measures
the size of the effect. By \thmpoint\ its unique root is
$\gamma^\star$, and standard $Z$-estimation gives
$\hat\gamma\xrightarrow{p}\gamma^\star$ under the stated bridge conditions.
\end{proof}
\begin{remark}[Scope of the protection]
\label{rem:scopeC}
The tilt weight contains $t(Y)$, so the outcome does enter, but
only as a \emph{fixed known loading}, evaluated at source units where $Y$ is
recorded; no outcome \emph{mean model} is fit to compute the bridge moment. What is
\emph{not} protected is the modelled dependence between $Y$ and $Z$ given $X$: it
enters the tilted moment and therefore propagates to $\hat\gamma$
(\Cref{rem:stage1-scope}). In the Gaussian instance this is the residual
correlation $r$. \Cref{sec:sim-offmodel} generates targets in which that
dependence differs across cohorts and reports what the identified set does.
\end{remark}
\section{Proof of \texorpdfstring{\lemdcdr}{Lemma 1} (drift-conditional double robustness)}
\begin{lemma}[Drift-conditional double robustness; \lemdcdr]
\label{lem:dcdrC}
Fix the drift at its true value $\gamma^\star$ (guaranteed by \Cref{lem:gammaC}
when the bridge model is correct). Then $\mu^{DR}(\gamma^\star;e,m)=\mu_1$ if
\emph{either} $e=e^\star$ \emph{or} $m=m^\star$, not necessarily both.
\end{lemma}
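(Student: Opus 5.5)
I will prove the lemma by computing the bias $\mu^{DR}(\gamma^\star;e,m)-\mu_1$ directly from \eqref{eq:mudr} and showing that it factors as an integral of the product of a propensity error and an outcome-regression error. Throughout, the drift is held at $\gamma^\star$ and $\kappa$ is anchored. I use the target law $E_{P_1}$, the covariate density-ratio identity $dP_1(x)/dP_0(x)=r_{e^\star}(x)$, and the tilt link $E_{P_1}[Y\mid x]=\Ptilt{\gamma^\star}[Y\mid x]=m_1^\star(x;\gamma^\star)$ from \Cref{lem:targetistilt}.

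The augmentation in \eqref{eq:mudr} is written with $(Y-m_0(X))$. I will read it consistently with \eqref{eq:aipw}, that is, with the normalized conditional tilt ratio $\rho$ and the residual $Y-m_1(X;\gamma^\star)$. As the main text stresses, without the normalized $\rho$ the cancellation fails. So the first step is to rewrite the augmentation term as a source-side expectation and convert it to the target law by iterated expectations:
\[
E\Big[\tfrac{\mathbf 1\{S=0\}}{\pi_0}r_e(X)\rho(Y,Z\mid X)\{Y-m_1(X)\}\Big]
=E_{P_0}\big[r_e(X)\{E_{P_1}[Y\mid X]-m_1(X)\}\big].
\]
This holds because $E_{P_0}[\rho\,h\mid x]=E_{P_1}[h\mid x]$ for the normalized ratio. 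Since $m_1^\star$ is the target regression, the bracket equals $m_1^\star(X)-m_1(X)$.

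Adding the imputation term $E_{P_1}[m_1(X)]$ and subtracting $\mu_1=E_{P_1}[m_1^\star(X)]=E_{P_0}[r_{e^\star}(X)m_1^\star(X)]$ then gives
\[
\mu^{DR}-\mu_1=E_{P_0}\big[\{r_e(X)-r_{e^\star}(X)\}\{m_1^\star(X)-m_1(X)\}\big].
\]
If $e=e^\star$, the first factor vanishes. If $m=m^\star$, then $m_1=m_0+\Delta(\cdot;\gamma^\star)$ equals $m_1^\star$, so the second factor vanishes. Either condition gives zero bias, which proves the lemma. Consistency of $\hat\mu^{DR}_1$ then follows from the law of large numbers, continuity in $\hat\gamma$, and \Cref{lem:gammaC}.

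The main obstacle is the outcome branch. I have to make precise what ``$m$ correct'' means for $m_1(\cdot;\gamma^\star)$: the tilt shift $\Delta$ depends on the source joint law of $(Y,Z)\mid X$, not on the mean alone. I would define correctness of the outcome model as correctness of the working law used to form $\Ptilt{\gamma^\star}[Y\mid x]$, in line with \Cref{rem:scopeC}. I would also verify the integrability needed for the change of measure from \Cref{ass:overlap}, so that the $\rho$-weighted residual has finite mean when $e$ is misspecified.
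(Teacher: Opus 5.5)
Your argument is correct and ends at the residual the paper leaves when both models are wrong, $E_{P_0}[(r_e-r_{e^\star})(m_0^\star-m_0)]$. It gets there through a different-looking functional, though.

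The paper works with \eqref{eq:mudr} as written, whose augmentation is the untilted residual $Y-m_0(X)$ with no $\rho$. There the drift enters only through the offset $\Delta(x;\gamma^\star)$ shared by $m_1$ and $m_1^\star$, so $m_1-m_1^\star=m_0-m_0^\star$. The proof is then the ordinary covariate-shift AIPW cancellation, and the $e=e^\star$ branch needs nothing beyond $r_{e^\star}=dP_1(x)/dP_0(x)$.

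You instead substitute the $\rho$-weighted augmentation of \eqref{eq:aipw} and collapse it with $E_{P_0}[\rho h\mid x]=E_{P_1}[h\mid x]$. To make this a proof of the stated lemma rather than of a neighbouring one, add the line showing the two population functionals agree. With $\rho$ the true normalized ratio and $\Delta$ the true shift,
\[
E_{P_0}[\rho\,(Y-m_1)\mid x]=m_1^\star(x)-m_1(x)=m_0^\star(x)-m_0(x)=E_{P_0}[Y-m_0\mid x].
\]

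Your route has a real advantage: it is a statement about the estimator actually computed. It is also the form in which the paper's remark that an unnormalized tilt breaks the $e=e^\star$ cancellation has force; in the $\rho$-free form of \eqref{eq:mudr} that remark has nothing to act on.

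The cost is a stronger requirement in the $e=e^\star$ branch: $\rho$ must equal $dP_1(y,z\mid x)/dP_0(y,z\mid x)$ exactly, normalizer included, even when the outcome model is wrong. That conflicts with the fix you propose for your ``main obstacle.'' Suppose outcome-model correctness means correctness of the working joint law, and $\rho$ is normalized under that same law by some $\tilde C(x)$. Then $E_{P_0}[\rho\mid x]=C(\gamma^\star,\kappa;x)/\tilde C(x)\neq 1$, and the $e=e^\star$ branch no longer cancels.

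The paper's convention avoids this. Only the mean $m_0$ is misspecified, while $\Delta$ remains a fixed functional of the true source law. Then $m=m^\star$ gives $m_1=m_1^\star$ at once, and the obstacle you anticipate never arises. If you adopt the same convention for $\rho$, holding its normalizer and $s$ at their true-law values, your proof goes through as written.
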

\begin{proof}
Write the estimation error of \eqref{eq:mudr} as
\[
  \mu^{DR}(\gamma^\star;e,m)-\mu_1
  =\underbrace{E[m_1(X;\gamma^\star)\mid S{=}1]-E[m_1^\star(X;\gamma^\star)\mid S{=}1]}_{(\mathrm A)}
  +\underbrace{E\Big[\tfrac{\mathbf 1\{S=0\}}{\pi_0}r_e(X)\big(Y-m_0(X)\big)\Big]}_{(\mathrm B)}.
\]
Since the tilt shift $\Delta(x;\gamma^\star)$ is a fixed functional of the source
law at the true $\gamma^\star$ and does not depend on the fitted $m$ beyond its
$m_0$ baseline, $m_1(x;\gamma^\star)-m_1^\star(x;\gamma^\star)=m_0(x)-m_0^\star(x)$.
Thus $(\mathrm A)=E_{P_1}[m_0(X)-m_0^\star(X)]$. For $(\mathrm B)$, take the
source-conditional expectation of $Y$ given $(X,S{=}0)$, namely $m_0^\star(X)$:
\[
  (\mathrm B)
  =E_{P_0}\!\big[r_e(X)\,(m_0^\star(X)-m_0(X))\big].
\]
At the \emph{true} propensity, $E_{P_0}[r_{e^\star}(X)g(X)]=E_{P_1}[g(X)]$ for any
integrable $g$, since $r_{e^\star}=dP_1(x)/dP_0(x)$. Consider the two cases.
\emph{Case $m=m^\star$.} Then $m_0=m_0^\star$, so $(\mathrm A)=0$ and the integrand
of $(\mathrm B)$ is zero; the error is $0$, for \emph{any} $e$.
\emph{Case $e=e^\star$.} Then $r_e=r_{e^\star}$ and
$(\mathrm B)=E_{P_1}[m_0^\star(X)-m_0(X)]=-(\mathrm A)$; the two terms cancel and
the error is $0$, for \emph{any} $m$.
In either case $\mu^{DR}(\gamma^\star;e,m)=\mu_1$. When both are wrong, neither
cancellation occurs and the residual bias is
$E_{P_1}[m_0-m_0^\star]+E_{P_0}[r_e(m_0^\star-m_0)]\neq0$ in general.
\end{proof}
\begin{remark}[Role of the drift in the argument]
The drift enters only through the common shift $\Delta(x;\gamma^\star)$, which is
identical in $m_1$ and $m_1^\star$ once $\gamma^\star$ is correct; it therefore
cancels out of both $(\mathrm A)$ and the residual definition and plays no role in
the double-robust cancellation. This is precisely why the robustness is
\emph{conditional} on $\gamma^\star$: the tilt is a fixed offset, and the
$(e,m)$ trade-off is the ordinary AIPW one applied to the untilted baseline. If
$\gamma^\star$ were wrong, the offset would be miscalibrated and neither case would
restore consistency, hence no triple robustness, only the decoupled structure of
\Cref{tab:robustC}.
\end{remark}
\section{Numerical verification of the robustness map}
\label{sec:robsim}
We verify \Cref{tab:robustC} in the Gaussian instantiation, holding $\gamma$ at
its bridge-identified value throughout and misspecifying $m$ (fit on a strict
subset of covariates) and/or $e$ (a covariate-free constant propensity), with
true drift $\gamma^\star=0.5$ and $R=1000$ replicates.

\begin{table}[h]
\centering
\caption{Monte Carlo bias under (mis)specification. The drift column is the bias
of $\hat\gamma$, the mean column the bias of $\hat\mu_1^{DR}$ with $\gamma$
correctly bridge-identified. Monte Carlo standard errors in parentheses.}
\label{tab:robsim}
\begin{tabular}{ccrrl}
\toprule
$e$ model & $m$ model & bias of $\hat\gamma$ & bias of $\hat\mu_1^{DR}$ & consistent? \\
\midrule
correct & correct        & $-0.0004\ (0.0005)$ & $-0.002\ (0.002)$ & yes \\
correct & \textbf{wrong} & $+0.0024\ (0.0005)$ & $-0.002\ (0.002)$ & yes\ (\lemdcdr) \\
\textbf{wrong} & correct & $-0.0004\ (0.0005)$ & $-0.001\ (0.002)$ & yes\ (\lemdcdr) \\
\textbf{wrong} & \textbf{wrong} & $+0.0024\ (0.0005)$ & $-0.241\ (0.002)$ & no \\
\bottomrule
\end{tabular}
\end{table}

The mean column reproduces \Cref{tab:robustC}: the three doubly-protected cells
show no bias detectable at the precision of the experiment, and only the
both-wrong cell is biased, by the uncorrected covariate-shift discrepancy that
the proof leaves as a residual when neither cancellation applies.

The drift column is the more informative one, and it qualifies
\Cref{lem:gammaC} rather than confirming it outright. Misspecifying the outcome
mean model moves $\hat\gamma$ by $0.0024$, which is five Monte Carlo standard
errors from zero and so not sampling noise, though it is half a percent of
$\gamma^\star$. The route is the one \Cref{rem:scopeC} identifies: the tilted
bridge moment is taken over a fitted source joint law, and a misspecified mean
model changes the fitted residual covariance that law carries. The protection
the lemma describes is therefore protection from the error of an outcome
\emph{mean} model entering the moment directly, not protection from every
consequence of fitting one. The main text states it in those terms.

\paragraph{Exact (grid) confirmation.}
Monte Carlo leaves the protected cells zero only up to sampling error. To remove
that ambiguity we also evaluate the \emph{population} probability limit
\eqref{eq:mudr} exactly, on a fine $(y,z)$ grid at which the tilt normalization and
the conditional density ratio $\rho(y,z\mid x)=dP_1(y,z\mid x)/dP_0(y,z\mid x)$ are
computed without sampling. Using the \emph{normalized} conditional ratio, which is the
point at which a naive unnormalized tilt factor would fail, the probability-limit
bias is, to machine precision, zero in the three protected cells and nonzero when
both models are wrong, matching \Cref{tab:robustC} row for row. On that grid the
bridge-matching root returns $\gamma^\star$ exactly, unchanged when $m$ is
misspecified, which is \Cref{lem:gammaC} as stated: the moment is taken over the
true source joint law, so no mean model enters it. The displacement of $0.0024$
in the table above is the finite-sample counterpart of fitting that joint law
rather than knowing it, and the grid computation isolates the two.
One point matters for implementation: the change of measure must use the conditional density ratio $\rho(y,z\mid x)$ (equivalently, a tilt
factor divided by its conditional normalizer), not the raw exponential tilt, or the
$e=e^\star$ cancellation in the proof of \Cref{lem:dcdrC} will not hold in finite
computation.

\end{document}